\newtheorem{theorem}{Theorem}[section]
\newtheorem{corollary}[theorem]{Corollary}
\newtheorem{lemma}[theorem]{Lemma}
\newcommand{\qedsymb}{\hfill{\rule{2mm}{2mm}}}
\newcommand{\ra}[1]{\renewcommand{\arraystretch}{#1}}
\newenvironment{proof}{\begin{trivlist}
\item[\hspace{\labelsep}{\bf\noindent Proof: }]
}{\qedsymb\end{trivlist}}
\def\sp {{\bm\sigma}}
\def\u {{\sf U}}
\title{Topological Influence and Locality in  Swap Schelling Games}
\author{Davide Bilò\thanks{University of Sassari, Sassari, Italy, \texttt{davidebilo@uniss.it}} \and Vittorio Bilò\thanks{University of Salento, Lecce, Italy, \texttt{vittorio.bilo@unisalento.it}} \and Pascal Lenzner\thanks{Hasso Plattner Institute, University of Potsdam, Germany, \texttt{firstname.lastname@hpi.de}} \and Louise Molitor\footnotemark[3]}
\date{}
\begin{document}
	\maketitle             
	
	\begin{abstract}
		\noindent  Residential segregation is a wide-spread phenomenon that can be observed in almost every major city. In these urban areas residents with different racial or socioeconomic background tend to form homogeneous clusters. Schelling's famous agent-based model for residential segregation explains how such clusters can form even if all agents are tolerant, i.e., if they agree to live in mixed neighborhoods. For segregation to occur, all it needs is a slight bias towards agents preferring similar neighbors. 
		Very recently, Schelling's model has been investigated from a game-theoretic point of view with selfish agents that strategically select their residential location. In these games, agents can improve on their current location by performing a location swap with another agent who is willing to swap. 

		We significantly deepen these investigations by studying the influence of the underlying topology modeling the residential area on the existence of equilibria, the Price of Anarchy and on the dynamic properties of the resulting strategic multi-agent system. Moreover, as a new conceptual contribution, we also consider the influence of locality, i.e., if the location swaps are restricted to swaps of neighboring agents. We give improved almost tight bounds on the Price of Anarchy for arbitrary underlying graphs and we present (almost) tight bounds for regular graphs, paths and cycles. Moreover, we give almost tight bounds for grids, which are commonly used in empirical studies. For grids we also show that locality has a severe impact on the game dynamics.
	\end{abstract}

	\section{Introduction}
	Today's metropolitan areas are populated by a diverse set of residential groups which differ along ethnical, socioeconomic and other traits. A common finding is that cityscapes are not well-mixed, i.e., the different groups of agents tend to separate themselves into largely homogeneous neighborhoods\footnote{For example, see \url{https://demographics.virginia.edu/DotMap/}.}. This phenomenon is well-known as \emph{residential segregation} and is a subject of study in sociology, mathematics and computer science for at least five decades. The most important scientific model addressing residential segregation was proposed by Schelling~\cite{Sch69,Schelling71} who simply considered two types of residential agents who are located on~a line or on a checkerboard. Each agent is aware of the agents in her neighborhood and is content with her location, if and only if the fraction of neighbors being of her own type is above the tolerance parameter $\tau$, for some $0< \tau \leq 1$. Discontent agents simply move to another location. Using this basic model Schelling showed that starting from an initially mixed state over time segregated neighborhoods will emerge. While this is to be expected for high~$\tau$, Schelling's finding was that this also happens for tolerant agents, i.e., if $\tau \leq \tfrac{1}{2}$. Thus, only a slight bias towards favoring similar neighbors leads to the emergence of segregation. 
	
	Schelling proposed his model as a random process. This has led to an abundance of empirical studies that simulated this process, see, e.g.,~\cite{fossett1998simseg,carver2018} and the references to chapter~$4$ in~\cite{easly}. In these studies, the commonly used underlying topology for modeling the residential area are grid graphs (often toroidal grids where vertices of borders on opposite sides are identified), paths and cycles.
	A recent line of work~\cite{You98,Zha04,Zha04b,Gerhold08,BIK12,BEL14,Bhakta14,BEL16,BIK17,omidvar2018self} rigorously analyzed variants of this random process on paths or grid graphs and it was shown that residential segregation occurs with high probability. However, in reality agents would not move randomly, instead they would move to a location that maximizes their utility.
	
	To address this selfish behavior, a very recent line of work~\cite{CLM18,elkind19,E+19,A+19} initiated the study of residential segregation from a game-theoretic point of view. The residential area is modeled as a multi-agent system consisting of selfish agents who occupy vertices of an underlying graph and try to maximize their utility, which depends on the agents' types in their immediate neighborhood, by strategically selecting locations. Also strategic segregation in social network formation was considered~\cite{aits2019}.  
	
	This paper sets out to significantly improve and deepen the results on game-theoretic residential segregation for the model investigated in~\cite{A+19} which allows pairs of discontent agents of different type to swap their locations to maximize their utility. This variant of Schelling's model becomes more and more realistic as in many cities the percentage of vacant housing is below 1\%. In such settings, location swaps become the only way for agents to improve on their current housing situation. For the model in~\cite{A+19} we consider the influence of the given topology that models the residential area on core game-theoretic questions like the existence of equilibria, the Price of Anarchy and the game dynamics. We thereby focus on popularly studied topologies like grids, paths and cycles.
	Moreover, we follow-up on a proposal by Schelling~\cite{Schelling71} to restrict the movement of agents locally and we investigate the influence of this restriction. Such local swaps are realistic since people want to stay close to their working place or important facilities like schools. This also holds when considering dynamics where agents repeatedly perform local moves since these dynamics can be understood as a process which happens over a long timespan and agents adapt to their new neighborhoods over time.  
	
	\subsection{Model, Definitions and Notation}
	
	We consider a strategic game played on a given underlying connected, unweighted and undirected graph $G=(V,E)$, with $V$ the set of vertices and $E$ the set of edges. We denote the cardinalities of $V$ and $E$ with $n$ and $m$, respectively.
	
	For any vertex $v \in V$ we denote the neighborhood of $v$ in $G$ as $N_v=\{u\in V:\{v,u\}\in E\}$ and $\delta_v=|N_v|$ denotes the degree of~$v$ in $G$. Let $\Delta(G)=\max_{v\in V}\delta_v$ and $\delta(G)=\min_{v\in V}\delta_v$ be the maximum and minimum degree of vertices in $G$, respectively. We call a graph $G$ {\em $\alpha$-almost regular} if $\Delta(G)-\delta(G)=\alpha$ and we call $\alpha$-almost regular graphs \emph{regular} if $\alpha=0$ and \emph{almost regular} when $\alpha=1$. Grid graphs will play a prominent role. We will consider \emph{grid graphs with $4$-neighbors (4-grids)} which are formed by a two-dimensional lattice with $l$ rows and $h$ columns and every vertex is connected to the vertex on its left, top, right and bottom, respectively, if they exist. In \emph{grid graphs with $8$-neighbors (8-grids)}, vertices are additionally also connected to their top-left, top-right, bottem-left and bottom-right vertices, respectively, if they exist.    
	
	For a positive integer $k$, let $[k]$ denote the set $\{1,\ldots,k\}$, moreover, given a graph $G=(V,E)$, let ${\mathcal{T}}_k(G)$ denote the set of $k$-tuples of positive integers summing up to $n=|V|$.
	
	A \emph{Swap Schelling Game with $k$ types ($k$-SSG)} $(G,{\mathbf t})$ is defined by a graph $G = (V,E)$ and a $k$-tuple $\mathbf t = (t_1,\dots,t_k)\in{\mathcal{T}}_k(G)$. There are $n$ strategic agents that need to choose vertices in~$V$ in such a way that every vertex is occupied by exactly one agent. Every agent belongs to exactly one of the $k$ types and there are $t_i$ agents of type $i$, for every $i\in [k]$.
	When $|t_i|=|t_j|$ for each $i,j\in [k]$, we say that the game is {\em balanced}. For convenience and in all of our illustrations, we associate each agent type $i\in [k]$ with a color. When $k=2$, we use colors blue and orange and denote by $b$ and $o=n-b$ the number of blue and orange agents, respectively. Additionally, in case of a game with $k=2$, we will assume that $o \leq b$, i.e., orange is the color of the minority type.
	For any graph $G$ and any $k$-dimensional type vector $\mathbf{t}\in{\mathcal{T}}_k(G)$, let $c:[n] \to [k]$ denote the function which maps any agent $i\in [n]$ to her color $c(i) \in [k]$.
	
	The strategy of an agent is her location on the graph, i.e., a vertex of $G$. A \emph{feasible strategy profile $\sp$} is an $n$-dimensional vector whose $i$-th entry corresponds to the strategy of the $i$-th agent and where all strategies are pairwise disjoint, i.e., $\sp$ is a permutation of $V$, and we will treat $\sp$ as a bijective function mapping agents to vertices, with $\sp^{-1}$ being its inverse function. Thus, any feasible strategy profile $\sp$ corresponds to a coloring of $G$ such that for each $i\in [k]$ exactly~$t_i$ vertices of $G$ are colored with the $i$-th color. We say that agent $i$ \emph{occupies vertex $v$ in~$\sp$} if the $i$-th entry of~$\sp$, denoted as $\sp(i)$, is $v$ and, equivalently, if $\sp^{-1}(v) = i$.
	It will become important to distinguish if two agents $i,j$ occupy neighboring vertices under $\sp$. For this, we will use the notation $1_{ij}(\sp)$ with $1_{ij}(\sp) = 1$ if agents $i$ and $j$ occupy neighboring vertices under~$\sp$ and $1_{ij}(\sp) = 0$ otherwise.   
	
	For an agent $i$ and any feasible strategy profile $\sp$, we denote by $C_i(\sp)=\{v\in V:c(\sp^{-1}(v))=c(i)\}$ the set of vertices of $G$ which are occupied by agents having the same color as agent $i$.
	The utility of agent $i$ in $\sp$ is defined as $\u_i(\sp)=\frac{|N_{\sp(i)}\cap C_i(\sp)|}{\delta_{\sp(i)}},$ i.e., as the ratio of the number of agents with the same type which occupy neighboring vertices and the total number of neighboring vertices, and each agent aims at maximizing her utility. 
	
	Agents can change their strategies only by swapping vertex occupation with another agent. Consider two strategic agents $i$ and $j$ which occupy vertices $\sp(i)$ and $\sp(j)$, respectively. After performing a \emph{swap} both agents exchange their occupied vertex which yields a new feasible strategy profile $\sp_{ij}$, which is identical to $\sp$ except that the $i$-th and the $j$-th entries are exchanged. Thus, in the induced coloring of $G$, the coloring corresponding to $\sp_{ij}$ is identical to the coloring corresponding to $\sp$ except that the colors of vertices $\sp(i)$ and $\sp(j)$ are exchanged. We say that a swap is \emph{local} if the swapping agents occupy neighboring vertices, i.e., if $1_{ij}(\sp)=1$. 
	
	As agents are strategic and want to maximize their utility, we will only consider {\em profitable swaps} by agents, i.e., swaps which strictly increase the utility of both agents involved in the swap. It follows that profitable swaps can only occur between agents of different colors.
	We call a feasible strategy profile $\sp$ a \emph{swap equilibrium}, or simply, equilibrium, if $\sp$ does not admit profitable swaps, that is, if for each pair of agents $i,j$, we have $\u_i(\sp)\geq\u_i(\sp_{ij})$ or $\u_j(\sp)\geq\u_j(\sp_{ij})$. We call $\sp$ a \emph{local swap equilibrium}, or simply local equilibrium, if no profitable local swap exists under $\sp$.
	If agents are restricted to performing only local swaps, then we call the corresponding strategic game \emph{Local Swap Schelling Game with $k$ types (local $k$-SSG)}. Clearly, any swap equilibrium $\sp$ is also a local swap equilibrium but the converse is not true. Thus the set of local swap equilibria is a superset of the set of swap equilibria.  
	
	We measure the quality of a feasible strategy profile $\sp$ by its \emph{social welfare} $\u(\sp)$, which is the sum over the utilities of all agents, i.e., $\u(\sp) = \sum_{i=1}^{n}\u_i(\sp)$. For any game $(G,{\mathbf t})$, let~$\sp^*(G,\mathbf{t})$ denote a feasible strategy profile which maximizes the social welfare and let~$SE(G,\mathbf{t})$ and $LSE(G,\mathbf{t})$ denote the set of swap equilibria and local swap equilibria for $(G,{\mathbf t})$, respectively.  
	We will study the impact of the agents' selfishness on the obtained social welfare for games played on a given class of underlying graphs $\mathcal{G}$ with $k$ agent types by analyzing the \emph{Price of Anarchy (PoA)}~\cite{KP99}, which is defined as $ PoA(\mathcal{G},k) = \max_{G \in \mathcal{G}} \max_{\mathbf{t} \in \mathcal{T}_{k}(G)}\frac{\u(\sp^*(G,\mathbf{t}))}{\min_{\sp \in SE(G,\mathbf{t})}\u(\sp)}.$
	Analogously, we define the \emph{Local Price of Anarchy (LPoA)} as the same ratio but with respect to local swap equilibria. It follows that, for any $k\geq 2$ and class of graphs $\mathcal{G}$, we have $PoA(\mathcal{G},k) \leq LPoA(\mathcal{G},k)$.
	
	We will also investigate the dynamic properties of the (local) $k$-SSG, i.e., we analyze if the game has the \emph{finite improvement property (FIP)} \cite{MS96}. In our model, a game possesses the FIP if every sequence of profitable (local) swaps is finite. This is equivalent to the existence of an ordinal potential function which guarantees that sequences of profitable (local) swaps will converge to a (local) swap equilibrium of the game. The FIP can be disproved by showing the existence of an \emph{improving response cycle (IRC)}, which is a sequence of feasible strategy profiles $\sp^0, \sp^1,\ldots, \sp^\ell$, with $\sp^\ell = \sp^0$, where $\sp^{q+1}$ is obtained by a profitable swap by two agents in $\sp^q$, for $q\in [\ell-1]$. For investigating the FIP, the following function $\Phi$ mapping feasible strategy profiles to natural numbers will be important: $\Phi(\sp) = \left|\left\{\{u,v\}\in E \mid c(\sp^{-1}(u)) = c(\sp^{-1}(v))\right\}\right|.$ Hence, $\Phi(\sp)$ is the number of edges of $G$ whose endpoints are occupied by agents of the same color under the feasible strategy profile $\sp$. We will denote such edges as \emph{monochromatic edges} and $\Phi(\sp)$ as the \emph{potential of $\sp$}. We will see that potential-preserving profitable swaps exist. For analyzing such swaps, we will consider the \emph{extendend potential} $\Psi(\sp)$ which essentially is $\Phi(\sp)$ augmented with a tie-breaker. It is defined as $\Psi(\sp) = (\Phi(\sp),n-z(\sp)),$ where $z(\sp)$ is the number of agents having utility $0$ under $\sp$. We compare $\Psi$ for different strategy profiles $\sp$ and $\sp'$ lexicographically, i.e., on the one hand we have $\Psi(\sp) > \Psi(\sp')$ if $\Phi(\sp) > \Phi(\sp')$ or $\Phi(\sp) = \Phi(\sp')$ and $z(\sp) < z(\sp')$. On the other hand we have $\Psi(\sp) < \Psi(\sp')$ if $\Phi(\sp) < \Phi(\sp')$ or $\Phi(\sp) = \Phi(\sp')$ and $z(\sp) >z(\sp')$. Note that any profitable swap which increases (decreases) the potential $\Phi$ also increases (decreases) the extended potential $\Psi$.
	
	\subsection{Related Work} 
	We focus on related work on game-theoretic segregation models.
	
	Zhang~\cite{Zha04,Zha04b} was the first who introduced a game-theoretic model related to Schelling's original model. There, agents having a noisy single peaked utility function and preferring to be in a balanced neighborhood were employed. Later, Chauhan et al.~\cite{CLM18} introduced a game-theoretic model which is much closer to Schelling's formulation. In their model there are two types of agents and the utility of an agent depends on the type ratio in her neighborhood. An agent is content if the fraction of own-type neighbors is above $\tau \in (0,1]$. Additionally, agents may have a preferred location. To improve their utility, agents can either swap with another agent who is willing to swap (Swap Schelling Game) or jump to an unoccupied vertex (Jump Schelling Game). Their main contribution is an investigation of the convergence properties of many variants of the model. Moreover they provide basic properties of stable placements and their efficiency. Echzell et al.~\cite{E+19} strengthen these results but omitted location preferences. Instead they extended the model to more than two agent types and studied the computational hardness of finding optimal placements.
	Elkind et al.~\cite{elkind19} investigated a similar model with~$k$ types where agents are either strategic or stubborn. Only strategic agents are willing to move and strive for maximizing the fraction of own-type neighbors by jumping to a suitable unoccupied location. This corresponds to the jump version of Chauhan et al.~\cite{CLM18} with $\tau=1$. They show that equilibria are not guaranteed to exist, they analyze the complexity of finding optimal placements and they prove that the PoA can be unbounded. Very recently, Agarwal et al.~\cite{A+19} considered swap games in the model of Elkind et al.~\cite{elkind19}. They show that on underlying trees equilibria may not exist and that deciding equilibrium existence and the existence of a state with at least a given social welfare is NP-hard. They also establish that the PoA is in $\Theta(n)$ on underlying star graphs if there are at least two agents of each type and between 2.0558 and 4 for balanced games on any graph. Moreover, for $k\geq 3$ the PoA can be unbounded even in balanced games. Additionally, they give a constant lower bound on the Price of Stability and show that it equals 1 on regular graphs. Finally, they introduce a new benchmark for measuring diversity by counting the number of agents having at least one neighbor of different type. In the present paper, we focus on this very recent model by Agarwal et al.~\cite{A+19} and extend and improve their PoA results.
	
	Hedonic games~\cite{DG80,BJ02} are related to Schelling games. In particular, Schelling games share a number of properties with fractional hedonic games~\cite{bilo2018,monaco2018,aziz2019,carosi2019,monaco2019}, hedonic diversity games~\cite{bredereck2019} and FEN-hedonic games~\cite{igarashi2019,fichtenberger2019,kerkmann2019}. However, one of the main differences is that in Schelling games the neighborhoods of coalitions overlap while in hedonic games agents form disjoint coalitions with identical neighborhoods for all agents within the same coalition.
	
	Investigating a local variant of Schelling's model, although proposed by Schelling~\cite{Schelling71} himself, seems to be a novel approach. To the best of our knowledge, local moves have only been addressed briefly by Vinkovi\'c and Kirnan~\cite{Vin06} in a model which can be understood as a continuous physical analogue of Schelling's model. 
	
	\subsection{Our Contribution}
	
	We follow the model of Agarwal et al.~\cite{A+19},  that is, we consider Swap Schelling Games and investigate, on the one hand, the existence of equilibria and the game dynamics and, on the other hand, the quality of the equilibria in terms of the PoA. The novel feature of our analysis is our focus on the influence of the underlying graph and that we also investigate the impact of restricting the agents to performing only local swaps.
	See Table~\ref{tbl:previous_results} for a result overview.
	
	\afterpage{
		\begin{landscape}
			\begin{table*}[h]
				\centering	
				\scriptsize
				\ra{1.3}
				\begin{tabular}{@{}lcc c crlrlcc }
					\toprule
					& \multicolumn{2}{c}{Equilibrium Existence} & \multicolumn{2}{c}{Finite Improvement Property} & & \multicolumn{2}{c}{Price of Anarchy} & & \\
					\cmidrule(l{2em}r{1em}){2-3} \cmidrule(l{0.5em}r{0.5em}){4-5} \cmidrule(l{0.7em}r{0.5em}){6-10} 
					\hspace*{-2em} & $k$-SSG & local $k$-SSG & $k$-SSG & local $k$-SSG & \multicolumn{2}{c}{$2$-SSG} & \multicolumn{2}{c}{local $2$-SSG} \\
					graph classes & & & & & & $o = 2\alpha + \beta$ & & $n = 3\alpha + \beta$ \\
					\midrule
					\textbf{arbitrary} & $\times$ (\cite{A+19}) & & $\times$ (\cite{A+19}) & & $\infty$ (\cite{A+19}) & $o = 1$ & $\left(2n+\frac{8}{n} - 8, 2n - \frac{8}{n} \right)$ &  $o = \frac{n}{2}$ \\
					& & & & & & & (Thm.~\ref{thm:PoA_max_min_degree}) & \\
				    & & & & & $\leq 3$ (Thm.~\ref{thm:poa_2ssg}) & $o = \frac{n}{2}$ & $\leq 2\left(1+\frac{\Delta-1}{\delta-1}\right)$ (Thm.~\ref{PoA_local}) & $\delta \geq 2$ \\
				    & & & & & $\leq \frac{no(n-o)-n}{o(o-1)(n-o)}$ (Thm.~\ref{thm:poa_2ssg}) & otherwise & $\left( \frac{\Delta(\Delta-1)}{2} - \epsilon, 4(\Delta^2 - \Delta+1) \right)$ & $ \Delta \leq n-2$ \\
				    & & & & & & & (Thm.~\ref{thm:LPoA_connected_and_bounded_max_degree}) & \\
					\textbf{regular} & \checkmark (\cite{E+19}) & \checkmark (\cite{E+19}) & \checkmark (\cite{E+19}) & \checkmark (\cite{E+19}) & $2+\frac{1}{\alpha}$ (Thm.~\ref{PoA_reg}, \ref{PoA_reg_1}) & $\Delta \in (2\alpha, 2\alpha +1 )$ & $2+\frac{1}{\alpha}$ (Cor.~\ref{PoA_reg}, Thm.~\ref{PoA_reg_1}) & $\Delta \in (2\alpha, 2\alpha +1 )$ \\
					\textbf{$1$-regular} & \checkmark (Thm.~\ref{almost-regular}) & \checkmark (Thm.~\ref{almost-regular}) & \checkmark (Thm.~\ref{almost-regular}) & \checkmark (Thm.~\ref{almost-regular}) \\ 
					\textbf{trees} & $\times$ (\cite{A+19}) & \checkmark (Thm.~\ref{thm:equ_trees}) & $\times$ (\cite{A+19}) & & $\left( \frac{\Delta(\Delta-1)}{2} - \epsilon, 4(\Delta^2 - \Delta+1) \right)$ & $ \Delta \leq n-2$ & $\left( \frac{\Delta(\Delta-1)}{2} - \epsilon, 4(\Delta^2 - \Delta+1) \right)$ & $ \Delta \leq n-2$ \\
		 			& & & & & (Cor.~\ref{cor:poa_trees} + Thm.~\ref{thm:LPoA_connected_and_bounded_max_degree}) & & (Cor.~\ref{cor:poa_trees} + Thm.~\ref{thm:LPoA_connected_and_bounded_max_degree}) & \\
					\textbf{cycles} & \checkmark (\cite{E+19}) & \checkmark (\cite{E+19}) & \checkmark (\cite{E+19}) & \checkmark (\cite{E+19}) & $1$ (Thm.~\ref{thm:cycle_global}) & $o = 1$ & $1$ (Thm.~\ref{thm:cycle_local}) & $o = 1$ \\
		 			& & & & & $\frac{n-2}{b+\beta}$ (Thm.~\ref{thm:cycle_global}) & otherwise & $\frac{n-2}{b-o}$ (Thm.~\ref{thm:cycle_local}) & $o\geq 2$, $b\geq2o$ \\
		           	& & & & & & & $\frac{n-2}{\alpha + \beta}$ (Thm.~\ref{thm:cycle_local}) & otherwise \\
					\textbf{paths} & \checkmark (Thm.~\ref{almost-regular}) & \checkmark (Thm.~\ref{almost-regular}) & \checkmark (Thm.~\ref{almost-regular}) & \checkmark (Thm.~\ref{almost-regular}) & $\infty$ (Thm.~\ref{thm:path_global}) & $n = 3$ & $\infty$ (Thm.~\ref{thm:path_local}) & $n = 3$\\
		 			& & & & & $\frac{2n-2}{2n-5}$ (Thm.~\ref{thm:path_global}) & $n > 3$, $o = 1$ & $\frac{2n-2}{2n-5}$ (Thm.~\ref{thm:path_local}) & $n > 3$, $o = 1$ \\ 
					& & & & & $\frac{n-1}{b+1+\beta}$ (Thm.~\ref{thm:path_global}) & $n > 3$, $o \geq 2$, & $\frac{n-1}{b-o-1}$ (Thm.~\ref{thm:path_local}) & $n > 3$, $o \geq 2$, $b \geq 2o$ \\ 
					& & & & & & $\beta \leq 2\alpha +1$ & \\ 
					& & & & & $\frac{n-1}{b+\beta}$ (Thm.~\ref{thm:path_global}) & otherwise & $\frac{n-1}{\alpha}$ (Thm.~\ref{thm:path_local}) & otherwise \\ 
					\textbf{$4$-grids} & \checkmark (Thm.~\ref{almost-almost-regular}) & \checkmark (Thm.~\ref{almost-almost-regular}) & \checkmark (Thm.~\ref{almost-almost-regular}) & \checkmark (Thm.~\ref{almost-almost-regular}) & $\frac{25}{22}$ (Thm.~\ref{if1}) & $o = 1$ & $(3 - \epsilon,3)$ (Thm.~\ref{thm:4grid_local}) & $2\times h$ grid, $h \geq 3$ \\
					& & & &	& $2$ (Thm.~\ref{thm:4grid_poa},~\ref{lbgrid}) & otherwise & $\left(\frac{36}{13} - \epsilon, \frac{36}{13} \right)$ (Thm.~\ref{thm:poa:grid}) & $3\times h$ grid, $h \geq 3$ \\	
					& & & &	& & & $\left(\frac52 -\epsilon, \frac52 +\epsilon \right) $ (Thm.~\ref{thm:poa_8grid_}) & $l \times h$ grid, $h,l \geq 8 + \frac{20}{\epsilon}$ \\	
					\textbf{$8$-grids} & \checkmark (Thm.~\ref{thm:existence}) & \checkmark (Thm.~\ref{thm:existence}) & $\times$ (Thm.~\ref{IRC_grid}) & \checkmark (Thm.~\ref{FIP_grid}) & $\frac{897}{704}$ (Thm.~\ref{thm:8grid_poa}) & $o =1$ & $\leq \frac{9}{4} + \epsilon$ (Thm.~\ref{poa_8grid}) & $l \times h$ grid, $h,l \geq 8 + \frac{18}{\epsilon}$ \\
					& $k = 2$ & $k = 2$ & & & $\leq 8$ (Thm.~\ref{thm:poa_8grid}) & otherwise \\	
					\bottomrule
				\end{tabular}
			\vspace*{+0.1cm}
			\caption{Result overview. The ``\checkmark'' symbol denotes that the respective property holds. Note that a ``\checkmark'' in the ``$k$-SSG'' column implies a ``\checkmark'' in the local $k$-SSG column.The ``$\times$'' symbol denotes that equilibrium existence is not guaranteed and that an IRC exists, respectively. For $k = 2$ we denote by $b$ and $o$ the number of blue and orange agents, respectively and we assume $o \leq b$. If we use $\alpha$ or $\beta$ in the respective bound, their meaning is defined in the top of the  respective column. $\epsilon$ is a constant larger zero.}\label{tbl:previous_results}
		\end{table*}
	\end{landscape}}

	While in~\cite{A+19} it was proven that equilibria may fail to exist for arbitrary underlying graphs and in~\cite{E+19} equilibrium existence was shown for regular graphs, we extend and refine these results by investigating almost regular graphs as well as paths, $4$-grids and $8$-grids. We establish equilibrium existence for all these graph classes and all our results yield polynomial time algorithms for computing an equilibrium. 
	Moreover, we study the PoA in-depth. Since it was shown in~\cite{A+19} that the PoA can be unbounded for $k \geq 3$, we focus on the PoA of the (local) $2$-SSG. 
	We give tight or almost tight bounds on the PoA for all mentioned graph classes which in many cases are significant improvements on the $\Theta(n)$ bound proven in~\cite{A+19}. In particular, we also improve the upper bound for balanced games on arbitrary graphs and~we give PoA bounds which depend on the minimum and maximum degree in the underlying~graph.

	Besides analyzing equilibria in the general model of Agarwal et al.~\cite{A+19}, we introduce and analyze a local variant of the model, which was already suggested by Schelling~\cite{Schelling71} but to the best of our knowledge has not yet been explored for Schelling's model. Our results indicate that the local variant has favorable properties. For instance, equilibria are guaranteed to exists on trees in the local version while in~\cite{A+19} it was shown that this is not the case for the general model. Moreover, for many cases we can show that the PoA in the local version deteriorates only slightly compared to the global version.
	
	\section{Equilibrium Existence and Dynamics}

	We start by providing a precise characterization which ties equilibria in $2$-SSGs with the sum of the utilities experienced by any two agents of different colors.

	\begin{lemma}\label{sum_in_eq}
		A strategy profile $\sp$ for a $2$-SSG is an equilibrium if and only if, for any two agents~$i$ and $j$ with $c(i)\neq c(j)$ and $\delta_{\sp(i)}\leq\delta_{\sp(j)}$, it holds that $\u_i(\sp)+\u_j(\sp)\geq 1-\frac{1_{ij}(\sp)}{\delta_{\sp(i)}}$.
	\end{lemma}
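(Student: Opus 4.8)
The plan is to compute, for a fixed pair of agents $i,j$ with $c(i)\neq c(j)$, all four utilities $\u_i(\sp)$, $\u_j(\sp)$, $\u_i(\sp_{ij})$, $\u_j(\sp_{ij})$ explicitly and then read off exactly when the swap is profitable. Write $u:=\sp(i)$, $w:=\sp(j)$ and assume $\delta_u\leq\delta_w$ (which I may do, since the swap of $i$ and $j$ coincides with the swap of $j$ and $i$). Since $k=2$, every neighbour of a vertex has colour $c(i)$ or $c(j)$; so, setting $x:=|N_u\cap C_i(\sp)|$ and $y:=|N_w\cap C_j(\sp)|$, we get $\u_i(\sp)=x/\delta_u$ and $\u_j(\sp)=y/\delta_w$, while $u$ has $\delta_u-x$ neighbours of colour $c(j)$ and $w$ has $\delta_w-y$ neighbours of colour $c(i)$. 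After the swap only the colours of $u$ and $w$ change, and $u\in N_w$ holds exactly when $1_{ij}(\sp)=1$; hence agent $i$, now sitting at $w$, has $\delta_w-y-1_{ij}(\sp)$ same-coloured neighbours and agent $j$, now at $u$, has $\delta_u-x-1_{ij}(\sp)$ same-coloured neighbours, which gives $\u_i(\sp_{ij})=1-\frac{y+1_{ij}(\sp)}{\delta_w}$ and $\u_j(\sp_{ij})=1-\frac{x+1_{ij}(\sp)}{\delta_u}$.

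Next I would spell out that a profitable swap between $i$ and $j$ means the two strict inequalities $\u_i(\sp_{ij})>\u_i(\sp)$ and $\u_j(\sp_{ij})>\u_j(\sp)$, i.e. (I): $1-\frac{y+1_{ij}(\sp)}{\delta_w}>\frac{x}{\delta_u}$ and (II): $1-\frac{x+1_{ij}(\sp)}{\delta_u}>\frac{y}{\delta_w}$. The key observation is that rearranging (II) gives exactly $\u_i(\sp)+\u_j(\sp)<1-\frac{1_{ij}(\sp)}{\delta_u}$, which is the negation of the inequality in the lemma; and that, under the assumption $\delta_u\leq\delta_w$, (II) already implies (I), because (II) yields $\frac{x}{\delta_u}+\frac{y}{\delta_w}<1-\frac{1_{ij}(\sp)}{\delta_u}\leq 1-\frac{1_{ij}(\sp)}{\delta_w}$, and the last expression is just (I) after rearranging. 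Hence a profitable swap between $i$ and $j$ exists if and only if $\u_i(\sp)+\u_j(\sp)<1-\frac{1_{ij}(\sp)}{\delta_{\sp(i)}}$.

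Finally I would assemble the characterisation. Profitable swaps occur only between agents of different colours, and since a swap of $i$ and $j$ is identical to a swap of $j$ and $i$, the candidate pairs are exactly the ordered pairs $(i,j)$ with $c(i)\neq c(j)$ and $\delta_{\sp(i)}\leq\delta_{\sp(j)}$. Thus $\sp$ fails to be an equilibrium iff some such ordered pair admits a profitable swap iff, by the previous step, $\u_i(\sp)+\u_j(\sp)<1-\frac{1_{ij}(\sp)}{\delta_{\sp(i)}}$ for some such pair; taking the contrapositive gives the claim. I expect the one genuinely error-prone part to be the bookkeeping of the $1_{ij}(\sp)$ correction terms in $\u_i(\sp_{ij})$ and $\u_j(\sp_{ij})$ — getting both the sign and the correct denominator right — together with noticing that the asymmetric hypothesis $\delta_{\sp(i)}\leq\delta_{\sp(j)}$ is precisely what makes one of (I), (II) redundant, so that a single inequality on the sum of the two utilities captures the entire no-profitable-swap condition.
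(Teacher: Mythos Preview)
Your proof is correct and follows essentially the same computation as the paper: both arguments write out $\u_i(\sp_{ij})$ and $\u_j(\sp_{ij})$ explicitly in terms of the same quantities and rearrange the two improvement inequalities into conditions on $\u_i(\sp)+\u_j(\sp)$. The one small difference is organisational: the paper proves the two directions separately via a case split (``either $\u_i(\sp)\geq\u_i(\sp_{ij})$ or $\u_j(\sp)\geq\u_j(\sp_{ij})$''), whereas you observe that under $\delta_u\leq\delta_w$ inequality (II) already implies (I), so profitability is equivalent to (II) alone and the whole lemma becomes a single contrapositive --- a slightly cleaner packaging of the same argument.
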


	\begin{proof}
		Fix an equilibrium $\sp$ and consider two agents $i$ and $j$ such that $c(i)\neq c(j)$ and $\delta_{\sp(i)}\leq\delta_{\sp(j)}$. Assume without loss of generality that $i$ is orange and $j$ is blue. Let $o_i$ be the number of orange neighbors of $\sp(i)$ and $b_j$ be the number of blue neighbors of $\sp(j)$. It holds that
		\begin{eqnarray*}
			\u_i(\sp)=\frac{o_i}{\delta_{\sp(i)}},\ \u_j(\sp)=\frac{b_j}{\delta_{\sp(j)}}
 		\end{eqnarray*}
		and
		\begin{eqnarray*}
			\u_i(\sp_{ij})=\frac{\delta_{\sp(j)}-b_j-1_{ij}(\sp)}{\delta_{\sp(j)}},\ \u_j(\sp_{ij})=\frac{\delta_{\sp(i)}-o_i-1_{ij}(\sp)}{\delta_{\sp(i)}}.
		\end{eqnarray*}

		\noindent As $\sp$ is an equilibrium, it must be either $\u_i(\sp)\geq\u_i(\sp_{ij})$ or $\u_j(\sp)\geq\u_j(\sp_{ij})$. 

		In the first case, we get \[\u_i(\sp)+\u_j(\sp)\geq 1-\frac{1_{ij}(\sp)}{\delta_{\sp(j)}},\] in the second one, we get \[\u_i(\sp)+\u_j(\sp)\geq 1-\frac{1_{ij}(\sp)}{\delta_{\sp(i)}}.\] Thus, given that $\delta_{\sp(i)}\leq\delta_{\sp(j)}$, in any case we have that $\u_i(\sp)+\u_j(\sp)\geq 1-\frac{1_{ij}(\sp)}{\delta_{\sp(i)}}$.

		Now fix a strategy profile $\sp$ such that, for any two agents $i$ and $j$ with $c(i)\neq c(j)$ and $\delta_{\sp(i)}\leq\delta_{\sp(j)}$, it holds that $\u_i(\sp)+\u_j(\sp)\geq 1-\frac{1_{ij}(\sp)}{\delta_{\sp(i)}}$. 

		Assume, by way of contradiction, that $\sp$ is not an equilibrium. Then, there exist an orange agent $i$ and a blue agent $j$ such that $\u_i(\sp)<\u_i(\sp_{ij})$ and $\u_j(\sp)<\u_j(\sp_{ij})$. Let $o_i$ be the number of orange neighbors of $\sp(i)$ and $b_j$ be the number of blue neighbors of $\sp(j)$. It holds that
  		\begin{eqnarray*}
			\u_i(\sp)=\frac{o_i}{\delta_{\sp(i)}},\ \u_j(\sp)=\frac{b_j}{\delta_{\sp(j)}}
 		\end{eqnarray*}
		and
		\begin{eqnarray*}
			\u_i(\sp_{ij})=\frac{\delta_{\sp(j)}-b_j-1_{ij}(\sp)}{\delta_{\sp(j)}},\ \u_j(\sp_{ij})=\frac{\delta_{\sp(i)}-o_i-1_{ij}(\sp)}{\delta_{\sp(i)}}.
		\end{eqnarray*}
		By $\u_i(\sp)<\u_i(\sp_{ij})$, we obtain $$\u_i(\sp)+\u_j(\sp)\geq 1-\frac{1_{ij}(\sp)}{\delta_{\sp(j)}}.$$ 

		\noindent Similarly, by $\u_j(\sp)<\u_j(\sp_{ij})$, we obtain $$\u_i(\sp)+\u_j(\sp)\geq 1-\frac{1_{ij}(\sp)}{\delta_{\sp(i)}}.$$

		\noindent At least one of the two derived inequalities contradicts the assumption on $\sp$. Thus, $\sp$ is an equilibrium.
	\end{proof}

	\noindent By exploiting the potential $\Phi$, Echzell et al. \cite{E+19} show that, for any $k\geq 2$, $k$-SSGs played on regular graphs have the FIP and that any sequence of profitable swaps has length at most~$m$. This result can be extended to $\alpha$-almost regular graphs for some values of~$\alpha$. First, we need the following technical lemma.

	\begin{lemma}\label{findamental}
		Fix a $k$-SSG $(G,{\mathbf t})$, with $k\geq 2$, a strategy profile $\sp$ and a profitable swap in~$\sp$ performed by vertices $i$ and $j$ such that $\delta_{\sigma(i)}\leq\delta_{\sigma(j)}$. If $\delta_{\sigma(j)}-\delta_{\sigma(i)}\leq 1$, then the swap is $\Phi$-increasing. If $\delta_{\sigma(j)}-\delta_{\sigma(i)}\leq 2$, then the swap is either $\Phi$-increasing or $\Phi$-preserving, with the swap being $\Phi$-preserving only if $\u_j(\sp)\in\left(\frac12,1\right)$.
	\end{lemma}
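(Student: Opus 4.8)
The plan is to track the monochromatic edges incident to the two swapped vertices and play them off against the strict-improvement inequalities that define a profitable swap. Abbreviate $d_i=\delta_{\sp(i)}$ and $d_j=\delta_{\sp(j)}$, so $d_i\le d_j$; since a profitable swap involves agents of different colors, let $a=c(i)\neq c(j)=c$. For the neighborhoods under $\sp$ introduce four counters: let $p$ be the number of color-$a$ neighbors of $\sp(i)$; $q$ the number of color-$c$ neighbors of $\sp(i)$ other than $\sp(j)$; $r$ the number of color-$c$ neighbors of $\sp(j)$; and $s$ the number of color-$a$ neighbors of $\sp(j)$ other than $\sp(i)$. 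Counting monochromatic edges touching $\{\sp(i),\sp(j)\}$ before and after the swap (the edge joining the two vertices, if it exists, is bichromatic in both profiles, hence contributes nothing), one obtains
\[
\Phi(\sp_{ij})-\Phi(\sp)=(q+s)-(p+r).
\]
Moreover $\u_i(\sp)=p/d_i$, $\u_j(\sp)=r/d_j$, $\u_i(\sp_{ij})=s/d_j$, $\u_j(\sp_{ij})=q/d_i$, and the degree bound at $\sp(i)$ gives $p+q+1_{ij}(\sp)\le d_i$, in particular $q\le d_i$ (the analogous bound at $\sp(j)$ will not be needed). For $k=2$ these identities specialize exactly to those used in Lemma~\ref{sum_in_eq}, which is a useful sanity check.

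I would first record two integer inequalities. Profitability for $i$ reads $s\,d_i>p\,d_j$; since $d_j\ge d_i\ge 1$ and $p\ge 0$, this forces $s\ge p+1$. For $q$ versus $r$ I argue by contradiction: if $r-q=t\ge 1$, then profitability for $j$, namely $q\,d_j>r\,d_i=(q+t)d_i$, rearranges to $q(d_j-d_i)>t\,d_i$, and plugging in $q\le d_i$ yields $d_j-d_i>t$. Equivalently, $r-q\le (d_j-d_i)-1$ whenever $r>q$. Hence $d_j-d_i\le 1$ implies $q\ge r$, and $d_j-d_i\le 2$ implies $q\ge r-1$.

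Combining the two bounds: if $d_j-d_i\le 1$ then $\Phi(\sp_{ij})-\Phi(\sp)=(s-p)+(q-r)\ge 1+0=1$, so the swap is $\Phi$-increasing; if $d_j-d_i\le 2$ then $(s-p)+(q-r)\ge 1+(-1)=0$, so the swap is $\Phi$-increasing or $\Phi$-preserving. It remains to characterize the preserving case. If $(s-p)+(q-r)=0$ while $s-p\ge 1$ and $q-r\ge -1$, then necessarily $s=p+1$ and $q=r-1$; in particular $q<r$, so the contradiction argument applied with $t=1$ and $d_j-d_i=2$ gives $2q>d_i$. Writing $r=q+1$ and $d_j=d_i+2$ then gives $\u_j(\sp)=r/d_j=(q+1)/(d_i+2)$, which is $>\tfrac12$ because $2q>d_i$ and $<1$ because $q\le d_i$. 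Thus a $\Phi$-preserving profitable swap forces $\u_j(\sp)\in\left(\tfrac12,1\right)$, as claimed.

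The only place that requires care is the initial bookkeeping: getting the four counters and the role of the indicator $1_{ij}(\sp)$ consistent between the profiles $\sp$ and $\sp_{ij}$, since it is precisely the bound $q\le d_i$ derived from it that caps how negative $q-r$ can be and thereby drives the whole argument. Once that identity is pinned down, the remainder is a short chain of elementary integer arithmetic.
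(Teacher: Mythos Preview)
Your proof is correct and follows the same underlying approach as the paper: set up counters for same- and cross-color neighbors at the two swapped vertices, express $\Phi(\sp_{ij})-\Phi(\sp)$ in terms of them, and squeeze using the two strict profitability inequalities. The only difference is packaging: where the paper splits into the three cases $d_j-d_i\in\{0,1,2\}$ and works each out separately, you extract the single integer bound $r-q\le (d_j-d_i)-1$ whenever $r>q$ (from $q(d_j-d_i)>t\,d_i$ together with $q\le d_i$), which subsumes all three cases at once and makes the argument a bit tidier.
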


	\begin{proof}
		Assume, without loss of generality, that $c(i)$ is orange and $c(j)$ is blue; moreover, define~$\sigma(i)=u$ and $\sigma(j)=v$. Let $o_u$ be the number of orange agents occupying vertices adjacent to~$u$ in $\sp$, $x_u$ be the number of neither orange not blue agents occupying vertices adjacent to $u$ in~$\sp$, $b_v$ be the number of blue agents occupying vertices adjacent to $v$ in $\sp$ and~$x_v$ be the number of neither orange not blue agents occupying vertices adjacent to $v$ in $\sp$. We have 
		$$\u_i(\sp)=\frac{o_u}{\delta_u},\ \ \u_j(\sp)=\frac{b_v}{\delta_v}$$ and $$\u_i(\sp_{ij})=\frac{\delta_v-b_v-x_v-1_{ij}(\sp)}{\delta_v},\ \ \u_j(\sp_{ij})=\frac{\delta_u-o_u-x_u-1_{ij}(\sp)}{\delta_u}.$$
		As $i$ and $j$ perform a profitable swap in $\sp$, we have $\u_i(\sp)<\u_i(\sp_{ij})$ and $\u_j(\sp)<\u_j(\sp_{ij})$ which imply
		\begin{equation}\label{eq1}
			\delta_u b_v+\delta_v o_u+\delta_ux_v+\delta_u 1_{ij}(\sp)<\delta_u \delta_v
		\end{equation}
		and
		\begin{equation}\label{eq2}
			\delta_u b_v+\delta_v o_u+\delta_vx_u+\delta_v 1_{ij}(\sp)<\delta_u \delta_v.
		\end{equation}
		Moreover, we have
		\begin{eqnarray*}
			\Phi(\sp_{ij})-\Phi(\sp) & = & \delta_u-1_{ij}(\sp)-o_u-x_u+\delta_v-1_{ij}(\sp)-b_v-x_v-o_u-b_v\\
			& = & \delta_u+\delta_v-x_u-x_v-2(o_u+b_v+1_{ij}(\sp)).
		\end{eqnarray*}

		\begin{itemize}

			\item If $\delta_u=\delta_v:=\delta'$, (\ref{eq1}) implies $o_u+b_v+1_{ij}(\sp)+x_v<\delta',$ while (\ref{eq2}) implies $o_u+b_v+1_{ij}(\sp)+x_u<\delta'$ which together yield $$\Phi(\sp_{ij})-\Phi(\sp)=2\delta'-x_u-x_v-2(o_u+b_v+1_{ij}(\sp))>0.$$ 

			\item If $\delta_u=\delta_v-1$, (\ref{eq1}) implies $o_u+b_v+1_{ij}(\sp)+x_v<\delta_v-1+\frac{b_v+x_v+1_{ij}(\sp)}{\delta_v},$ while (\ref{eq2}) implies $o_u+b_v+1_{ij}(\sp)+x_u<\delta_v-1+\frac{b_v}{\delta_v}.$ As $b_v+x_v+1_{ij}(\sp)\leq\delta_v$ by definition, we get $o_u+b_v+1_{ij}(\sp)+x_v\leq\delta_v-1$ and $o_u+b_v+1_{ij}(\sp)+x_u\leq\delta_v-1$ which together yield $$\Phi(\sp_{ij})-\Phi(\sp)=2\delta_v-1-x_u-x_v-2(o_u+b_v+1_{ij}(\sp))>0.$$

			\item If $\delta_u=\delta_v-2$, (\ref{eq1}) implies $o_u+b_v+1_{ij}(\sp)+x_v<\delta_v-2+\frac{2(b_v+x_v+1_{ij}(\sp))}{\delta_v},$ while (\ref{eq2}) implies $o_u+b_v+1_{ij}(\sp)+x_u<\delta_v-2+\frac{2 b_v}{\delta_v}.$ As $b_v+x_v+1_{ij}(\sp)\leq\delta_v$ by definition, we get $o_u+b_v+1_{ij}(\sp)+x_v\leq\delta_v-1$ and $o_u+b_v+1_{ij}(\sp)+x_u\leq\delta_v-1$ which together yield $$\Phi(\sp_{ij})-\Phi(\sp)=2\delta_v-2-x_u-x_v-2(o_u+b_v+1_{ij}(\sp))\geq 0.$$ However, note that equality occurs only in the case in which $\frac{2 b_v}{\delta_v}>1$ which requires $b_v>\frac{\delta_v}{2}$, that is, $\u_j(\sp)>\frac12$. Clearly, as $j$ improves after the swap, it must also be $\u_j(\sp)<1$.
		\end{itemize}
	\end{proof}
	Given the above lemma, existence and efficient computation of equilibria for $k$-SSGs played on almost regular graphs can be easily obtained for any $k\geq 2$. 

	\begin{theorem}\label{almost-regular}
		For any $k\geq 2$, $k$-SSGs played on almost regular graphs has the FIP. Moreover, at most $m$ profitable swaps are sufficient to reach an equilibrium starting from any initial strategy profile.
	\end{theorem}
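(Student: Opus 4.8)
The plan is to derive the theorem directly from Lemma~\ref{findamental}. First I would recall that a graph $G$ is almost regular precisely when $\Delta(G)-\delta(G)=1$, so that every vertex of $G$ has degree either $\delta(G)$ or $\delta(G)+1$. Consequently, for any two vertices $u,v\in V$ we have $|\delta_u-\delta_v|\le 1$.

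Next, consider any profitable swap in a strategy profile $\sp$ performed by two agents $i$ and $j$, and assume without loss of generality that $\delta_{\sp(i)}\le\delta_{\sp(j)}$. By the observation above, $\delta_{\sp(j)}-\delta_{\sp(i)}\le 1$, so the hypothesis of the first part of Lemma~\ref{findamental} is satisfied and the swap is $\Phi$-increasing; that is, $\Phi(\sp_{ij})>\Phi(\sp)$, and since $\Phi$ is integer-valued, in fact $\Phi(\sp_{ij})\ge\Phi(\sp)+1$. (Note that here we never need the extended potential $\Psi$: the degree gap of $1$ rules out the $\Phi$-preserving case that arises for larger gaps.)

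Hence $\Phi$ is a strict ordinal potential for the $k$-SSG on $G$: every profitable swap strictly increases it. Since $0\le\Phi(\sp)\le m$ for every feasible strategy profile $\sp$ (the potential counts monochromatic edges, of which there are at most $m$), any sequence of profitable swaps starting from a profile $\sp^0$ has length at most $m-\Phi(\sp^0)\le m$. In particular no improving response cycle exists, so the $k$-SSG on $G$ has the FIP, and after at most $m$ profitable swaps no further profitable swap is available, i.e., an equilibrium has been reached; repeatedly performing an arbitrary profitable swap is then a polynomial-time algorithm for computing one.

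As for obstacles: there is essentially none beyond correctly invoking Lemma~\ref{findamental}; the only point requiring a moment's care is that ``almost regular'' forces the degree gap between the two swapping vertices to be at most $1$ (rather than some larger $\alpha$), which is exactly the regime in which the lemma guarantees a genuine $\Phi$-increase rather than merely a $\Phi$-non-decrease, so that $\Phi$ alone suffices as a potential and the $m$-step bound follows immediately.
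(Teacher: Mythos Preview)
Your proposal is correct and follows exactly the same approach as the paper's own proof, which simply invokes Lemma~\ref{findamental} together with the observation that almost regularity means $\Delta-\delta=1$. Your write-up merely spells out in more detail why $\Phi$ then serves as a strict integer-valued potential bounded by $m$.
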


	\begin{proof}
		The claim comes from Lemma \ref{findamental}, as in any almost regular graph $G$ it holds that $\Delta-\delta=1$.
	\end{proof}

	\noindent Theorem \ref{almost-regular} cannot be extended beyond almost regular graphs as Agarwal et al.~\cite{A+19} provide a $2$-SSG played on a $2$-almost regular graph (more precisely, a tree) admitting no equilibria.
	However, in the next theorem, we show that positive results can be still achieved in games played on $2$-almost regular graphs obeying some additional properties.

	\begin{theorem}\label{almost-almost-regular}
		Let $G$ be a $2$-almost regular graph such that $\Delta(G)\leq 4$ and every vertex of degree~$\delta$ is adjacent to at most $\delta-1$ vertices of degree $\Delta(G)$. Then, for any $k\geq 2$, every $k$-SSG played on $G$ possesses the FIP. Moreover, at most $O(nm)$ profitable swaps are sufficient to reach an equilibrium starting from any initial strategy profile.
	\end{theorem}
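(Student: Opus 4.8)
The plan is to show that along every sequence of profitable swaps the potential $\Phi$ is non-decreasing and that at most $n$ swaps occur between two consecutive $\Phi$-increasing ones, so that no sequence is longer than $O(nm)$ and in particular the FIP holds. Throughout, for a profitable swap by agents $i,j$ with $\delta_{\sp(i)}\le\delta_{\sp(j)}$ I write $u=\sp(i)$, $v=\sp(j)$ and assume wlog that $i$ is orange and $j$ is blue, reusing the quantities $o_u,x_u,b_v,x_v$ from the proof of Lemma~\ref{findamental}. From $\Delta(G)-\delta(G)=2$, $\Delta(G)\le4$ and connectivity, every vertex has degree in $\{2,3,4\}$ with $\delta(G)=2$, $\Delta(G)=4$ (the remaining possibility $\delta(G)=1$, $\Delta(G)=3$ is handled by an analogous, simpler argument, deferred to the end). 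Thus every degree gap is at most $2$, so by Lemma~\ref{findamental} every profitable swap is $\Phi$-increasing or $\Phi$-preserving; hence $\Phi$ is non-decreasing and, since it takes values in $\{0,\dots,m\}$, a sequence of profitable swaps contains at most $m$ $\Phi$-increasing swaps.

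Next I would pin down the anatomy of a $\Phi$-preserving swap by revisiting the tight case of the proof of Lemma~\ref{findamental}. Such a swap forces $\delta_u=2$, $\delta_v=4$ and $\u_j(\sp)\in(\tfrac12,1)$, and plugging $\delta_v=4$ into the inequalities $o_u+b_v+1_{ij}(\sp)+x_v\le\delta_v-1$ and $o_u+b_v+1_{ij}(\sp)+x_u\le\delta_v-1$ together with $b_v>\delta_v/2$ yields $o_u=x_u=x_v=1_{ij}(\sp)=0$. Read back into $G$: $u$ and $v$ are non-adjacent, both neighbours of $u$ are blue, the agent $i$ occupying $u$ has utility $0$, and after the swap the blue agent $j$ occupies $u$ with utility $1$. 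This is exactly where $\Delta(G)\le4$ enters: for larger maximum degree, $\u_j(\sp)\in(\tfrac12,1)$ no longer forces these quantities to vanish.

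The crux is to bound by $n$ the length of any maximal block of consecutive $\Phi$-preserving swaps. I claim that within such a block every degree-$2$ vertex is the low-degree endpoint of at most one swap; since each $\Phi$-preserving swap has a unique degree-$2$ endpoint, the block then has length at most the number of degree-$2$ vertices, i.e.\ at most $n$. To prove the claim, take a swap of the block with low-degree endpoint $u$, occupied afterwards by the blue agent $j$. Since $u$ has degree $\delta(G)=2$, the hypothesis gives $u$ at most $\delta(G)-1=1$ neighbour of degree $\Delta(G)=4$, hence a neighbour $p$ with $\delta_p\le3$, and $p$ is blue. I would then show by induction over the later swaps of the block that $j$ keeps occupying $u$ and $p$ stays blue: in a $\Phi$-preserving swap the low-degree endpoint has degree $2$ with an occupant of utility $0$ and the high-degree endpoint has degree $4$, whereas (i) $p$ has degree $\le3$ and, being blue with the blue neighbour $u$, has positive utility, so $p$ is neither endpoint, and (ii) $u$ has degree $2$ and its occupant $j$, being blue with the blue neighbour $p$, has utility $\ge\tfrac12$, so $u$ is neither endpoint. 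Hence the occupant of $u$ keeps utility $\ge\tfrac12$ for the rest of the block, so $u$ is never again a (utility-$0$) low-degree endpoint.

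Putting the pieces together, a sequence of profitable swaps has at most $m$ $\Phi$-increasing swaps, which cut it into at most $m+1$ $\Phi$-preserving blocks of length at most $n$ each, so its length is at most $m+(m+1)n=O(nm)$, proving the FIP and the quantitative bound; the case $\delta(G)=1$ is the same with ``degree $2$'' replaced by ``degree $1$''. I expect the third step to be the main obstacle: extracting the invariant ``$j$ stays on $u$ and $p$ stays blue'' and recognising that the structural hypothesis is precisely what supplies the low-degree neighbour $p$ needed for it, which in turn relies on the tight-case computation of Lemma~\ref{findamental} and on checking carefully which vertices can change occupant or colour during a $\Phi$-preserving block.
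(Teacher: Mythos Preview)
Your proof is correct and follows essentially the same route as the paper: use Lemma~\ref{findamental} to see that every profitable swap is $\Phi$-nondecreasing, characterize the $\Phi$-preserving ones as having a degree-$\delta$ endpoint $u$ whose occupant has utility $0$ before and $1$ after, and then use the hypothesis to produce a neighbour $p$ of $u$ with $\delta_p<\Delta(G)$ that blocks $u$ from being used again inside a $\Phi$-preserving run. Your invariant ``$j$ stays on $u$ and $p$ stays the same colour'' is in fact a cleaner statement than the paper's, which asserts somewhat loosely that a neighbour of $u$ of degree $\neq\Delta(G)$ can only be involved in a $\Phi$-increasing swap; your inductive formulation makes explicit why the degree-$\delta$ possibility for $p$ is harmless (it would need utility $0$, impossible while $u$ remains same-coloured), and simultaneously why $u$ itself cannot be touched again.
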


	\begin{proof}
		By Lemma \ref{findamental}, we know that any profitable swap occurring in a strategy profile $\sp$ is $\Phi$-increasing unless it involves an agent $i$ occupying vertex $\sigma(i)=u$, with $\delta_u=\delta$, and an agent $j$ occupying vertex $\sigma(j)=v$, with $\delta_v=\Delta$, and such that $\u_j(\sp)\in(\frac12,1)$. As $G$ is connected, we have $\delta\geq 1$, which yields $\Delta\in\{3,4\}$. This fact, together with $\u_j(\sp)\in(\frac12,1)$ implies $\u_j(\sp)\in\{\frac23,\frac34\}$. As $\u_j(\sp_{ij})>\u_j(\sp)$, we get $\u_j(\sp_{ij})=1$ which implies that all vertices adjacent to $u$ are occupied by agents of the same color of agent $j$, which implies $\u_i(\sp)=0$. So we can conclude that, in order to have a $\Phi$-preserving profitable swap, we need a profitable swap involving a vertex $u$ of degree $\delta$ such that $\u_{\sigma^{-1}(u)}(\sp)=0$ and $\u_{\sigma_{ij}^{-1}(u)}(\sp)=1$. Thus, in order for an agent occupying $u$ to perform once again a $\Phi$-preserving profitable swap, all vertices in $ N_u$ need to change their colors, i.e., all agents occupying vertices adjacent to~$u$ must perform a profitable swap. By Lemma \ref{findamental}, any agent occupying a vertex $v\in N_u$ can be involved in a $\Phi$-preserving swap only if $\delta_v=\Delta$. By assumption $u$ has at least a neighbor of degree different than $\Delta$. Thus, between any two consecutive $\Phi$-preserving profitable swaps involving an agent residing at a fixed vertex, a $\Phi$-increasing profitable swap has to occur. This immediately implies that no more than $n$ consecutive $\Phi$-preserving profitable swaps are possible.
	\end{proof}

	\noindent As 4-grids meet the conditions required by Theorem \ref{almost-almost-regular}, we get the following corollary.

	\begin{corollary}\label{cor-grid}
		For any $k\geq 2$, every $k$-SSG played on a 4-grid possesses the FIP. Moreover, at most $O(nm)$ profitable swaps are sufficient to reach an equilibrium starting from any initial strategy profile.
	\end{corollary}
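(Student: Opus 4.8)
The plan is to obtain the corollary as an immediate application of Theorem~\ref{almost-almost-regular}: one only has to check that every 4-grid satisfies the three hypotheses of that theorem, namely that it is $2$-almost regular, has maximum degree at most $4$, and has the property that every vertex of degree $\delta$ is adjacent to at most $\delta-1$ vertices of degree $\Delta$.

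First I would dispose of the narrow grids separately, since these are not $2$-almost regular. A $1\times h$ grid is a path, which has $\Delta\le 2$ and $\delta\ge 1$, and a $2\times h$ grid has only vertices of degree $2$ (the four corners) and degree $3$ (the remaining boundary vertices); in both cases $\Delta-\delta\le 1$, so the graph is almost regular and Theorem~\ref{almost-regular} already yields the FIP with at most $m = O(nm)$ profitable swaps. It therefore remains to treat an $l\times h$ grid $G$ with $l,h\ge 3$. In such a grid the corners have degree $2$, the other boundary vertices have degree $3$, and the interior vertices have degree $4$, and since $l,h\ge 3$ all three kinds of vertices are present; hence $\delta(G)=2$ and $\Delta(G)=4$, so $G$ is $2$-almost regular with $\Delta(G)\le 4$, covering the first two hypotheses.

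For the third hypothesis, the only vertices of degree $\delta=2$ in $G$ are the four corners. Each neighbour of a corner lies either in the boundary row or in the boundary column passing through that corner, and is therefore itself a boundary vertex, so it has degree at most $3<\Delta(G)$. Consequently every degree-$2$ vertex is adjacent to \emph{zero} vertices of degree $\Delta(G)$, which is in particular at most $\delta-1=1$. All hypotheses of Theorem~\ref{almost-almost-regular} are thus met, and the corollary follows, including the $O(nm)$ bound on the length of any sequence of profitable swaps (for any $k\ge 2$).

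Since the argument is purely a verification of hypotheses, there is no genuine obstacle; the only point that needs a moment of care is the case split on the grid dimensions, because a grid that is too narrow fails to be $2$-almost regular and so must be routed through Theorem~\ref{almost-regular} rather than Theorem~\ref{almost-almost-regular}.
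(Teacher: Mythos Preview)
Your proof is correct and follows the same route as the paper, which simply remarks that 4-grids satisfy the hypotheses of Theorem~\ref{almost-almost-regular}. You are in fact more careful than the paper: by separating out the $1\times h$ and $2\times h$ grids (which are at most $1$-almost regular and hence fall under Theorem~\ref{almost-regular} rather than Theorem~\ref{almost-almost-regular}), you close a small gap that the paper's one-line justification glosses over.
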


	\noindent As mentioned before, Agarwal et al.~\cite{A+19} pointed out that $2$-SSGs played on trees are not guaranteed to admit equilibria. We show that this is no longer the case in local $k$-SSGs for any value of $k\geq 2$.

	\begin{theorem}\label{thm:equ_trees}
		For any $k \geq 2$, every local $k$-SSG played on a tree has an equilibrium which can be computed in polynomial time.
	\end{theorem}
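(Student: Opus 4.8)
The plan is to exhibit one explicit equilibrium and to verify its stability by a short structural observation, rather than to argue via a potential function or the FIP (consistently, Theorem~\ref{thm:equ_trees} asserts only existence). I would root the tree $T$ at an arbitrary vertex $r$ and order its vertices as $v_1=r,v_2,\dots,v_n$ so that every non-root vertex appears after its parent (for instance, the order in which a DFS from $r$ first visits the vertices). I would then take the \emph{prefix colouring} $\sp$: give colour $1$ to $v_1,\dots,v_{t_1}$, colour $2$ to the next $t_2$ vertices, and so on, so that the colour index of $v_\ell$ is non-decreasing in $\ell$. This $\sp$ is a feasible profile computable in linear time, so the polynomial-time part is immediate and the task reduces to showing $\sp\in LSE(T,\mathbf t)$.

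To show $\sp$ admits no profitable local swap, I would go edge by edge. Fix an edge $\{u,v\}$ of $T$; since $T$ is rooted, assume $u$ is the parent of $v$, so $u$ precedes $v$ in the order and hence $c(\sp^{-1}(u))\le c(\sp^{-1}(v))$. If the two agents have the same colour, the swap leaves the induced colouring, and thus every utility, unchanged, so it is not profitable; hence assume $c(\sp^{-1}(u))<c(\sp^{-1}(v))$. Put $i=\sp^{-1}(u)$ and $j=\sp^{-1}(v)$; the swap of $i$ and $j$ recolours only $u$ and $v$, and afterwards $i$ sits at $v$. Every neighbour of $v$ other than $u$ is a child of $v$, so it appears after $v$ (hence after $u$) in the order and therefore carries colour index $\ge c(\sp^{-1}(v))>c(i)$, and its colour is unaffected by the swap; the remaining neighbour $u$ is now occupied by $j$, whose colour differs from $c(i)$ as well. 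So $i$ has no same-colour neighbour at $v$, i.e.\ $\u_i(\sp_{ij})=0\le\u_i(\sp)$: the swap is not profitable for $i$, hence not profitable at all. Since every edge of $T$ is a parent--child edge, no profitable local swap exists, so $\sp$ is a local equilibrium.

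There is essentially no calculation here; the two points that need care are (a) that the vertex order can be chosen with each vertex before all of its children — this is exactly where the fact that $T$ is a tree enters, since on a general graph a DFS yields non-tree edges for which this case analysis collapses — and (b) monotonicity of the prefix colouring along the order, which, together with $c(\sp^{-1}(u))\neq c(\sp^{-1}(v))$, is what forces every child of $v$ to differ in colour from $i$. The single idea to locate is that the move to test is that of the agent nearer the root: after the swap its new seat $v$ is surrounded only by strictly ``later'' colours, which pins its utility to $0$ and rules the swap out. I expect this to be the whole difficulty; everything else is bookkeeping.
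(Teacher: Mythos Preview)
Your proof is correct and takes essentially the same approach as the paper. The paper also roots the tree and assigns colours monotonically along the root--leaf direction, then observes that the agent moving from the parent to the child after a local swap ends up with utility $0$; the only cosmetic difference is that the paper fills the tree in post-order (leaves first, so the child carries the smaller colour index) whereas you fill it in pre-order (root first, so the parent carries the smaller colour index), and correspondingly the paper has the higher-index agent moving down while you have the lower-index agent moving down --- the structural argument is identical.
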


	\begin{proof}
		Root the tree $T$ at a vertex $r$. We will place the agents color by color, starting with color $1$ and ending with color $k$. Before we place an agent at an inner vertex $v$ all of $v$'s descendants in $T$ have to be occupied. Hence, we place the agents starting from the leaves, and the root $r^\prime$ of every subtree $T^\prime$ is the last vertex in $T^\prime$ which will be occupied. Thus, we ensure that, if the root $r^\prime$ of a subtree $T^\prime$ is occupied by an agent of color $i \in [k]$, $T^\prime$ contains only agents of color $i^\prime \leq i$. Clearly, this construction yields a feasible strategy profile, that we denote by $\sp$, and can be implemented in polynomial time.

	Consider two agents $i$ and $j$ of different colors that occupy two adjacent vertices $u$ and~$v$, respectively. Without loss of generality, we assume that $u$ is the parent of $v$ in $T$. Since $c(j) < c(i)$, the subtree of $T$ rooted at $v$ contains no vertex of color $c(i)$. As a consequence $\u_i(\sp_{ij})=0$. Hence $\sigma$ is a LSE.
	\end{proof}

	\noindent Note that, as we move from $4$-grids to $8$-grids, Corollary \ref{cor-grid} does not hold any more. In fact, for the latter class of graphs, we show that the FIP is guaranteed to hold only for local games.

	\begin{lemma}\label{potential_decreasing}
		Fix a local $2$-SSG played on an $8$-grid, a strategy profile $\sp$ and a profitable swap in~$\sp$ performed by agents $i$ and $j$. It holds that
		\begin{itemize}
			\item[i)] If $\delta_{\sigma(i)} = 3$ and $\delta_{\sigma(j)} = 8$, then the swap is $\Phi$-decreasing by $1$ if $\u_i(\sp) = 0$ and $\u_j(\sp) = \frac58$ otherwise it is a $\Phi$-increasing swap. 
			\item[ii)] If $\delta_{\sigma(i)} = 5$ and $\delta_{\sigma(j)} = 8$, then the swap is $\Phi$-decreasing by $1$ if $\u_i(\sp) = 0$ and $\u_j(\sp) = \frac68$ otherwise it is a $\Phi$-increasing swap. 
		\end{itemize}
	\end{lemma}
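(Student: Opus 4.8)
The plan is to repeat the computation in the proof of Lemma~\ref{findamental}, but with the two degrees fixed to the hypothesized values, so that profitability becomes a small system of integer inequalities and the potential change can be read off by inspection. Assume without loss of generality $\delta_{\sigma(i)}\le\delta_{\sigma(j)}$, take $c(i)$ orange and $c(j)$ blue, write $u=\sigma(i)$, $v=\sigma(j)$, and let $o_u$ (resp.\ $b_v$) be the number of orange (resp.\ blue) neighbors of $u$ (resp.\ $v$) in $\sp$. Since $k=2$ no third color occurs, so the terms $x_u,x_v$ of Lemma~\ref{findamental} vanish, and since the swap is local we have $1_{ij}(\sp)=1$. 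Hence $\u_i(\sp)=o_u/\delta_u$, $\u_j(\sp)=b_v/\delta_v$, $\u_i(\sp_{ij})=(\delta_v-b_v-1)/\delta_v$, $\u_j(\sp_{ij})=(\delta_u-o_u-1)/\delta_u$, and $\Phi(\sp_{ij})-\Phi(\sp)=\delta_u+\delta_v-2(o_u+b_v+1)$.

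Next I would encode profitability. Each of $\u_i(\sp)<\u_i(\sp_{ij})$ and $\u_j(\sp)<\u_j(\sp_{ij})$ is a linear inequality in $o_u,b_v$; because $\delta_{\sigma(j)}>\delta_{\sigma(i)}$ the one coming from $j$ dominates the one coming from $i$, so it suffices to keep the former. In case~(i) it becomes $8o_u+3b_v<16$, whose nonnegative integer solutions are $o_u=0$ with $b_v\le5$, and $o_u=1$ with $b_v\le2$; in case~(ii) it becomes $8o_u+5b_v<32$, whose solutions are $o_u=0$ with $b_v\le6$, $o_u=1$ with $b_v\le4$, $o_u=2$ with $b_v\le3$, and $o_u=3$ with $b_v\le1$.

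Finally I would substitute each admissible pair into $\Phi(\sp_{ij})-\Phi(\sp)$. In case~(i) this quantity equals $9-2(o_u+b_v)$, which is positive on every admissible pair except $(o_u,b_v)=(0,5)$, where it equals $-1$; and $(0,5)$ is exactly the state $\u_i(\sp)=0$, $\u_j(\sp)=\frac58$. In case~(ii) it equals $11-2(o_u+b_v)$, positive everywhere except at $(o_u,b_v)=(0,6)$, where it equals $-1$, i.e.\ at $\u_i(\sp)=0$, $\u_j(\sp)=\frac68$. This is precisely the stated dichotomy. The only thing requiring care is the bookkeeping — confirming that the $j$-inequality really dominates and that the enumerated list of $(o_u,b_v)$ is complete — but since the degree gaps $8-3$ and $8-5$ are small, brute force settles everything and there is no conceptual obstacle; one may also note in passing that both configurations genuinely arise in an $8$-grid, as a corner is diagonally adjacent to an interior vertex and a non-corner border vertex is adjacent to interior vertices.
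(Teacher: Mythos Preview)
Your proposal is correct and follows essentially the same route as the paper's own proof: set up the utilities with $k=2$ and $1_{ij}(\sp)=1$, extract the linear inequality coming from profitability, and then do a finite case check against the potential change $\delta_u+\delta_v-2(o_u+b_v+1)$. The only cosmetic difference is that the paper fixes $o_u$ and bounds $b_v$ via the non-integer inequality before invoking integrality of $\Phi$, whereas you enumerate the admissible integer pairs $(o_u,b_v)$ directly; you are also slightly more explicit than the paper in noting that the $j$-inequality is the binding one (since both profitability constraints have the same left-hand side $\delta_v o_u+\delta_u b_v$ and the right-hand side $\delta_u\delta_v-\delta_v$ is smaller than $\delta_u\delta_v-\delta_u$ when $\delta_v>\delta_u$).
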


	\begin{proof}
		Assume, without loss of generality, that $c(i)$ is orange and $c(j)$ is blue; moreover, define $\sigma(i)=u$ and $\sigma(j)=v$. Let $o_u$ be the number of orange agents occupying vertices adjacent to~$u$ in $\sp$ and $b_v$ be the number of blue agents occupying vertices adjacent to $v$ in $\sp$.
		\begin{itemize}
			\item[{\em i)}] We have $$\u_i(\sp)=\frac{o_u}{3},\ \ \u_j(\sp)=\frac{b_v}{8}$$ and $$\u_i(\sp_{ij})=\frac{7-b_v}{8},\ \ \u_j(\sp_{ij})=\frac{2-o_u}{3}.$$ As $i$ and $j$ perform a profitable swap in $\sp$, we have $\u_i(\sp)<\u_i(\sp_{ij})$ and $\u_j(\sp)<\u_j(\sp_{ij})$ which imply
				\begin{equation}\label{blue}
					b_v<\frac{16}{3}-\frac{8}{3}o_u.
				\end{equation}
				Moreover, we have
				\begin{eqnarray*}
					\Phi(\sp_{ij})-\Phi(\sp) & = & 3-1-o_u+8-1-b_v-o_u-b_v =  9 -2o_u - 2b_v.
				\end{eqnarray*}
				Since $o_u$ is in the set $\{0,1,2\}$, we have the following cases:

				If $o_u = 0$, (\ref{blue}) implies $b_v<\frac{16}{3}$ which yields $\Phi(\sp_{ij})-\Phi(\sp) > \frac{-5}{3} $.

				If $o_u = 1$, (\ref{blue}) implies $b_v<\frac{8}{3}$ which yields $\Phi(\sp_{ij})-\Phi(\sp) > \frac{5}{3} $.

				If $o_u = 2$, $i$ and $j$ cannot perform a profitable local swap since $\u_j(\sp_{ij}) = 0$.

				Since $\Phi(\sp)$ is integral, the statement follows.

			\item[{\em ii)}] We have $$\u_i(\sp)=\frac{o_u}{5},\ \ \u_j(\sp)=\frac{b_v}{8}$$ and $$\u_i(\sp_{ij})=\frac{7-b_v}{8},\ \ \u_j(\sp_{ij})=\frac{4-o_u}{5}.$$  As $i$ and $j$ perform a profitable swap in $\sp$, we have $\u_i(\sp)<\u_i(\sp_{ij})$ and $\u_j(\sp)<\u_j(\sp_{ij})$ which imply
				\begin{equation}\label{red}
					b_v<\frac{32}{5}-\frac{8}{5}o_u.
				\end{equation}

				Moreover, we have
				\begin{eqnarray*}
					\Phi(\sp_{ij})-\Phi(\sp) & = & 5-1-o_u+8-1-b_v-o_u-b_v =  11 -2o_u - 2b_v.
				\end{eqnarray*}

				Since $o_u$ is in the set $\{0,1,2,3,4\}$, we have the following cases:
	
				If $o_u = 0$, (\ref{red}) implies $b_v<\frac{32}{5}$ which yields $\Phi(\sp_{ij})-\Phi(\sp) > \frac{-9}{5} $.

				If $o_u = 1$, (\ref{red}) implies $b_v<\frac{24}{5}$ which yields $\Phi(\sp_{ij})-\Phi(\sp) > \frac{-3}{5} $.

				If $o_u = 2$, (\ref{red}) implies $b_v<\frac{16}{5}$ which yields $\Phi(\sp_{ij})-\Phi(\sp) > \frac{3}{5} $.

				If $o_u = 3$, (\ref{red}) implies $b_v<\frac{8}{5}$ which yields $\Phi(\sp_{ij})-\Phi(\sp) > \frac{9}{5} $.

				If $o_u = 4$, $i$ and $j$ cannot perform a profitable local swap since $\u_j(\sp_{ij}) = 0$.

				Since $\Phi(\sp)$ is integral, we just have to show that, if $o_u = 1$, the swap is in fact not $\Phi$-preserving, but $\Phi$-increasing. 
				Notice that $b_v$ is an integer as well. Hence, since (\ref{red}) implies $b_v<\frac{24}{5}$, it holds that $b_v \leq 4$ which yields $\Phi(\sp_{ij})-\Phi(\sp) \geq 1$.
			\end{itemize}
		\end{proof}

	\begin{theorem}\label{FIP_grid}
		Any local $2$-SSG played on an 8-grid possesses the FIP. 
	\end{theorem}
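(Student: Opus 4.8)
The plan is to reduce, via Lemmas~\ref{findamental} and~\ref{potential_decreasing}, to a classification of all profitable local swaps by the degrees of the two vertices involved, and then to argue that any sequence of profitable local swaps contains only finitely many of the ``harmful'' swaps (those not increasing $\Phi$), after which $\Phi\in\{0,\dots,m\}$ forces termination.

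First I would record that in an $8$-grid every vertex has degree $3$ (the four corners), $5$ (the other border vertices) or $8$ (the interior vertices); if $\min\{l,h\}\leq 2$ there are no degree-$8$ vertices, and if $l=h=2$ the graph is regular and we are done by~\cite{E+19}, so the interesting case has degrees in $\{3,5,8\}$. For a profitable local swap by $i,j$ with $\delta_{\sigma(i)}\leq\delta_{\sigma(j)}$ the two degrees differ by at most $2$ unless the pair is $(3,8)$ or $(5,8)$. Lemma~\ref{findamental} then makes every swap $\Phi$-increasing except, possibly, a $(3,5)$-swap; a short inspection of the proof of Lemma~\ref{findamental} (using $1_{ij}(\sp)=1$) shows that such a $(3,5)$-swap is $\Phi$-preserving only if the degree-$3$ endpoint holds a utility-$0$ agent. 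Lemma~\ref{potential_decreasing} handles the remaining pairs $(3,8)$ and $(5,8)$ and likewise shows these swaps are $\Phi$-increasing unless they are $\Phi$-decreasing by exactly $1$, which again forces the lower-degree endpoint to hold a utility-$0$ agent. Thus every profitable local swap is $\Phi$-increasing or \emph{critical}: a critical swap changes $\Phi$ by $0$ or $-1$, its lower-degree endpoint $u$ (a corner or a border vertex) carries a utility-$0$ agent before the swap, and afterwards $u$ carries an agent of utility strictly above $\tfrac12$.

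Next I would establish two facts about critical swaps. (1) \emph{Shielding:} a critical swap turns no low-degree vertex into a utility-$0$ vertex. If $u,v$ are its endpoints, the hypothesis that $u$'s agent has utility $0$ forces every neighbour of $u$ --- in particular every common neighbour of $u$ and $v$ --- to have the colour $v$ had before the swap; a short case analysis over the admissible positions of $u$ and $v$ in the grid (corner--interior, border--interior, corner--border, and the corresponding configurations when $\min\{l,h\}=2$) shows that every low-degree neighbour of $v$ is either a common neighbour of $u,v$, on which the two colour flips cancel, or is adjacent to one, which keeps its colour, hence it retains a same-colour neighbour. (2) \emph{No free zero-utility agent:} after any profitable swap both participating agents have positive utility, so a low-degree vertex holding a positive-utility agent can acquire a utility-$0$ agent only after one of its neighbours changes colour. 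Using (1), between two consecutive $\Phi$-changing swaps of a sequence only $\Phi$-preserving critical swaps occur; each repairs a corner and, by (1), breaks no low-degree vertex, so there are at most $4$ of them. Since $\Phi$ stays in $\{0,\dots,m\}$, the number of $\Phi$-increasing swaps is at most $m$ plus the number of $\Phi$-decreasing critical swaps. Hence the whole sequence is finite once we know that it contains only finitely many $\Phi$-decreasing critical swaps.

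The main obstacle is exactly this last step. A $\Phi$-decreasing critical swap always involves one of the finitely many low-degree vertices (the four corners for the $(3,8)$-type, a border vertex for the $(5,8)$-type), so if there were infinitely many such swaps, some fixed low-degree vertex $u$ would host infinitely many of them; by (2) each re-occurrence at $u$ must be preceded by colour changes on all same-colour neighbours of $u$'s current (high-utility) agent, i.e.\ by several swaps incident to $N_u$. Turning this local pressure into a genuine monovariant --- for instance a suitable refinement of the extended potential $\Psi$ that is non-decreasing on $\Phi$-increasing and on $\Phi$-preserving swaps and drops by only a bounded amount on the (to be shown finitely many) $\Phi$-decreasing critical swaps --- is the delicate part; everything before it is routine bookkeeping around the potential $\Phi$ and the degree structure of the $8$-grid.
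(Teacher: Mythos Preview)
Your setup is correct and matches the paper's: Lemmas~\ref{findamental} and~\ref{potential_decreasing} reduce everything to the three degree pairs $(3,5)$, $(3,8)$, $(5,8)$, and the only non-$\Phi$-increasing local swaps are exactly the ``critical'' ones you describe, with a utility-$0$ agent sitting on the low-degree endpoint. However, the proposal is not a proof: you explicitly leave the decisive step --- showing that a sequence cannot contain infinitely many $\Phi$-decreasing critical swaps --- unproved, calling it ``the delicate part''. That is precisely where all the work lies; in the paper this step alone occupies several pages of case analysis.

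There is also a gap in your ``shielding'' claim (1). The geometric assertion that every low-degree neighbour of $v$ is either a common neighbour of $u,v$ or adjacent to one is false in general. For instance, on a $3\times h$ $8$-grid take $u=(1,3)$ (border, degree $5$) and $v=(2,4)$ (interior, degree $8$), a diagonal pair; the common neighbours are $(1,4)$ and $(2,3)$, yet the border vertex $(3,5)$ is a neighbour of $v$ adjacent to neither. The conclusion that $(3,5)$ cannot drop to utility $0$ after the swap happens to survive here, but only because the constraint $\u_j(\sp)=\tfrac{6}{8}$ forces enough of $v$'s other neighbours to be blue --- a utility argument, not the adjacency argument you sketched. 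So even the bookkeeping part needs more than the stated geometry.

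The paper does not attempt a clean monovariant. Instead it fixes the low-degree vertex $u$ involved in a critical swap and tracks, case by case, what must happen before $u$ can host another critical swap: all of $N_u\setminus\{v\}$ must change colour, and for each admissible configuration of $u,v$ and their neighbours one checks (using Lemma~\ref{potential_decreasing} repeatedly) which of these intermediate swaps are forced to be $\Phi$-increasing. The accounting is done with the extended potential $\Psi=(\Phi,n-z)$: one shows that the total lexicographic change in $\Psi$ over the cycle ``critical swap at $u$ $\rightarrow$ recolouring of $N_u$'' is strictly positive. This is the content of the long case distinctions (i)--(iii), with subcases 2a--2e in (ii), and there is no shortcut visible that avoids them.
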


	\begin{proof}
		As shown in Lemma~\ref{findamental} and Lemma~\ref{potential_decreasing}, there are only a few local swaps which can preserve or decrease the potential $\Phi$ and all of them decrease it by at most $1$. We will show that after such a $\Phi$-preserving or $\Phi$-decreasing swap a number of swaps must happen before at the same pair of vertices another $\Phi$-preserving or $\Phi$-decreasing swap can occur. We will show that in total the extended potential $\Psi$ increases which implies the FIP. Remember, that the extended potential $\Psi$ is simply a more fine-grained version of the potential $\Phi$. Thus, for simplicity, for some parts of the proof we will simply work with $\Phi$ instead of $\Psi$.
	
		By Lemma~\ref{findamental}, we know that any profitable swap occurring in a strategy profile $\sp$ is $\Phi$-increasing, unless it involves two agents $i$ and $j$ occupying vertices $\sigma(i)=u$ and $\sigma(j)=v$ with $\delta_u\neq\delta_v$. We assume, without loss of generality, $\delta_u <\delta_v$ and that $c(i)$ is orange and $c(j)$ is blue. Therefore, we only have to consider the following cases: 
		\begin{itemize}
			\item[{\em i)}] $\delta_u = 3$ and $\delta_v = 5$.
			\item[] By Lemma~\ref{findamental}, we know that we have a $\Phi$-increasing swap unless $\u_j(\sp) \in \left(\frac12,1\right)$, in which case we may have a $\Phi$-preserving swap. If this happens, as $\delta_v = 5$ and $\u_j(\sp_{ij}) > \u_j(\sp) > \frac12$, it must be that $\u_j(\sp) = \frac35$ and $\u_j(\sp_{ij})=\frac23$ which imply that, in $\sp$, all vertices adjacent to $u$ are occupied by blue agents, so $\u_i(\sp)=0$. Hence, in $\sp_{ij}$, all vertices in $N_u\setminus\{v\}$ are occupied by blue agents. Thus, in order for agent $j$ (occupying vertex $u$ in $\sp_{ij}$) to be involved once again in a $\Phi$-preserving profitable swap, all vertices in $N_u\setminus\{v\}$ need to change their colors. 
		
			Consider $\sp_{ij}$ in Figure~\ref{local_8grid_1}.
			For all $w \in  N_u\setminus\{v\}$, we have $\u_{\sp_{ij}^{-1}(w)}(\sp_{ij}) > 0$. Let $w_1$ and~$w_2$ be the unique vertices in $N_u\setminus\{v\}$ with $\delta_{w_1} = 5$ and $\delta_{w_2} = 8$, respectively. For vertex~$w_1$ to 	change its color, the agent occupying $w_1$ in $\sp_{ij}$ can either swap with another agent occupying a vertex $z_1$ with $\delta_{z_1} = 5$ which is $\Phi$-increasing by Lemma~\ref{findamental} or with an agent on $z_2$ with $\delta_{z_2} = 8$ which is $\Phi$-increasing by Lemma~\ref{potential_decreasing}. Also a swap with $j$ is not possible, since the agent occupying $w_1$ has the same color as $j$.
		
      	  \begin{figure}[t]
      		  \centering
				\begin{subfigure}[c]{0.18\textwidth}
					\includegraphics[width=0.8\textwidth]{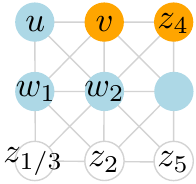}
					\subcaption{~}
					\label{local_8grid_1}
				\end{subfigure}
				~~~~~~
				\begin{subfigure}[c]{0.16\textwidth}
					\includegraphics[width=0.8\textwidth]{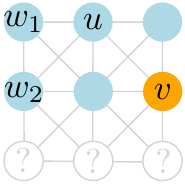}
					\subcaption{~}
					\label{local_8grid_4}
				\end{subfigure}
				~~~~~~
				\begin{subfigure}[c]{0.16\textwidth}
					\includegraphics[width=0.8\textwidth]{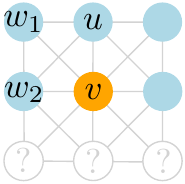}
					\subcaption{~}
					\label{local_8grid_4a}
				\end{subfigure}	
			\caption{The coloring of $G$ in $\sp_{ij}$ after a $\Phi$-preserving or $\Phi$-decreasing swap of agents $i$ and $j$ occupying vertices $u$ and $v$. Vertices with question marks which are neither blue nor orange can be occupied by an agent of any type. Symmetric cases are omitted.}
			\label{local_8grid_case1}
		\end{figure}
    
		If the agent on $w_1$ performs a $\Phi$-preserving swap with an agent on $z_3$ with $\delta_{z_3} = 3$ this implies that the vertex $w_2$ is surrounded by at least $3$ orange agents which implies that $\u_{\sp_{ij}^{-1}(w_2)} \leq \frac{5}{8}$. Hence, by Lemma~\ref{potential_decreasing}, the agent occupying $w_2$ can only perform a $\Phi$-increasing swap with an agent on a vertex with degree $5$. Also, again by Lemma~\ref{potential_decreasing}, any swap with an agent on a vertex with degree $3$ must be $\Phi$-increasing, since either this swap is with an orange agent on $z_4$, that is a degree $3$ neighbor of $v$, or with an orange agent on $z_5$, which is the remaining possible degree $3$ vertex in $w_2$'s neighborhood. A swap with an agent on $z_4$ must be $\Phi$-increasing since, as $v$ is occupied by an orange agent, the agent on $z_4$ has non-zero utility. If $z_5$ is occupied with an orange agent, then the agent on $w_2$ has a utility of at most $\frac{4}{8}$ since $v$, $z_3$, $z_4$ and $z_5$ are occupied by orange type agents.   
		
		Thus, in order for an agent occupying $u$ to perform once again a profitable $\Phi$-preserving swap, a profitable $\Phi$-increasing swap has to occur.
		
		\item[{\em ii)}] $\delta_u = 5$ and $\delta_v = 8$.
		\item[] By Lemma~\ref{potential_decreasing} we know that we have a $\Phi$-decreasing swap by $1$ if and only if $\u_i(\sp) = 0$ and $\u_j(\sp) = \frac{6}{8}$, which implies that all agents occupying vertices adjacent to $u$ in $\sp$ are blue. Thus, in order for agent $j$ (occupying vertex $u$ in $\sp_{ij}$) to be involved once again in a $\Phi$-decreasing profitable swap, all vertices in $N_u\setminus\{v\}$ need to change their colors. Note that by Lemma~\ref{potential_decreasing}, a $\Phi$-preserving swap for agent $j$ is impossible.
		We distinguish several cases:
				\textbf{Case 1.} In the first case, we assume that $w_1 \in  N_u\setminus\{v\}$ is a corner vertex, i.e., $\delta_{w_1} = 3$. Let $w_2 \in N_u\setminus\{v\}$ be a vertex adjacent to $w_1$ with $\delta_{w_2} = 5$, cf. Figure~\ref{local_8grid_4} and Figure~\ref{local_8grid_4a}. Notice, that the agents occupying $w_1$ and $w_2$ have utility $\u_{\sp_{ij}^{-1}(w_1)}(\sp_{ij}) > 0$ and $\u_{\sp_{ij}^{-1}(w_2)}(\sp_{ij}) > 0$, respectively, since both neighboring vertices are occupied by blue agents. Hence, the agent occupying vertex $w_1$ must perform at least two $\Phi$-increasing swaps to leave the neighborhood of $u$, which is necessary in order for agent $j$ to perform once again a profitable $\Phi$-decreasing swap.
				
				\textbf{Case 2.} In the second case, we assume that $w_1 \in  N_u\setminus\{v\}$ is a border vertex, i.e., $\delta_{w_1} = 5$. Let $w_2 \in N_u\setminus\{v\}$ be a vertex adjacent to $w_1$ with $\delta_{w_2} = 8$. Notice, that the agents occupying $w_1$ and $w_2$ have utility $\u_{\sp_{ij}^{-1}(w_1)}(\sp_{ij}) > 0$ and $\u_{\sp_{ij}^{-1}(w_2)}(\sp_{ij}) > 0$, respectively. Let $w_1'$, $w_2'$ and $w_3'$ be adjacent to $w_1$ and $w_2$ as depicted in Figure~\ref{local_8grid_5} and Figure~\ref{local_8grid_5a}.
   				 \begin{figure}[t]
   					\centering
					\begin{subfigure}[c]{0.22\textwidth}
						\includegraphics[width=0.8\textwidth]{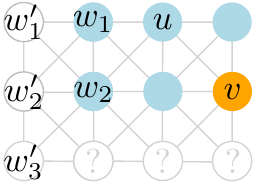}
						\subcaption{~}
						\label{local_8grid_5}
					\end{subfigure}
					~~~~~~
					\begin{subfigure}[c]{0.22\textwidth}
						\includegraphics[width=0.8\textwidth]{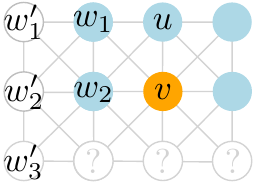}
						\subcaption{~}
						\label{local_8grid_5a}
					\end{subfigure}	
					\caption{The coloring of $G$ in $\sp_{ij}$ after a $\Phi$-decreasing swap of agents $i$ and $j$ occupying vertices $u$ and $v$. The question mark means that the vertex can be occupied by an agent of any type. Symmetric cases are omitted.}
					\label{local_8grid_case2}
				\end{figure}
				
				\textit{Case 2a.} The agents occupying vertices $w_1'$, $w_2'$ and $w_3'$ have utility $\u_{\sp_{ij}^{-1}(w_1')}(\sp_{ij}) > 0$, $\u_{\sp_{ij}^{-1}(w_2')}(\sp_{ij}) > 0$ and $\u_{\sp_{ij}^{-1}(w_3')}(\sp_{ij}) > 0$. In this case, by Lemma~\ref{potential_decreasing}, the agents occupying $w_1$ and $w_2$ cannot leave the neighborhood of $u$ via a $\Phi$-preserving or $\Phi$-decreasing swap. Both must perform $\Phi$-increasing swaps which increases $\Phi$ by at least $2$.

				\textit{Case 2b.} The agent occupying $w_1'$ has utility $\u_{\sp_{ij}^{-1}(w_1')}(\sp_{ij}) = 0$ and the agent residing on $w_3'$ has utility $\u_{\sp_{ij}^{-1}(w_3')}(\sp_{ij}) > 0$. From the former, it follows that the agent on $w_1'$ must be orange and the agent occupying $w_2'$ must be blue. Moreover, the agent on $w_2'$ must have non-zero utility. See Figure~\ref{local_8grid_5_2b} and Figure~\ref{local_8grid_5a_2b}.
				\begin{figure}[h]
    				\centering
					\begin{subfigure}[c]{0.2\textwidth}
						\includegraphics[width=0.8\textwidth]{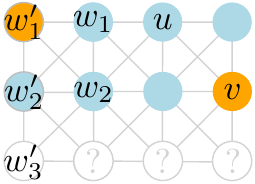}
						\subcaption{~}
						\label{local_8grid_5_2b}
					\end{subfigure}
					~~~
					\begin{subfigure}[c]{0.2\textwidth}
						\includegraphics[width=0.8\textwidth]{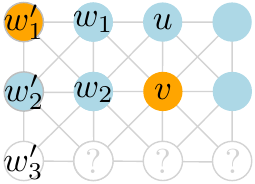}
						\subcaption{~}
						\label{local_8grid_5a_2b}
					\end{subfigure}	
					~~~
					\begin{subfigure}[c]{0.2\textwidth}
						\includegraphics[width=0.8\textwidth]{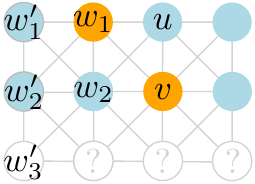}
						\subcaption{~}
						\label{local_8grid_5a_2ba}
					\end{subfigure}	
					~~~
					\begin{subfigure}[c]{0.2\textwidth}
						\includegraphics[width=0.8\textwidth]{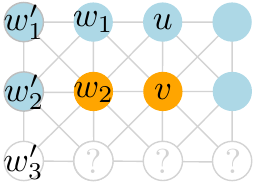}
						\subcaption{~}
						\label{local_8grid_5a_2bb}
					\end{subfigure}	
					\caption{(a) and (b): the coloring of $G$ in $\sp_{ij}$ after a $\Psi$-decreasing swap by $(-1,+1)$ of agents~$i$ and $j$ occupying vertices $u$ and $v$ in case 2b. Symmetric cases are omitted. (c) and (d) show~$\sp_1$ and $\sp_2$, respectively, after a swap starting with the situation in (b).}
					\label{local_8grid_case2_2b}
				\end{figure}
				
				We analyze this case by focusing on the extended potential $\Psi$. Remember that $\Psi$ is essentially $\Phi$ with the number of agents having utility $0$ as a tie-breaker. Since $\Psi$ is a vector, we denote the change in $\Psi$ by a profitable swap as $(\lambda,\mu)$ with $\lambda,\mu \in \{+1,=,-1\}$, where $\lambda$ denotes the change in $\Phi$ and $\mu$ denotes the change in $n-z(\cdot)$. Note that the swap from $\sp$ to $\sp_{ij}$ yields a change in $\Psi$ of $(-1,+1)$. We will now show that for vertex $u$ to be surrounded again by agents of the other color, $\Psi$ will in total increase lexicographically. In particular, it suffices to focus on the change in $\Psi$ induced by vertices $w_1$ and $w_2$ becoming occupied by an orange agent. Clearly, if both swaps are $\Phi$-increasing, then also $\Psi$ increases lexicographically. Hence, we focus on the cases where one of these swaps is not $\Phi$-increasing.  
				
				If $\delta_{w_1'} = 3$ then the agent occupying $w_1$ can be involved in a $\Phi$-preserving swap with the agent on $w_1'$. This swap yields a change in $\Psi$ of $(=,+1)$ since both agents have non-zero utility after the swap. This results in $\sp_1$, see Figure~\ref{local_8grid_5a_2ba}. However, the agent on $w_2$ has non-zero utility since $u$ is in its neighborhood and all agents in her neighborhood on vertices with degree $3$ or $5$ have non-zero utility. Thus, by Lemma~\ref{potential_decreasing}, the agent on $w_2$ can only perform a profitable swap which changes $\Psi$ by $(+1,=)$. In total, $\Psi$ must change by at least $(=,+1)$ which implies a lexicographic increase. 
				
				The only other case is that the agent occupying $w_2$ can be involved in a $\Phi$-decreasing swap with the agent residing on $w_1'$ in both cases $\delta_{w_1'} = 3$ or $\delta_{w_1'} = 5$. Let $\sp_2$ denote the corresponding strategy profile, see Figure~\ref{local_8grid_5a_2bb}. Note that this swap changes $\Psi$ by $(-1,+1)$. Now there are two ways of how $w_1$ can become occupied by an orange agent. First, if the agents on $w_1$ and $w_2$ or $v$ swap, then, by Lemma~\ref{potential_decreasing}, $\Psi$ changes by $(+1,=)$. After this swap, the blue agent on $w_2$ or $v$ with non-zero utility has to perform another swap with an orange agent, which changes $\Psi$ again by $(+1,=)$. The second way of $w_1$ becoming occupied by an orange agent is that first vertex $w_1'$ or $w_2'$ becomes occupied by an orange agent and then this agent swaps with the agent on $w_1$. However, both these swaps each change $\Psi$ by $(+1,=)$. In total, $\Psi$ lexicographically changes by at least $(-1,+1)$, $(-1,+1)$, $(+1,=)$ and $(+1,=)$ which in total yields a lexicographic increase.  
				
				\textit{Case 2c.} The agent residing $w_2'$ has utility $\u_{\sp_{ij}^{-1}(w_2')}(\sp_{ij}) = 0$. It follows that the agent on $w_2'$ is orange and all neighboring agents must be blue. Moreover, the agents on $w_1'$ and $w_3'$ must have non-zero utility. See Figure~\ref{local_8grid_5_2c} and Figure~\ref{local_8grid_5a_2c}. 
				\begin{figure}[h]
  				  	\centering
					\begin{subfigure}[c]{0.2\textwidth}
						\includegraphics[width=0.8\textwidth]{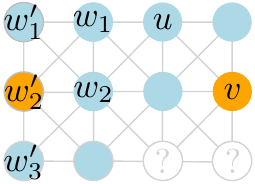}
						\subcaption{~}
						\label{local_8grid_5_2c}
						\end{subfigure}
						~~~~~~
						\begin{subfigure}[c]{0.2\textwidth}
							\includegraphics[width=0.8\textwidth]{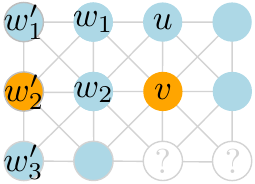}
							\subcaption{~}
							\label{local_8grid_5a_2c}
						\end{subfigure}	
						~~~~~~
						\begin{subfigure}[c]{0.2\textwidth}
							\includegraphics[width=0.8\textwidth]{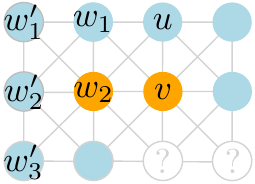}
							\subcaption{~}
							\label{local_8grid_5a_2ca}
						\end{subfigure}	
						\caption{(a) and (b): the coloring of $G$ in $\sp_{ij}$ after a $\Psi$-decreasing swap by $(-1,+1)$ of agents $i$ and $j$ occupying vertices $u$ and $v$ in case 2c. Symmetric cases are omitted. (c) shows $\sp_3$ after a swap starting with the situation in (a) or (b).}
						\label{local_8grid_case2_2c}
				\end{figure}
				
				If $\delta_{w_2'} = 8$, the agents occupying $w_1$ and $w_2$ cannot be involved in a $\Phi$-preserving or $\Phi$-decreasing swap and therefore both must perform $\Phi$-increasing swaps to leave the neighborhood of $u$. In total this yields an increase in $\Phi$ and thus also in $\Psi$. 
				
				If $\delta_{w_2'} = 5$ then the agent occupying $w_2$ can be involved in a $\Phi$-decreasing swap with the agent on $w_2'$. Note that this swap changes $\Psi$ by $(-1,+1)$ and let $\sp_3$ denote the resulting strategy profile, see Figure~\ref{local_8grid_5a_2ca}. This yields a similar situation as in $\sp_2$ in Figure~\ref{local_8grid_5a_2bb} and we can argue analogously that at least two swaps which each change $\Psi$ by $(+1,=)$ must happen. In total, $\Psi$ increases lexicographically.
				
				\textit{Case 2d.} The agents occupying $w_1'$ and $w_3'$ have utility $\u_{\sp_{ij}^{-1}(w_1')}(\sp_{ij}) = 0$ and $\u_{\sp_{ij}^{-1}(w_3')}(\sp_{ij}) = 0$. This implies that the agents on $w_1'$ and $w_3'$ must be orange, the agent on $w_2'$ must be blue and that the latter has non-zero utility. See Figure~\ref{local_8grid_5_2d} and Figure~\ref{local_8grid_5a_2d} 
				\begin{figure}[t]
    \centering
	\begin{subfigure}[c]{0.2\textwidth}
	\includegraphics[height=2.5cm]{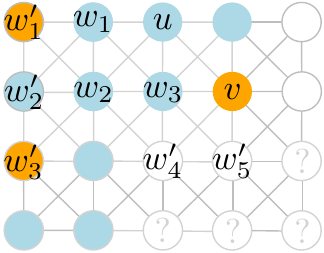}
	\subcaption{~}
	\label{local_8grid_5_2d}
\end{subfigure}
~~~
\begin{subfigure}[c]{0.2\textwidth}
	\includegraphics[height=2.5cm]{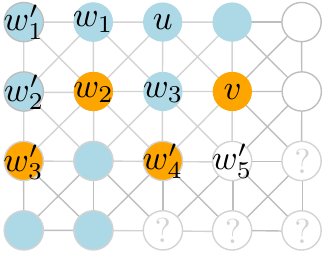}
	\subcaption{~}
	\label{local_8grid_5_2da}
\end{subfigure}	
~~~
\begin{subfigure}[c]{0.2\textwidth}
	\includegraphics[height=2.5cm]{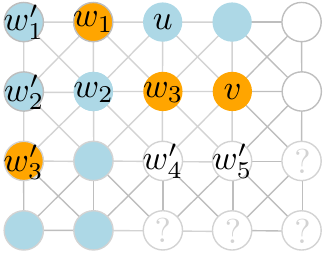}
	\subcaption{~}
	\label{local_8grid_5_2db}
\end{subfigure}	
~~~
\begin{subfigure}[c]{0.22\textwidth}
	\includegraphics[height=2.5cm]{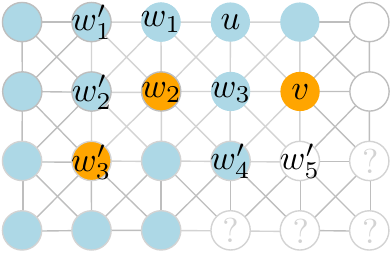}
	\subcaption{~}
	\label{local_8grid_5_2dc}
\end{subfigure}	
	\caption{(a) the coloring of $G$ in $\sp_{ij}$ after a $\Psi$-decreasing swap by $(-1,+1)$ of agents $i$ and $j$ occupying vertices $u$ and $v$ in case 2d. (b) the strategy profile $\sp_4$. (c) the strategy profile $\sp_5$. (d) the strategy profile $\sp_6$. Symmetric cases are omitted.}
		\label{local_8grid_case2_2d}
	\end{figure}

				We consider the settings in Figure~\ref{local_8grid_5_2d} and Figure~\ref{local_8grid_5a_2d} separately. 
				
				We start with the setting in Figure~\ref{local_8grid_5_2d}.\\ Let $\delta_{w_1'} = 3$. In this case a swap of the agents on $w_1'$ and $w_1$ is not profitable. However, by Lemma~\ref{potential_decreasing}, the agent on $w_1'$ could perform a $\Phi$-decreasing swap with the agent on $w_2$ if and only if the agent on $w_4'$ is orange. This swap would change $\Psi$ by $(-1,+1)$ and yields strategy profile $\sp_4$ depicted in Figure~\ref{local_8grid_5_2da}. Now for the agent on $w_1$ to become orange, at least two $\Phi$-increasing swaps are necessary: a swap with the agent on $w_2$ is not profitable, so at least one of the vertices $w_1',w_2',w_3$ must become occupied by an orange agent before a swap with the agent on $w_1$ is possible. Thus, in total $\Psi$ increases, since there are at least two $\Phi$-increasing swaps necessary. In the setting in Figure~\ref{local_8grid_5_2d}, a swap between the agents on $w_1'$ and $w_1$ can only be $\Phi$-neutral, if vertex $w_3$ becomes occupied by an orange agent and vertices $w_2'$ and $w_2$ remain occupied by blue agents. In this case, the swap of the blue agent on $w_3$ must be $\Phi$-increasing. However, this is not a profitable swap for an orange agent, since an agent occupying vertex $w_4'$ or $w_5'$ will not gain additional orange neighbors by swapping to $w_3$. Hence, the strategy profile $\sp_5$ depicted in Figure~\ref{local_8grid_5_2db} is not possible.\\ 
				If $\delta_{w_1'} = 5$, then the agent occupying $w_2$ can be involved in a profitable swap with the agent on $w_1'$ which decreases $\Psi$ by $(-1,+1)$, but, by Lemma~\ref{potential_decreasing}, only if $w_4'$ is occupied by a blue agent. After the swap we get the strategy profile $\sp_6$ depicted in Figure~\ref{local_8grid_5_2dc}. Now, the agent on $w_1$ is in an analogous situtation as in $\sp_4$ depicted in Figure~\ref{local_8grid_5_2da}. By analogous reasoning, at least two $\Phi$-increasing swaps must happen so that vertex $w_1$ can become occupied by an orange agent. This implies that in total $\Psi$ increases. 
            
				\begin{figure}[t]
    \centering
\begin{subfigure}[c]{0.2\textwidth}
	\includegraphics[width=0.8\textwidth]{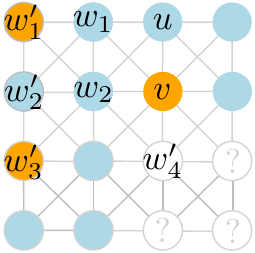}
	\subcaption{~}
	\label{local_8grid_5a_2d}
\end{subfigure}	
~~~
\begin{subfigure}[c]{0.2\textwidth}
	\includegraphics[width=0.8\textwidth]{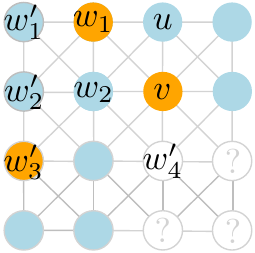}
	\subcaption{~}
	\label{local_8grid_5a_2da}
\end{subfigure}	
~~~
\begin{subfigure}[c]{0.2\textwidth}
	\includegraphics[width=0.8\textwidth]{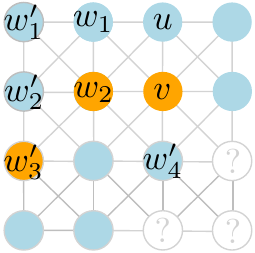}
	\subcaption{~}
	\label{local_8grid_5a_2db}
\end{subfigure}	
~~~
\begin{subfigure}[c]{0.2\textwidth}
	\includegraphics[width=0.8\textwidth]{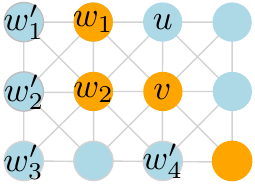}
	\hspace*{5cm}
	\subcaption{}
	\label{local-8grid_5a_2dbe}
\end{subfigure}	
	\caption{(a) the other option for the coloring of $G$ in $\sp_{ij}$ after a $\Psi$-decreasing swap by $(-1,+1)$ of agents $i$ and $j$ occupying vertices $u$ and $v$ in case 2d. (b) the strategy profile $\sp_7$. (c) the strategy profile $\sp_8$. (d) the strategy profile $\sp_9$. Symmetric cases are omitted.}
		\label{local_8grid_case2_2db}
	\end{figure}
				
				Next, we consider the setting depicted in Figure~\ref{local_8grid_5a_2d}.\\
				Let $\delta_{w_1'} = 3$ and $\delta_{w_3'} = 5$. By Lemma~\ref{findamental}, a swap by the agents on vertices $w_1'$ and $w_1$ changes $\Psi$ by $(=,+1)$ and leads to the strategy profile $\sp_7$ depicted in Figure~\ref{local_8grid_5a_2da}. Now, note that since the agent on $w_2$ has a utility of at most $\frac{5}{8}$, a swap with the agent on $w_3'$ must be $\Phi$-increasing, which in total yields an increase in $\Psi$. Another possibility is that in $\sp_{ij}$ depicted in Figure~\ref{local_8grid_5a_2d} the agents on $w_1'$ and $w_2$ swap. By Lemma~\ref{potential_decreasing}, this swap changes $\Psi$ by $(-1,+1)$ if and only if $w_4'$ is occupied by a blue agent. Let $\sp_8$ be the resulting strategy profile which is depicted in Figure~\ref{local_8grid_5a_2db}. Now, note that the agent on $w_1$ is in a similar situation as the agent on $w_1$ in $\sp_2$ in Figure~\ref{local_8grid_5a_2bb}. With an analogous reasoning we get that at least two $\Phi$-increasing swaps must happen so that $w_1$ becomes occupied by an orange agent. In total we get an increase in $\Psi$.\\
				Let $\delta_{w_1'} = 5$ and thus $\delta_{w_3'} = 5$ or $\delta_{w_3'} = 8$. In this case, since the agent on $w_2$ has utility of at most $\frac{5}{8}$ and by Lemma~\ref{potential_decreasing} no $\Phi$-decreasing swaps involving the agents on $w_1$ and $w_2$ are possible. Thus, in total at least two $\Phi$-increasing swaps must occur so that $w_1$ and $w_2$ become occupied by orange agents which implies a total increase in $\Phi$ and thus also in $\Psi$. 
				
				The last remaining situation in the setting depicted in Figure~\ref{local_8grid_5a_2d} is that both $w_1'$ and $w_3'$ are corner vertices, hence, $\delta_{w_1'} = 3$ and $\delta_{w_3'} = 3$. By Lemma~\ref{findamental}, a swap by the agents on vertices $w_1'$ and $w_1$ changes $\Psi$ by $(=,+1)$ and leads to the strategy profile similar to $\sp_7$ depicted in Figure~\ref{local_8grid_5a_2da}. If the agent on $w_2$ has a utility of $\frac{5}{8}$ a swap with the agent on $w_3'$  changes $\Psi$ by $(-1,+1)$ and leads to the strategy profile $\sp_9$ depicted in Figure~\ref{local-8grid_5a_2dbe}. To be situated in the same situation, that the agents occupying $u$, $w_1'$ and $w_3'$ are involved in $\Phi$-decreasing or $\Phi$-preserving swaps, the agent $w_2'$ has to perform two $\Phi$-decreasing swaps to leave the neighborhood of $w_1'$ and $w_3'$. In total $\Psi$ must change by at least $(=,+3)$ which implies a lexicographic increase.
				
				\textit{Case 2e.} The agent occupying $w_3'$ has utility $\u_{\sp_{ij}^{-1}(w_3')}(\sp_{ij}) = 0$ and the agents on $w_1'$ and $w_2'$ have non-zero utility. It follows that $w_3'$ and $w_2'$ must be occupied by an orange and a blue agent, respectively. If $\delta_{w_1'} = 3$, then also $w_1'$ must have a blue agent. Otherwise, the agent on $w_1'$ can also be orange but then it must have another neighboring orange neighbor. See Figure~\ref{local_8grid_case2_2e} for all possible settings.
				\begin{figure}[h]
					\hspace*{1em}
    \centering
\begin{subfigure}[c]{0.18\textwidth}
	\includegraphics[height=2.5cm]{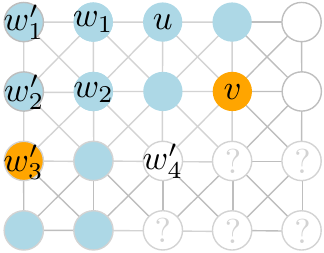}
	\subcaption{~}
	\label{local_8grid_5_2e}
\end{subfigure}	
~~
\begin{subfigure}[c]{0.22\textwidth}
	\includegraphics[height=2.5cm]{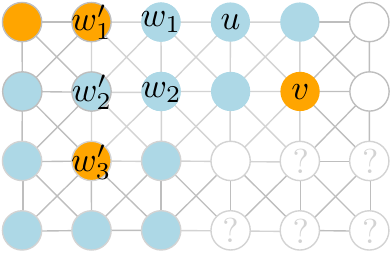}
	\subcaption{~}
	\label{local_8grid_51_2e}
\end{subfigure}	
~~
\begin{subfigure}[c]{0.18\textwidth}
	\includegraphics[height=2.5cm]{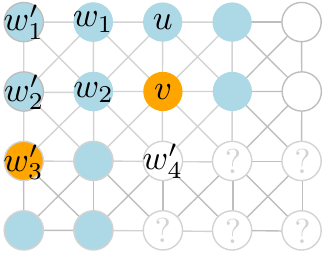}
	\subcaption{~}
	\label{local_8grid_5a_2e}
\end{subfigure}	
~~
\begin{subfigure}[c]{0.22\textwidth}
	\includegraphics[height=2.5cm]{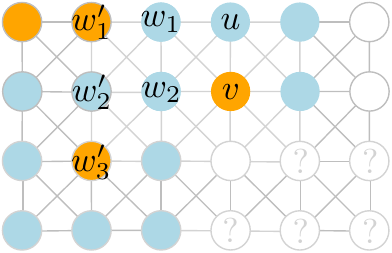}
	\subcaption{~}
	\label{local_8grid_5a1_2e}
\end{subfigure}	
	\caption{Possible strategy profiles $\sp_{ij}$ in case 2e. Symmetric cases are omitted.}
		\label{local_8grid_case2_2e}
	\end{figure}

Let $\delta_{w_3'} = 3$. Then, by Lemma~\ref{potential_decreasing}, a profitable swap of the agents $w_3'$ and $w_2$ which decreases $\Psi$ by $(-1,+1)$ is possible, if the agent on $w_4'$ is orange so that the agent on $w_2$ has utility $\frac{5}{8}$, cf.  Figure~\ref{local_8grid_5a_2e}. After this swap we get a strategy profile which, from the point of view of the agent on $w_1$, is analogous to $\sp_2$ in Figure~\ref{local_8grid_5a_2bb}. Hence, at least two $\Phi$-increaing swaps are necessary so that vertices $v$, $w_1$ and $w_2$ become occupied by orange agents. Thus, in total $\Psi$ increases.
	
	Let $\delta_{w_3'} = 5$ (cf. Figure~\ref{local_8grid_5_2e} and Figure~\ref{local_8grid_5a_2e}). Then, by Lemma~\ref{potential_decreasing}, a profitable swap of the agents on $w_3'$ and $w_2$ which decreases $\Psi$ by $(-1,+1)$ is possible, if the agent on $w_4'$ has a suitable type so that the agent on $w_2$ has utility $\frac{6}{8}$. After this swap we get a strategy profile which, from the point of view of the agent on $w_1$, is analogous to $\sp_2$ in Figure~\ref{local_8grid_5a_2bb} or $\sp_4$ in Figure~\ref{local_8grid_5_2da}. In both cases at least two $\Phi$-increaing swaps are necessary so that vertices $v$, $w_1$ and $w_2$ become occupied by orange agents. Thus, in total $\Psi$ increases.
	
	Let $\delta_{w_3'} = 8$ (cf. Figure~\ref{local_8grid_51_2e} and Figure~\ref{local_8grid_5a1_2e}). In this case no $\Phi$-decreasing or $\Phi$-preserving swaps which involve the agents on $w_1$ or $w_2$ are possible. Thus, at least two $\Phi$-increasing swaps must happen so that $w_1$ and $w_2$ become occupied by orange agents. Hence, in total $\Phi$ and thus also $\Psi$ increases. 
	
	Since we have completed all possible combinations for agents with zero utility on the vertices $w_1',w_2',w_3'$ this finishes case (ii).

		\item[iii)] $\delta_u(G) = 3$ and $\delta_v(G) = 8$
		\item[] By Lemma~\ref{potential_decreasing} we know that we have a $\Phi$-decreasing swap by $1$ if and only if $\u_i(\sp) = 0$ and $\u_j(\sp) = \frac{5}{8}$. This implies that all vertices adjacent to $u$ are occupied by blue agents. Thus, in order for agent $j$ (occupying vertex $u$ in $\sp_{ij}$) to be involved once again in a $\Phi$-decreasing profitable swap, all vertices in $N_u\setminus\{v\}$ must become occupied by orange agents. 
		
		Consider $\sp_{ij}$ in Figure~\ref{local_8grid_2}. Notice, that all neighboring vertices of $w_1$ and $w_2$ must be occupied by agents with non-zero utility, since $v$ is occupied by the orange agent $i$ in $\sp_{ij}$. Hence, no neighboring agent of $w_1$ and $w_2$ can be included in a $\Phi$-decreasing swap before agent $i$ on vertex $v$ performs another profitable swap. Hence, we have to distinguish between two cases.
		 \begin{figure}[t]
    \centering
	\begin{subfigure}[c]{0.16\textwidth}
		\includegraphics[width=0.8\textwidth]{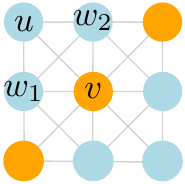}
		\subcaption{~}
		\label{local_8grid_2}
	\end{subfigure}
~~~
	\begin{subfigure}[c]{0.16\textwidth}
	\includegraphics[width=0.8\textwidth]{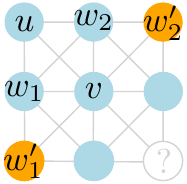}
	\subcaption{~}
	\label{local_8grid_3}
\end{subfigure}
	\caption{We focus on the change in $\Phi$  induced by vertices $w_1$ and $w_2$ where the swaps are not $\Phi$-increasing. (a) strategy profile $\sp_{ij}$ after a $\Phi$-preserving or $\Phi$-decreasing swap of agents $i$ and $j$ occupying vertices $u$ and $v$. (b) shows the strategy profile before the agent occupying $v$ can perform another $\Phi$-decreasing swap.}
		\label{local_8grid_case3}
	\end{figure}
		
		\textbf{Case 1.} We assume that agent $i$ does not perform another profitable swap, before the agents placed on $w_1$ and $w_2$ swap. As already mentioned, no neighboring agent of $w_1$ and $w_2$ has utility zero and since the agents on $w_1$ and $w_2$ have positive utility as well, two $\Phi$-increasing swaps will occur before the agent occupying $u$ can perform once again a $\Phi$-decreasing swap. Thus, in total $\Phi$ increases.
		
		\textbf{Case 2.} We assume that agent $i$ will perform another profitable swap before the agents placed on $w_1$ and $w_2$ swap. Hence, it is possible that an agent in the neighborhood of $w_1$ or $w_2$ has utility zero and is involved in a $\Phi$-preserving or $\Phi$-decreasing swap. However, the swap of agent $i$ is $\Phi$-increasing and will be performed with a blue agent. Hence, since~$j$ residing on $u$ is also blue, the color of the agent on $v$ has to change to orange before the agent on $u$ can perform another $\Phi$-decreasing swap. Consider Figure~\ref{local_8grid_3}. If the agent on $w_1'$ or $w_2'$ can perform another $\Phi$-decreasing swap, this is only possible with the agent occupying $v$. Assume, without loss of generality, there is a profitable $\Phi$-decreasing swap between the agents residing on $w_1'$ and $v$. Then, afterwards, for the agent residing on $w_1$ to leave the neighborhood of $u$, there will be at least two $\Phi$-increasing swaps since the agent occupying $w_1'$ is blue and has positive utility. However, this is necessary in order for an agent occupying vertex $u$, to perform once again a profitable $\Phi$-decreasing swap. Thus, in total $\Phi$ increases.
	\end{itemize}
	We have shown that after a $\Psi$-decreasing profitable local swap involving agents on two vertices $u$ and $v$ some additional swaps are necessary before another $\Psi$-decreasing swap can happen again involving the same vertices. Moreover, we have shown that in total these additional swaps increase $\Psi$ more than it was decreased by the initial swap. Thus, in total $\Psi$ increases.
\end{proof}

\noindent{Now we will see that compared to the local $k$-SSG, the $k$-SSG on $8$-grids behaves differently. There the FIP does not hold.}

\begin{theorem}
There cannot exist a potential function for the $k$-SSG played on an 8-grid, for any $k \geq 2$.
\label{IRC_grid}
\end{theorem}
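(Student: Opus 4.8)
The plan is to refute the existence of any ordinal potential by exhibiting an \emph{improving response cycle} (IRC) for the $k$-SSG on some $8$-grid, i.e.\ a finite sequence $\sp^0,\sp^1,\dots,\sp^\ell$ of feasible strategy profiles with $\sp^\ell=\sp^0$ in which each $\sp^{q+1}$ is obtained from $\sp^q$ by a profitable swap. By the equivalence recalled in the preliminaries, an IRC rules out an ordinal potential, so no potential function can exist. It suffices to construct the cycle for $k=2$: for $k>2$ one enlarges the grid slightly, parks one agent of each colour $3,\dots,k$ in a corner block that the cycle never touches, and runs the same argument.

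First I would record the profitability test used throughout. Repeating the computation in the proof of Lemma~\ref{sum_in_eq} with $1_{ij}(\sp)=0$, one finds that a swap between two agents $i,j$ of different colours occupying \emph{non-adjacent} vertices $u=\sp(i)$ and $v=\sp(j)$ is profitable for both if and only if $\u_i(\sp)+\u_j(\sp)<1$: both one-sided improvement conditions reduce to the single inequality $o_u/\delta_u+b_v/\delta_v<1$, where $o_u$ (resp.\ $b_v$) counts the same-colour neighbours of $u$ (resp.\ $v$). Since in the global game agents may swap over any distance, I would design the IRC to use only such non-adjacent swaps, keeping every profitability check to one line. Combined with Lemma~\ref{potential_decreasing}, this already exposes why $\Phi$ cannot govern convergence in the global game: an agent with one same-colour neighbour at a corner ($\delta=3$, utility $\tfrac13$) and an agent with five same-colour neighbours at an interior vertex ($\delta=8$, utility $\tfrac58$) satisfy $\tfrac13+\tfrac58=\tfrac{23}{24}<1$, so they may swap, and this swap \emph{decreases} $\Phi$ by one. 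Because an $8$-grid has three degree classes $3,5,8$, the potential contributed by a fixed pair of agents depends strongly on which cells they occupy, and one therefore expects genuinely cyclic chains of strict improvements.

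Then I would fix a concrete $8$-grid $G$ large enough to contain corner, border and interior vertices with the neighbourhoods we need, and specify $\sp^0$ via a small set of ``active'' agents of both colours together with an arbitrary colouring of the remaining ``background'' cells. An IRC only asks that the \emph{chosen} swaps be profitable — the background need not be stable — so the entire design effort concentrates on the active cells: I would route the active agents around a short closed chain of cells of mixed degrees, where an agent stranded at an interior cell migrates by a profitable swap to a corner cell (whose small denominator makes it content), displacing another active agent and triggering the next profitable swap, and so on, with the $\Phi$-decreasing step above compensated by a few $\Phi$-increasing steps, until after $\ell$ swaps the colouring — and, by repeating the chain enough times, the full strategy profile — returns to $\sp^0$. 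Each step would then be certified by the inequality $\u_i+\u_j<1$, and one checks that the colouring reached after step $\ell$ coincides with that of $\sp^0$; all of this is elementary and would be presented with an accompanying figure.

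The genuine difficulty is the gadget design, not the verification. Requiring \emph{every} step of a closed chain to be strictly improving for \emph{both} participants is a global constraint: an agent's utility depends on the colours of all of its up-to-eight neighbours, and those colours shift as the chain advances, so a choice that makes one step profitable can easily break another; forcing this to be consistent while also making the chain close up exactly is the combinatorial heart of the proof. Once such a configuration is in hand, Theorem~\ref{IRC_grid} follows immediately, and it sharpens the contrast with the local game, where Theorem~\ref{FIP_grid} guarantees the FIP. A minor loose end is to double-check the $k\ge2$ reduction and to size $G$ so that all the invoked neighbourhoods really occur.
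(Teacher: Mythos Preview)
Your framework is exactly right and matches the paper: exhibit an IRC on an $8$-grid for $k=2$, then pad with extra colours for $k>2$. Your profitability criterion $\u_i(\sp)+\u_j(\sp)<1$ for non-adjacent swaps in the two-colour case is also correct and is precisely the tool the paper (implicitly) uses, since all four swaps in its cycle are between non-adjacent agents.

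The gap is that you never actually build the cycle. Everything after your second paragraph is a description of what a construction \emph{would look like}, culminating in the admission that ``the genuine difficulty is the gadget design, not the verification.'' But the gadget design \emph{is} the proof; there is nothing else to the theorem. A reader cannot check ``I would route the active agents around a short closed chain of cells of mixed degrees \ldots\ until after $\ell$ swaps the colouring returns to $\sp^0$,'' because no cells, no agents, no $\ell$, and no background colouring are specified. Without an explicit configuration and a step-by-step utility check, nothing has been proved.

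For comparison, the paper's proof is short and fully concrete: four active agents $a,b,c,d$ sitting on vertices of degrees $3,5,8,8$, a fixed background, and the four-swap cycle $\sp^0\to\sp^0_{bc}\to\sp^1_{ad}\to\sp^2_{ac}\to\sp^3_{bd}$, with each step verified by listing the two utilities before and after (e.g.\ $\tfrac35+\tfrac38<1$, $\tfrac13+\tfrac58<1$, etc.). After four swaps the colouring coincides with $\sp^0$ up to relabelling $a\leftrightarrow b$, $c\leftrightarrow d$, so iterating twice gives a genuine IRC on strategy profiles. To complete your proof you need to produce something of exactly this kind: name the grid, name the four (or more) cells and their degrees, fix the background colouring, and write out every utility pair.
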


\begin{proof}
\begin{figure}[h]
\begin{subfigure}[c]{0.24\textwidth}
\includegraphics[width=0.9\textwidth]{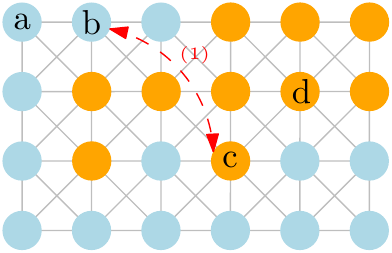}
\subcaption{Initial strategy\\ profile~~~}
\label{IRC_1}
\end{subfigure}
\begin{subfigure}[c]{0.24\textwidth}
	\includegraphics[width=0.9\textwidth]{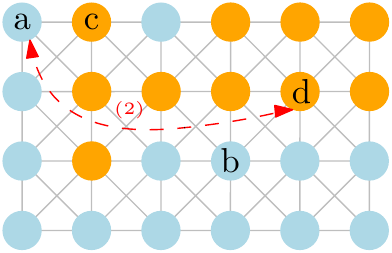}
	\subcaption{Strategy profile\\ after the first swap}
	\label{IRC_2}
\end{subfigure}
\begin{subfigure}[c]{0.24\textwidth}
	\includegraphics[width=0.9\textwidth]{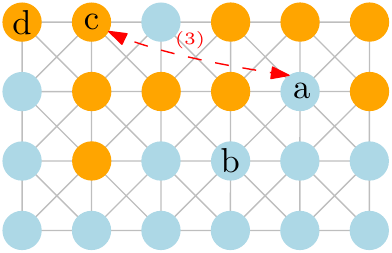}
	\subcaption{Strategy profile\\ after the second swap}
	\label{IRC_3}
\end{subfigure}
\begin{subfigure}[c]{0.24\textwidth}
	\includegraphics[width=0.9\textwidth]{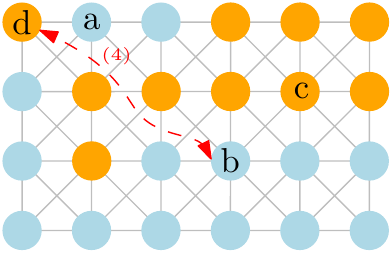}
	\subcaption{Strategy profile\\ after the third swap}
	\label{IRC_4}
\end{subfigure}
\caption{An improving response cycle for the $k$-SSG played on a 8-grid. The agent types are marked orange and blue.}
\label{IRC}
\end{figure}

We prove the statement by providing an improving response cycle $\sp^0,\ldots,\sp^4$. The construction is shown in Figure~\ref{IRC}, where vertices are labeled with the agent occupying them. We have orange and blue agents. Agents with other types can be placed in a grid outside of the depicted cutout. 

In the initial strategy profile $\sp^0$ (Figure~\ref{IRC_1}), $\u_b(\sp^0) = \frac35$ and $\u_c(\sp^0) = \frac38$. Both agents $b$ and $c$ improve by swapping, since in $\sp^1:=\sp^0_{bc}$ we have $\u_b(\sp^1) = \frac58$ and $\u_c(\sp^1) = \frac25$.
After the first swap (Figure~\ref{IRC_2}), agents $a$ and $d$ can perform a profitable swap, since $\u_a(\sp^1) = \frac13$, $\u_d(\sp^1) = \frac58$ and in $\sp^2:=\sp^1_{ad}$ we have $\u_a(\sp^2) = \frac38$ and $\u_d(\sp^2) = \frac23$.
Then (Figure~\ref{IRC_3}), agents $a$ and $c$ can swap and improve from $\u_a(\sp^2) = \frac38$ and $\u_c(\sp^2) = \frac35$ to $\u_a(\sp^3) = \frac25$ and $\u_c(\sp^3) = \frac58$, respectively, with $\sp^3:=\sp^2_{ac}$. 
Finally (Figure~\ref{IRC_4}), agents $b$ and $d$ can improve by swapping, since $\u_b(\sp^3) = \frac58$, $\u_d(\sp^3) = \frac13$ and in $\sp^4:=\sp^3_{bd}$ we have $\u_b(\sp^4) = \frac23$ and $\u_d(\sp^4) = \frac38$.
Now observe that the coloring induced by $\sp^4$ is the same as the one induced by $\sp^0$ (see Figure~\ref{IRC_1}, where $a$ exchanges position with $b$ and $c$ exchanges position with $d$). So, the sequence of profitable swaps defined above can be repeated over and over mutatis mutandis.
\end{proof}

\noindent However, even if convergence to an equilibrium is not guaranteed for $k \geq 2$, they are guaranteed to exist for $k = 2$.

\begin{theorem}
	Every $2$-SSG played on an $8$-grid has an equilibrium which can be computed in polynomial time.
	\label{thm:existence}
\end{theorem}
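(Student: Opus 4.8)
The plan is to give a direct polynomial-time construction of an equilibrium and to verify it via Lemma~\ref{sum_in_eq}; this is necessary rather than optional, since Theorem~\ref{IRC_grid} rules out any potential-function / best-response argument, and a maximizer of $\Phi$ need not be an equilibrium (by Lemma~\ref{findamental} a profitable swap between vertices whose degrees differ by $3$ or $5$ may decrease $\Phi$). The crucial observation is that, by Lemma~\ref{sum_in_eq}, \emph{any} profile $\sp$ in which every agent has utility at least $\tfrac12$ is an equilibrium: for two agents $i,j$ of different colours we then get $\u_i(\sp)+\u_j(\sp)\ge 1 \ge 1-\frac{1_{ij}(\sp)}{\delta_{\sp(i)}}$. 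So it suffices to build such a profile, and in the few places where that is impossible, to instead build one for which the weaker inequality of Lemma~\ref{sum_in_eq} can be checked by hand.

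For the main case I would assume, w.l.o.g.\ (transposing the grid gives an isomorphic game), that $h\ge l$, and use $o\le b$ to get $o\le\lfloor lh/2\rfloor$. Writing $o=al+r$ with $0\le r<l$, we have $a\le\lfloor h/2\rfloor$, so column $a+1$ is not the last column (for $h\ge 3$), hence blue keeps a thick block on the right. I would place the orange agents so that they fill columns $1,\dots,a$ entirely, plus a contiguous block of $r$ cells placed in the \emph{interior} of column $a+1$ (avoiding both the top and the bottom row). The point of putting the remainder in the middle is that the orange/blue interface meets the grid boundary only along the straight vertical cut between columns $a$ and $a+1$, and never makes a ``step'' next to a degree-$3$ or degree-$5$ vertex. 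A routine inspection of the finitely many local patterns (grid corner / grid edge / interior $\times$ distance to the interface $\times$ position relative to the block) then shows that every agent has utility $\ge\tfrac12$: the tightest spots are the two ends of the orange block inside column $a+1$ (degree-$8$ vertices with utility exactly $\tfrac12$) and the degree-$5$ vertices where a full orange column meets the top or bottom edge (utility $\tfrac35$). Invoking the observation above settles this case, and the placement is plainly computable in polynomial time.

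It then remains to treat the configurations the generic construction does not reach cleanly: (a) small $o$ (roughly $o<2l$, in particular $o\le 4$), where no placement can make all utilities $\ge\tfrac12$; (b) values of $o$ where the block would be forced against an edge (notably $r\in\{0,1,l-1\}$ combined with $a\le 1$); and (c) thin or tiny grids ($2\times h$, $3\times l$, and $2\times2=K_4$). For (a) and (b) I would place the orange agents in a compact near-corner region whose inner corner is \emph{chamfered} (removing the single cell that would otherwise have only $3$ orange neighbours, so its utility becomes $\tfrac12$ rather than $\tfrac38$) and verify Lemma~\ref{sum_in_eq} directly: one checks that every blue agent whose utility is below the threshold demanded by an orange agent $i$ with $\u_i(\sp)<\tfrac12$ is in fact a neighbour of $i$, so the relaxed bound $\u_i(\sp)+\u_j(\sp)\ge 1-\frac{1_{ij}(\sp)}{\delta_{\sp(i)}}$ holds; here the key structural fact is that all ``unhappy'' agents cluster around one corner. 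The grids in (c) are handled by ad hoc placements, with $K_4$ settled by inspection.

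The main obstacle is case (c) together with the reconciliation needed in (a)/(b): fitting the remainder $r$ so that one never simultaneously has a low-degree vertex (on the grid boundary) and too few same-colour neighbours, and then keeping the Lemma~\ref{sum_in_eq} bookkeeping manageable across all resulting boundary patterns. None of the steps is deep, but the verification is a somewhat lengthy elementary case analysis.
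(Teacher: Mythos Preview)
Your overall strategy---build an explicit profile and verify it with Lemma~\ref{sum_in_eq}, exploiting that utilities $\ge\tfrac12$ everywhere suffice---is exactly the paper's. However, your main construction does \emph{not} achieve utility $\ge\tfrac12$ everywhere, and in fact is not an equilibrium. With columns $1,\dots,a$ orange and the remainder in rows $2,\dots,r+1$ of column $a{+}1$ (the natural reading of ``avoiding the top and bottom row''), the blue border agent at $(1,a{+}1)$ has neighbours $(1,a),(2,a),(2,a{+}1)$ orange and $(1,a{+}2),(2,a{+}2)$ blue, hence utility $\tfrac25$. The orange agent at the bottom of the block, $(r{+}1,a{+}1)$, is a degree-$8$ vertex with utility $\tfrac12$. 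For $r\ge 2$ these two are \emph{not} adjacent, so Lemma~\ref{sum_in_eq} would require $\tfrac25+\tfrac12\ge 1$, which fails; and indeed the swap is profitable (the orange agent gets $\tfrac35$ at $(1,a{+}1)$, the blue agent gets $\tfrac12$ at $(r{+}1,a{+}1)$). Pushing the block two rows inward (rows $3,\dots,r{+}2$) repairs this, but then you need $r\le l-4$, so $r\in\{l-3,l-2,l-1\}$ become special cases for \emph{every} $a$, not only $a\le 1$ as you list.

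The paper handles the generic case the opposite way: it fills row by row (equivalently, lets the remainder touch the border). Then one \emph{does} get an agent of utility $\tfrac25$ in the extreme cases $y=1$ or $y=\ell-1$, but the unique $\tfrac25$-agent and the unique $\tfrac12$-agent are always adjacent, so the slack $1-\tfrac{1}{5}$ in Lemma~\ref{sum_in_eq} is available and the profile is an equilibrium for all $o\ge 2\ell-1$. For $o<2\ell-1$ the paper does \emph{not} get away with a single chamfered-corner shape: it uses explicit ad-hoc placements for $o\le 14$ and, for larger $o$, a family of ``triangle'' and ``$(x,y)$-almost-triangle'' shapes (since $o<2\ell-1$ can be arbitrarily large). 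Your sketch of (a)/(b) substantially underestimates this regime; a single chamfered rectangle will not give utilities compatible with Lemma~\ref{sum_in_eq} across all such $o$.
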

\begin{proof}
Assume without loss of generality that the grid is such that $\ell\leq h$. If this is not the case, simply rotate the grid by ninety degrees. We give two different constructions depending on how the number of orange agents compares with the threshold $2\ell-1$.

If $o\geq 2\ell-1$, let $\sp$ be the strategy profile in which orange agents occupy the grid starting from the upper-left corner and proceedings towards the right filling the grid in increasing order of rows, see Figure \ref{fig-eq-grid} for a pictorial example. Denote by $x$ the number of entirely orange rows and by $y$ the number of orange vertices in the unique row containing both orange and blue vertices (if this row exists, otherwise set $y=0$). Moreover, whenever $y\neq 0$, let $u$ be the last orange vertex (i.e., the $y$-th vertex along the $(x+1)$-th row) and $v$ be the first blue one (i.e., the vertex at the right of $u$); again, see Figure \ref{fig-eq-grid} for an example. Observe that, by the assumption $o\geq 2\ell-1$ and the fact that $o\leq b$, the following two properties hold:
\begin{itemize}
\item[{\em (P.1)}] $x\geq 1$ and $x=1$ if and only if $y=\ell-1$;
\item[{\em (P.2)}] $x\leq \frac{h}{2}$ and $x=\frac{h}{2}$ if and only if $y\leq\frac{\ell}{2}$.
\end{itemize}
\begin{figure*}[h]
	\centering
\includegraphics[width=0.3\textwidth]{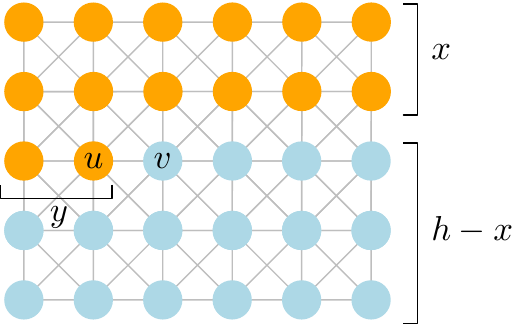}
\caption{The structure of an equilibrium when $o\geq 2\ell-1$.}\label{fig-eq-grid}
\end{figure*}
Fix an orange agent $i$. It is easy to see that, by property {\em (P.1)}, it holds that
\begin{displaymath}
\u_i(\sp)\geq\left\{
\begin{array}{ll} \vspace*{0.5em}
\frac23 & \textrm{ if $\sp(i)$ is a corner vertex},\\ \vspace*{0.5em}
\frac35 & \textrm{ if $\sp(i)$ is a border vertex unless $y=1$ which gives $\u_i(\sp)=\frac25$},\\
\frac58 & \textrm{ if $\sp(i)$ is an inner vertex unless $\sp(i)=u$ which gives $\u_i(\sp)=\frac12$}.
\end{array}\right.
\end{displaymath}
Fix a blue agent $j$. It is easy to see that, by property {\em (P.2)}, it holds that
\begin{displaymath}
\u_j(\sp)\geq\left\{
\begin{array}{ll} \vspace*{0.5em}
\frac23 & \textrm{ if $\sp(j)$ is a corner vertex},\\ \vspace*{0.5em}
\frac35 & \textrm{ if $\sp(j)$ is a border vertex unless $y=\ell-1$ which gives $\u_j(\sp)=\frac25$},\\
\frac58 & \textrm{ if $\sp(j)$ is an inner vertex unless $\sp(j)=v$ which gives $\u_j(\sp)=\frac12$}.
\end{array}\right.
\end{displaymath}
As $\frac25+\min\{\frac23,\frac35,\frac58\}\geq 1$, it follows by Lemma \ref{sum_in_eq} that profitable swaps are possible in $\sp$ only between an orange agent $i$ and a blue agent $j$ satisfying one of the following three conditions: 

\begin{itemize}
\item[(i)] $\u_i(\sp)=\frac25$ and $\u_j(\sp)=\frac25$,
\item[(ii)] $\u_i(\sp)=\frac25$ and $\u_j(\sp)=\frac12$, 
\item[(iii)] $\u_i(\sp)=\frac12$ and $\u_j(\sp)=\frac25$.
\end{itemize} 

Case {\em (i)} requires $1=y=\ell-1$ which implies $\ell=2$ so that $1_{ij}(\sp)=1$. By $\delta_{\sp(i)}=\delta_{\sp(j)}=5$, we get $\u_i(\sp)+\u_j(\sp)\geq 1-\frac{1_{ij}(\sp)}{5}$ satisfying the condition of Lemma \ref{sum_in_eq}. 

Case {\em (ii)} requires $y=1$ (which yields $\sp(i)=u$) and $\sp(j)=v$ so that $1_{ij}(\sp)=1$. By $\delta_{\sp(i)}=5$ and $\delta_{\sp(j)}=8$, we get $\u_i(\sp)+\u_j(\sp)\geq 1-\frac{1_{ij}(\sp)}{5}$ again satisfying the condition of Lemma \ref{sum_in_eq}. 

Case {\em (iii)} requires $y=\ell-1$ (which yields $\sp(j)=v$) and $\sp(i)=u$ so that $1_{ij}(\sp)=1$. By $\delta_{\sp(j)}=5$ and $\delta_{\sp(i)}=8$, we get $\u_i(\sp)+\u_j(\sp)\geq 1-\frac{1_{ij}(\sp)}{5}$ satisfying the condition of Lemma~\ref{sum_in_eq}. Thus, $\sp$ is an equilibrium and can be constructed in polynomial time.

If $o<2\ell-1$, a more involved construction is needed. For any $o\in [14]$, the proposed strategy profile $\sp$ is depicted in Figure~\ref{fig-eq-grid-1}. We stress that the two assumptions $\ell\leq h$ and $o<2\ell-1$ imply that the grid is large enough to accommodate a coloring implementing $\sp$. It is not difficult to check by direct inspection that $\sp$ is an equilibrium. To this aim, it is important to observe that, when $o\geq 7$, there must be at least two blue agents occupying vertices on the first row, otherwise the assumption $o<2\ell-1$ would be contradicted. 
\begin{figure*}[b]
\centering
\includegraphics[width=0.9\textwidth]{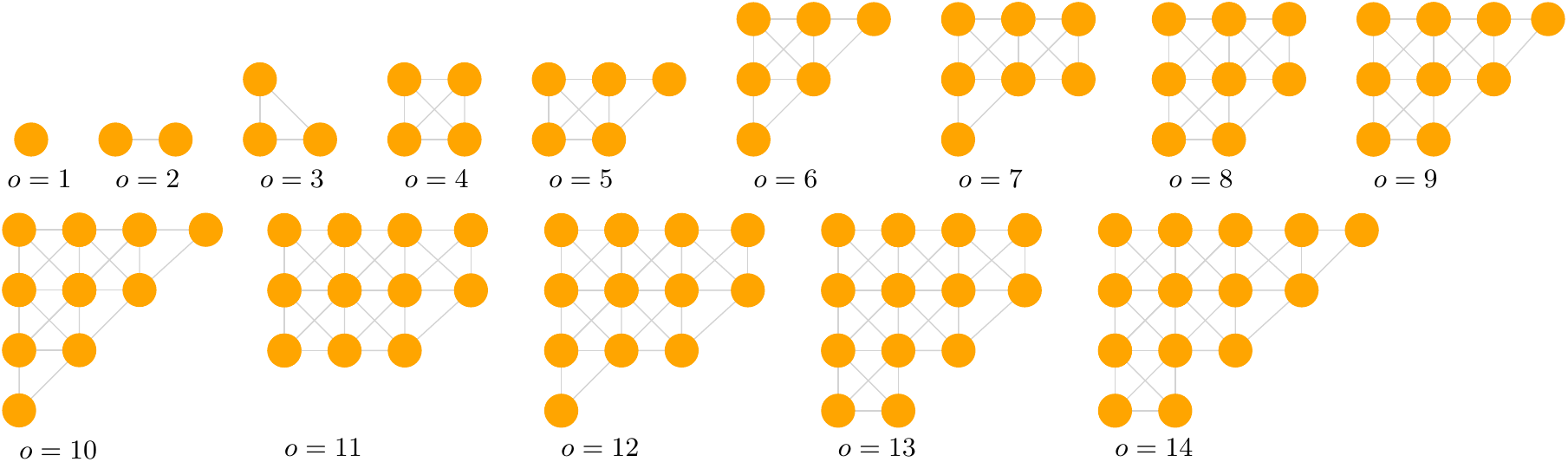}
\caption{The structure of an equilibrium when $o<2\ell-1$ and $o\in [14]$. Only the orange vertices are depicted.}\label{fig-eq-grid-1}
\end{figure*}
Now, for any $15\leq o<2\ell-1$, we propose a general rule, which can be implemented in polynomial time, to construct an equilibrium profile~$\sp$. First, we define some suitable structures. For an integer $x\geq 5$, an $x$-{\em triangle} is a strategy profile obtained as follows: for each $y=x$ down to $1$, $y$ orange agents are assigned to the first $y$ vertices of the $(x+1-y)$-th row, see Figure \ref{fig-eq-grid-2}. Thus, a total of $\frac{x(x+1)}{2}$ orange agents are assigned. 
\begin{figure*}[t]
	\centering
\includegraphics[width=0.2\textwidth]{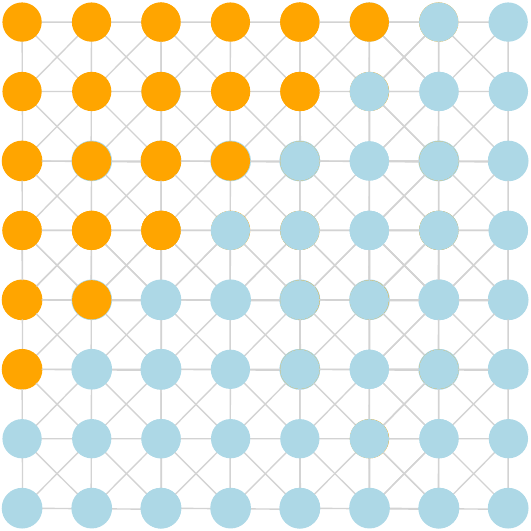}
\caption{The structure of an $x$-triangle, with $x=6$. The grid needs to have additional blue rows and columns which are not depicted.}\label{fig-eq-grid-2}
\end{figure*}
For an integer $x\geq 5$, an $(x,1)$-{\em almost triangle} is a strategy profile obtained by assigning $x$ orange agents to the first $x$ vertices of the first two rows, $x-1$ orange agents to the first $x-1$ vertices of the third row, and then, for each $y=x-3$ down to $2$, $y$ orange agents are assigned to the first $y$ vertices of the $(x+1-y)$-th row, see the top-left picture in Figure \ref{fig-eq-grid-3}. Thus, a total of $\sum_{i=2}^{x-3}i+3x-1=\frac{x(x+1)}{2}+1$ orange agents are assigned. 
\begin{figure*}[h]
	\centering
\includegraphics[width=0.6\textwidth]{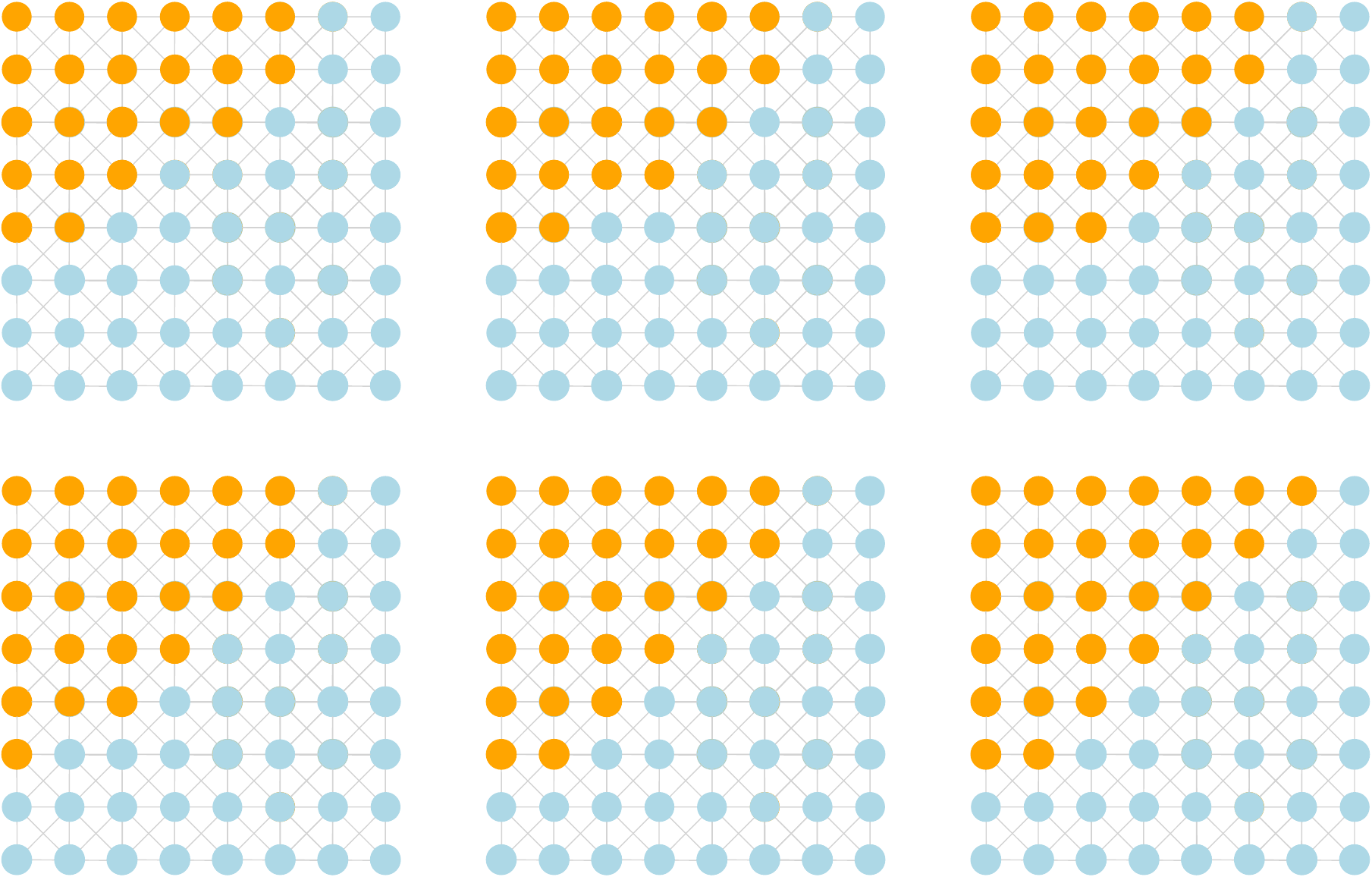}
\caption{The structure of $(x,y)$-triangles, with $x=6$ and $y\in [6]$. The grid needs to have additional blue rows and columns which are not depicted.}\label{fig-eq-grid-3}
\end{figure*}
For a pair of integers $(x,y)$, with $x\geq 5$ and $2\leq y\leq x$, we define an $(x,y)$-{\em almost triangle} as follows: for $2\leq y\leq x-2$, the $(x,y)$-almost triangle is obtained from the $(x,y-1)$-one by locating an orange agent to the first non-orange vertex of the $(y+2)$-th row; the $(x,x-1)$-almost triangle is obtained by locating an orange agent to the first non-orange vertex (i.e., the second) of the $x$-th row of the $(x,x-2)$-one; the $(x,x)$-almost triangle is obtained by locating an orange agent to the first non-orange vertex (i.e., the $(x+1)$-th) of the first row of the $(x,x-1)$-one (see Figure~\ref{fig-eq-grid-3} for a pictorial example). 

Now observe that any number $o\geq 15$ can be decomposed as $o=\frac{x(x+1)}{2}+y$ for some integers~$x$ and $y$ such that $x\geq 5$ and $0\leq y\leq x$. The strategy profile $\sp$ is the $x$-triangle if $y=0$ and the $(x,y)$-almost triangle, otherwise. Clearly, $\sp$ can be constructed in polynomial time. We are left to prove that $\sp$ is an equilibrium. We shall use Lemma~\ref{sum_in_eq} in conjunction with the following claims which can be easily verified with the help of Figures~\ref{fig-eq-grid-2} and \ref{fig-eq-grid-3}. In any $x$-triangle $\sp$ with $x\geq 5$, $\u_i(\sp)\geq \frac25$ for any orange agent $i$ and $\u_j(\sp)\geq \frac58$ for any blue agent $j$. Thus, $\sp$ is an equilibrium. Now, let us consider $(x,y)$-almost triangles. If $y\in [x-3]$, we have $\u_i(\sp)\geq \frac12$ for any orange agent $i$ and $\u_j(\sp)\geq \frac12$ for any blue agent $j$. So, $\sp$ is an equilibrium. If $y=x-2$, $\u_i(\sp)\geq \frac12$ for each orange agent $i$, except for the one occupying the unique orange vertex at the $x$-th row who gets utility equal to $\frac25$; moreover, $\u_j(\sp)\geq \frac58$ for each blue agent $j$, except for the one occupying the first blue vertex of the $x$-th row (see the bottom-left picture in Figure~\ref{fig-eq-grid-3}). Thus, we get $\u_i(\sp)+\u_j(\sp)\geq 1-\frac{1_{ij}(\sp)}{\min\{\delta_{\sp(i)},\delta_{\sp(j)}\}}$ for each orange agent $i$ and blue agent $j$. So, $\sp$ is an equilibrium. If $y=x-1$, $\u_i(\sp)\geq \frac12$ for each orange agent $i$ and $\u_j(\sp)\geq \frac58$ for each blue agent $j$, thus implying that $\sp$ is an equilibrium (see the bottom-middle picture in Figure~\ref{fig-eq-grid-3}). Finally, if $y=x$, $\u_i(\sp)\geq \frac25$ for each orange agent $i$ and $\u_j(\sp)\geq \frac58$ for each blue agent $j$ (see the bottom-right picture in Figure~\ref{fig-eq-grid-3}), and so also in this case $\sp$ is an equilibrium.
\end{proof}

	\section{Price of Anarchy}
	In the following section, we consider the efficiency of equilibrium assignments and bound the PoA for different classes of underlying graphs. In particular, besides investigating general graphs, we analyze regular graphs, cycles, paths, $4$-grids and $8$-grids. Agarwal et al.~\cite{A+19} already proved that the PoA for the $2$-SSG is in $\Theta(n)$ on underlying star graphs if there are at least two agents of each type and between $\frac{921}{448}$ and $4$ for the balanced version, i.e., $o = \frac{n}{2}$. We improve this result by providing an upper bound of $3$ which tends to~$2$ for $n$ going to infinity. Furthermore, the authors of ~\cite{A+19} showed that the PoA can be unbounded for $k \geq 3$. Therefore, we concentrate on the (local) $2$-SSG for several graph classes.

	\subsection{General Graphs}
	Remember that for a $2$-SSG game, we assume that $o$ is the less frequent color.

	We significantly improve and generalize the results of~\cite{A+19} by providing a general upper bound of $\frac{no(n-o)-n}{o(o-1)(n-o)}$ for the case of $o>1$. For balanced games, it yields an upper bound of $\frac{2(n+2)}{n}$ which shows that the PoA tends to $2$ as the number of vertices increases. Moreover, if $\frac{b}{o} \in \mathcal{O}(1)$, the PoA is constant.
	
	With the help of Lemma~\ref{sum_in_eq}, we can now prove our general upper bound for the $2$-SSG.

	\begin{theorem}
		The PoA of $2$-SSGs with $o>1$ is at most $\frac{no(n-o)-n}{o(o-1)(n-o)}$. Hence, the PoA $\in \mathcal{O}\left(\frac{b}{o}\right)$.
		\label{thm:poa_2ssg}
	\end{theorem}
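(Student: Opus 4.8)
The plan is to sandwich $PoA$ between an upper bound on the optimal social welfare and a lower bound on the welfare of an arbitrary swap equilibrium. For a profile $\sp$ and a vertex $w$, let $d_w(\sp)$ be the number of neighbours of $w$ occupied by an agent of the colour different from that of $\sp^{-1}(w)$, so that $\u(\sp)=\sum_{w\in V}\bigl(1-\tfrac{d_w(\sp)}{\delta_w}\bigr)=n-\sum_{w\in V}\tfrac{d_w(\sp)}{\delta_w}$, and write $\u^{\mathrm o}(\sp),\u^{\mathrm b}(\sp)$ for the total utility of the orange, resp.\ blue, agents. For the numerator, let $\sp^*=\sp^*(G,\mathbf t)$. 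Since $G$ is connected and both colours occur, some edge $\{u,v\}$ has $u$ orange and $v$ blue in $\sp^*$, so $d_u(\sp^*),d_v(\sp^*)\ge 1$. As $u$ is orange, its $\delta_u-d_u(\sp^*)$ monochromatic neighbours together with $u$ itself are orange, whence $\delta_u\le o-1+d_u(\sp^*)$ and therefore $\tfrac{d_u(\sp^*)}{\delta_u}\ge\tfrac{d_u(\sp^*)}{o-1+d_u(\sp^*)}\ge\tfrac1o$; symmetrically $\tfrac{d_v(\sp^*)}{\delta_v}\ge\tfrac1{n-o}$. As $u\neq v$ these are two distinct summands, so $\u(\sp^*)\le n-\tfrac1o-\tfrac1{n-o}=\tfrac{no(n-o)-n}{o(n-o)}$.

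For the denominator, fix an equilibrium $\sp$ and let $i$ (orange) and $j$ (blue) be agents of minimum utility, $p=\u_i(\sp)$, $q=\u_j(\sp)$. Then $\u(\sp)\ge o\,p+(n-o)\,q\ge o(p+q)$ because $n-o\ge o$, so it suffices to prove $p+q\ge 1-\tfrac1o$, i.e.\ $\u(\sp)\ge o-1$. If $\min\{\delta_{\sp(i)},\delta_{\sp(j)}\}\ge o$, Lemma~\ref{sum_in_eq} applied to $i$ and $j$ gives $p+q\ge 1-\tfrac{1_{ij}(\sp)}{\min\{\delta_{\sp(i)},\delta_{\sp(j)}\}}\ge 1-\tfrac1o$. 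Otherwise the vertex of smaller degree, say $\sp(i)$, has fewer than $o\le n-o$ neighbours, so some blue agent $j'$ lies outside $N_{\sp(i)}$; applying Lemma~\ref{sum_in_eq} to the non-adjacent pair $i,j'$ (with $1_{ij'}(\sp)=0$) yields $\u_{j'}(\sp)\ge 1-p$, and using this for all of the at least $(n-o)-\delta_{\sp(i)}$ blue agents outside $N_{\sp(i)}$ gives $\u^{\mathrm b}(\sp)\ge\bigl((n-o)-\delta_{\sp(i)}\bigr)(1-p)$; combined with $\u^{\mathrm o}(\sp)\ge op$ and, in the residual regime where $p$ is tiny, with the observation that an orange agent of utility $0$ has only blue neighbours (forcing every blue agent outside its neighbourhood to have utility exactly $1$), one again concludes $\u(\sp)\ge o-1$.

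Dividing the two bounds, $PoA\le\dfrac{(no(n-o)-n)/(o(n-o))}{o-1}=\dfrac{no(n-o)-n}{o(o-1)(n-o)}$, the claimed bound. For the asymptotic statement, rewrite it as $\dfrac{n}{o-1}\cdot\dfrac{o(n-o)-1}{o(n-o)}\le\dfrac{n}{o-1}=\dfrac{o+b}{o-1}\le\dfrac{2(o+b)}{o}\le\dfrac{4b}{o}$ for $o\ge 2$ (using $b\ge o$), which gives $PoA\in\mathcal O(b/o)$; note also that the equilibrium lower bound $o-1>0$ makes the ratio well defined.

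The main obstacle is the equilibrium lower bound. Lemma~\ref{sum_in_eq} controls a pair $i,j$ only through the term $1/\min\{\delta_{\sp(i)},\delta_{\sp(j)}\}$, which is worthless when a minimum-utility agent sits on a low-degree—in particular degree-$1$—vertex; one is therefore forced to replace the (possibly adjacent) pair $i,j$ by a carefully chosen pair of \emph{non-adjacent} agents of different colours, and then to verify that the resulting case distinction—on $\delta_{\sp(i)}$, $\delta_{\sp(j)}$, the degrees of their neighbours, and on how $b/o$ compares to small constants—covers every instance, with connectivity of $G$ used to rule out the most degenerate colourings. Everything else is routine algebra.
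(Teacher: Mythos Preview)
Your upper bound on the optimum, $\u(\sp^*)\le n-\tfrac1o-\tfrac1{n-o}$, is correct and coincides with the paper's. The target for the equilibrium lower bound, $\u(\sp)\ge o-1$, is also the right one. The gap is entirely in how you try to reach it.

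First, the reduction ``it suffices to prove $p+q\ge 1-\tfrac1o$'' is not just a convenient sufficient condition---it is false. Take the star $K_{1,3}$ with $o=b=2$, orange at the centre and one leaf, blue at the other two leaves. This is an equilibrium (no swap helps both agents), yet $p=\tfrac13$, $q=0$, so $p+q=\tfrac13<\tfrac12=1-\tfrac1o$. So Case~1 cannot be the whole story, and in Case~2 you must argue for $\u(\sp)\ge o-1$ directly.

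Second, in Case~2 the phrase ``the vertex of smaller degree, say $\sp(i)$'' is not without loss of generality: orange is the minority colour, so the two sides are not symmetric (in the star example the small-degree minimiser is blue). More importantly, even granting your assumption $\delta_{\sp(i)}<o$, the displayed bound $\u(\sp)\ge o\,p+\bigl((n-o)-\delta_{\sp(i)}\bigr)(1-p)$ does \emph{not} imply $\u(\sp)\ge o-1$. With $b=o$, $\delta_{\sp(i)}=o-1$ and $p=0$ it gives only $\u(\sp)\ge 1$, which is strictly weaker than $o-1$ once $o\ge 3$. Your closing paragraph concedes that a further case distinction ``on $\delta_{\sp(i)},\delta_{\sp(j)}$, the degrees of their neighbours, and on how $b/o$ compares to small constants'' is still required---but that is precisely the heart of the proof, and it is not carried out.

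The paper avoids this case analysis by a short matching argument: take a maximum matching $M$ of \emph{non-adjacent} bichromatic pairs; each such pair contributes at least $1$ to $\u(\sp)$ by Lemma~\ref{sum_in_eq}. If $|M|=o-x$, the $x$ leftover orange vertices must be adjacent to all $\ge x$ leftover blue vertices (else $M$ was not maximum), so every leftover vertex has degree $\ge x$; pairing them up and applying Lemma~\ref{sum_in_eq} again gives each such pair at least $1-\tfrac1x$. Summing yields $\u(\sp)\ge (o-x)+x(1-\tfrac1x)=o-1$ with no further cases. I would recommend replacing your Case~2 discussion with this argument.
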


	\begin{proof}
		Fix a $2$-SSG with $o>1$ orange agents played on a graph $G$ with $n$ vertices. First, we observe that the social welfare of a social optimum is at most $n-2+\frac{o-1}{o}+\frac{b-1}{b}=n-\frac{1}{o}-\frac{1}{b}$, as there must be at least one orange vertex that is adjacent to at least one blue vertex, thus getting utility at most $\frac{o-1}{o}$, and at least one blue vertex that is adjacent to at least one orange vertex, thus getting utility at most $\frac{b-1}{b}$.
		
		Given a strategy profile $\sp^\prime$, a {\em feasible pair} is a pair of vertices $(u,v)$ such that $u$ and $v$ are occupied by agents of different colors in $\sp^\prime$ and $\{u,v\}\notin E(G)$, i.e., $u$ and $v$ are not adjacent. Now fix a swap equilibrium $\sp$ and consider a maximum cardinality matching $M$ of feasible pairs. Clearly $0\leq |M|\leq o$. Hence, $|M|= o-x$ for some $0\leq x\leq o$. If $x>0$, then, there are exactly $x$ orange and at least $x$ blue leftover vertices of $V$ that do not belong to any feasible pair in $M$. As $M$ has maximum cardinality, each orange leftover vertex has to be adjacent to all leftover blue ones and vice-versa. That is, for each leftover vertex $u$, we have $\delta_u(G)\geq x$. Let~$T$ be a set of pairs of vertices obtained by matching each leftover orange vertex with a leftover blue one. By Lemma~\ref{sum_in_eq}, it holds for each $(u,v)\in M$, $\u_{\sp^{-1}(u)}(\sp)+\u_{\sp^{-1}(v)}(\sp)\geq 1$ and for each $(u,v)\in T$, $\u_{\sp^{-1}(u)}(\sp)+\u_{\sp^{-1}(v)}(\sp)\geq 1-\frac{1}{x}$. Thus, the social welfare of $\sp$ is at least $o-x+x(1-\frac{1}{x})=o-1$.    
	\end{proof}

	\begin{corollary}
		The PoA of $2$-SSGs is constant if $\frac{b}{o}$ is constant.
	\end{corollary}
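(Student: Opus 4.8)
The plan is to read this off directly from Theorem~\ref{thm:poa_2ssg}. Since $o\le b$ and $o+b=n$, fixing the ratio $c:=b/o$ forces $o=n/(c+1)$, so $o\to\infty$ whenever $n\to\infty$; in particular, along any family of instances with $b/o$ bounded by a constant, every instance with sufficiently many vertices has $o>1$, which is precisely the regime covered by Theorem~\ref{thm:poa_2ssg}. Hence it suffices to show that the upper bound $\frac{no(n-o)-n}{o(o-1)(n-o)}$ from that theorem is itself bounded by a function of $c$ alone.

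For this I would substitute $n-o=b$ and $n=o+b$ and simply drop the negative term in the numerator:
\[
\frac{no(n-o)-n}{o(o-1)(n-o)}\;\le\;\frac{no(n-o)}{o(o-1)(n-o)}\;=\;\frac{n}{o-1}\;=\;\frac{o+b}{o-1}\;=\;1+\frac{b+1}{o-1}.
\]
Using $b\le c\,o$ and $o\ge 2$, the right-hand side is at most $1+\frac{c\,o+1}{o-1}$. The map $o\mapsto\frac{c\,o+1}{o-1}$ has derivative $\frac{-(c+1)}{(o-1)^2}<0$, so it is decreasing on $[2,\infty)$ and is maximised at $o=2$, where $1+\frac{c\,o+1}{o-1}=2c+2$. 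Therefore $PoA\le 2c+2=2\bigl(1+\tfrac{b}{o}\bigr)=\mathcal{O}(1)$, which is the claim.

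The only point that needs any care — and really the only thing standing between the reader and a one-line proof — is the degenerate case $o=1$, which is not covered by Theorem~\ref{thm:poa_2ssg}; in fact there the PoA can be unbounded (e.g.\ on stars, or already on $P_3$ with type vector $(1,2)$, where some equilibrium has social welfare $0$). This case is, however, vacuous here: $o=1$ together with $b/o$ constant pins $n=b+1$ to a single value, so it does not occur in any genuine family of growing instances; equivalently one simply restricts attention to instances with $o\ge2$, exactly as in Theorem~\ref{thm:poa_2ssg}. So there is no real obstacle: the corollary is an immediate consequence of the theorem's $\mathcal{O}(b/o)$ bound.
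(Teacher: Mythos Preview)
Your proposal is correct and follows exactly the same route as the paper: the corollary is stated there without proof, as an immediate consequence of the $\mathcal{O}(b/o)$ bound in Theorem~\ref{thm:poa_2ssg}. Your explicit computation $PoA\le 2(1+b/o)$ and your handling of the degenerate $o=1$ case simply spell out what the paper leaves implicit.
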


	\noindent We want to emphasize that for the case where both colors are perfectly balanced, the PoA is constant. As for $n = 2$ the $2$-SSG is trivial and has a PoA $= 1$, we get the following corollary.

	\begin{corollary}
		The PoA of balanced $2$-SSGs is at most $\min\left\{3,\frac{2(n+2)}{n}\right\}$.
	\end{corollary}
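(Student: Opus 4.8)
The plan is to obtain both entries of the minimum from facts already at hand, handling the degenerate case $n=2$ on its own.

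First I would dispose of $n=2$. As already observed just before the statement, when $n=2$ the connected host graph is forced to be $K_2$ and the balanced type vector is $(1,1)$; the game is trivial and has $PoA=1$. Since $1\le 3=\min\{3,\tfrac{2(n+2)}{n}\}$ for $n=2$ (here $\tfrac{2(n+2)}{n}=4$), the claimed bound holds in this case. This case must be treated separately precisely because Theorem~\ref{thm:poa_2ssg} assumes $o>1$, which fails here.

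For every remaining balanced game, $n$ is even and $n\ge 4$, hence $o=\tfrac n2\ge 2>1$ and Theorem~\ref{thm:poa_2ssg} applies with $o=n-o=\tfrac n2$. The next step is just to substitute these values into $\tfrac{no(n-o)-n}{o(o-1)(n-o)}$ and simplify: the numerator becomes $n\cdot\tfrac{n^2}{4}-n=\tfrac{n(n^2-4)}{4}$ and the denominator becomes $\tfrac n2\cdot\tfrac{n-2}{2}\cdot\tfrac n2=\tfrac{n^2(n-2)}{8}$, so the ratio reduces to $\tfrac{2(n^2-4)}{n(n-2)}=\tfrac{2(n+2)}{n}$ after cancelling the factor $(n-2)$. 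This yields the second term of the minimum directly.

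Finally, to recover the first term I would note $\tfrac{2(n+2)}{n}=2+\tfrac4n\le 3$ whenever $n\ge 4$, which covers every balanced game with $n>2$; combined with the $n=2$ case this gives $PoA\le\min\{3,\tfrac{2(n+2)}{n}\}$ throughout. There is no genuine obstacle here: the argument is just an arithmetic simplification of the bound of Theorem~\ref{thm:poa_2ssg}, and the only point requiring care is the isolated treatment of $n=2$, where the hypothesis $o>1$ of that theorem does not hold.
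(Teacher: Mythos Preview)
Your proposal is correct and follows exactly the approach of the paper: handle $n=2$ separately (where $PoA=1$), and for $n\ge 4$ substitute $o=n/2$ into the bound of Theorem~\ref{thm:poa_2ssg} and simplify. The paper merely states the corollary without spelling out the arithmetic, so your write-up is in fact more detailed than the original.
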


	\noindent We will now show that in contrast to the balanced $2$-SSG, the balanced local $k$-SSG has a much higher LPoA. 

	\begin{theorem}\label{thm:PoA_max_min_degree}
		The LPoA of local balanced $2$-SSGs with $o >1$ is between $2n + \frac{8}{n} -8$ and $2n - \frac{8}{n}$.
	\end{theorem}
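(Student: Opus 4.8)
The plan is to prove the two bounds separately, using that a balanced game with $o>1$ has $o=b=\tfrac n2$, hence $n\ge 4$.

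\emph{Upper bound $LPoA\le 2n-\tfrac8n$.} Exactly as in the proof of Theorem~\ref{thm:poa_2ssg}, every feasible profile has social welfare at most $n-2+\tfrac{o-1}{o}+\tfrac{b-1}{b}=n-\tfrac4n$, since connectedness forces an orange vertex adjacent to a blue one and vice versa. So it suffices to show that every local swap equilibrium $\sigma$ satisfies $\u(\sigma)\ge\tfrac12$; then $LPoA\le(n-\tfrac4n)/\tfrac12=2n-\tfrac8n$. I would argue this by cases. If $G$ has a bichromatic edge $\{x,y\}$ with $\min\{\delta_x,\delta_y\}\ge2$, then the two agents on $x$ and $y$ form an eligible local swap, and the argument of Lemma~\ref{sum_in_eq} applied to this adjacent pair (for which $1_{ij}(\sigma)=1$) gives $\u_{\sigma^{-1}(x)}(\sigma)+\u_{\sigma^{-1}(y)}(\sigma)\ge 1-\tfrac1{\min\{\delta_x,\delta_y\}}\ge\tfrac12$, so $\u(\sigma)\ge\tfrac12$. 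Otherwise every bichromatic edge is incident to a degree-$1$ vertex, hence every edge whose two endpoints are non-leaves is monochromatic; the subgraph $H$ induced by the non-leaves is nonempty and connected (no leaf is an internal vertex of a path, and $G$ is not a single edge), so $H$ is monochromatic, say orange, and every blue vertex is a leaf. If some leaf is orange, its unique neighbour lies in $H$ (it cannot be a leaf since $G$ is connected with $n\ge4$) and is orange, so that agent has utility $1$ and $\u(\sigma)\ge 1$. If all leaves are blue, the $\tfrac n2$ orange vertices are exactly $V(H)$, so $H$ is connected on $\tfrac n2$ vertices and has at least $\tfrac n2-1$ monochromatic edges; since $\delta_v\le n-1$, summing the orange utilities gives $\u(\sigma)\ge \tfrac{2(n/2-1)}{n-1}=\tfrac{n-2}{n-1}\ge\tfrac12$. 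In every case $\u(\sigma)\ge\tfrac12$.

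\emph{Lower bound $LPoA\ge 2n+\tfrac8n-8$.} For each even $n$ I would construct a broom-like tree $G_n$ on $n$ vertices: a single degree-$2$ vertex $u$ joining two \emph{non-adjacent} hubs of degree roughly $\tfrac n2$, each carrying a bundle of pendant leaves, with the leaf counts tuned so that both the equilibrium colouring and the optimal colouring respect the $\tfrac n2$--$\tfrac n2$ split. Colour $G_n$ near-properly, so that exactly one edge — incident to $u$ — is monochromatic: then the agent on $u$ has utility $\tfrac12$, the orange hub gets a small positive utility $\Theta(1/n)$, and everyone else (leaves, or hubs surrounded by the opposite colour) has utility $0$, so $\u(\sigma)=\tfrac12+\Theta(1/n)$, essentially the smallest value the upper-bound analysis permits. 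This $\sigma$ is a local swap equilibrium because (i) leaves are \emph{inert}: a local swap between a leaf and its only neighbour moves that neighbour onto a leaf vertex, where its utility becomes $0$, so it cannot strictly improve; and (ii) the only bichromatic edge with both endpoints of degree $\ge2$ is at $u$, where a direct check (again in the spirit of Lemma~\ref{sum_in_eq}) rules out a profitable swap — and since the two hubs are not adjacent, they cannot swap to exchange sides. Finally, cutting $G_n$ along a suitable edge splits it into two equal monochromatic halves whose only welfare loss is $\Theta(1/n)$ at the two cut endpoints, so the optimum attains welfare $n-\Theta(1/n)$. Dividing the two quantities yields the bound $2n+\tfrac8n-8$.

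\emph{Where the difficulty lies.} The upper bound is essentially mechanical once the lemma ``$\u(\sigma)\ge\tfrac12$ at every local equilibrium'' is isolated. The core of the theorem is the lower-bound construction, which must satisfy two competing requirements at once: (a) the near-proper colouring has to be a genuine local equilibrium, which forces $u$ to have degree exactly $2$ and the two hubs to be non-adjacent, so that neither hub profits from swapping with $u$ or with the other hub; and (b) the equal-halves colouring must lose only $O(1/n)$ welfare, which constrains where the leaves and the single monochromatic edge sit relative to the cut edge. Reconciling (a) and (b) under the global colour-balance constraint, and then verifying local stability by inspecting the few possible local-swap types, is the delicate part.
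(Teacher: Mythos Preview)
Your plan is essentially the paper's own proof. The upper bound argument is complete and in fact slightly cleaner than the paper's: you isolate the key lemma ``every local equilibrium has $\u(\sp)\ge\tfrac12$'' and dispatch the residual case (all bichromatic edges touch a leaf) by counting monochromatic edges in the non-leaf subgraph, whereas the paper argues that case by a balance-counting contradiction. For the lower bound, the ``broom-like tree'' with a degree-$2$ bridge $u$ between two pendant-heavy hubs is exactly the paper's construction (two stars sharing the vertex $v_2$), and your near-proper colouring is the paper's equilibrium.

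There is one concrete slip in your optimum estimate. When you cut $G_n$ into two monochromatic halves, the cut edge must lie on the path $\text{hub}_1\!-\!u\!-\!\text{hub}_2$, so one of the two cut endpoints is $u$, which has degree~$2$; the loss there is $\tfrac12$, not $\Theta(1/n)$. Hence the optimum on this graph is $n-\tfrac12-\Theta(1/n)$, and dividing by the equilibrium welfare $\tfrac12+\Theta(1/n)$ gives a ratio of $2n-9+O(1/n)$ rather than the stated $2n-8+\tfrac{8}{n}$. (The paper asserts the optimum meets the generic bound $n-\tfrac{4}{n}$, but the degree-$2$ bridge makes that impossible; the same arithmetic slip is present there.) The headline $\Theta(n)$ lower bound survives, but to hit the exact constant in the theorem you would need to revisit the accounting.
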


	\begin{proof}
	Fix a $2$-SSG with $o > 1$ orange agents played on a graph $G$ with $n$ vertices. First, we observe that the social welfare of a social optimum is at most $n-2$+$\frac{o-1}{o}+\frac{n-o-1}{n-o} = n - \frac{n}{o(n-o)}$, as there must be at least one orange vertex that is adjacent to at least one blue vertex. 

	Now fix a local swap equilibrium $\sp$. We will show that the social welfare of $\sp$ is at least $\frac12$. First, assume that there is exactly one vertex $v$ with $\delta_v(G) > 1$. Then, $G$ has to be a star and since $o > 1$ there has to be at least one leaf vertex with an agent $i$ with $\u_i(\sp) = 1$. Therefor, there has to be at least two adjacent vertices $v_1$ and $v_2$ with $\delta_i > 1$ for $i \in \{1,2\}$. By Lemma~\ref{sum_in_eq} we know that if $v_1$ and $v_2$ are occupied by agents of different types then $\u_{\sp^{-1}(v_1)} + \u_{\sp^{-1}(v_2)} \geq \frac12$. Hence, assume that there is no such pair $v_1$ and $v_2$ and assume, without loss of generality, that all adjacent vertex pairs $v_1$ and $v_2$, with $\delta_i > 1$ for $i \in \{1,2\}$, are occupied by orange agents. It follows, since $G$ is connected, that all blue agents only occupy leaf vertices. If the social welfare of $\sp$ is less than $\frac12$, all orange agents have to be surrounded by more blue than orange agents. Since one blue agent is only adjacent to one orange agent this contradicts our requirement of a balanced game. Hence, the PoA is upper bounded by $2\left( n - \frac{n}{o(n-o)} \right)$. With $o = \frac{n}{2}$ this is equal to $2n - \frac{8}{n}$.

	For the lower bound consider the graph $G$ in Figure~\ref{local_poa}.
	\begin{figure}[h]
	\centering
			\begin{subfigure}[c]{0.35\textwidth}
				\includegraphics[width=0.5\textwidth]{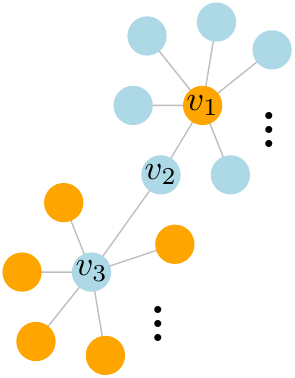}
				\subcaption{Worst equilibrium}
				\label{local_poa_1}
			\end{subfigure}
			\begin{subfigure}[c]{0.35\textwidth}
				\includegraphics[width=0.5\textwidth]{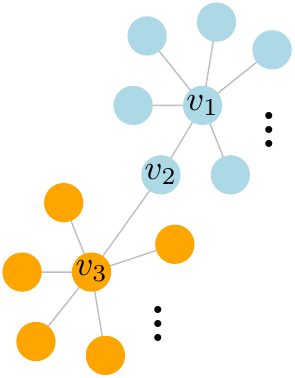}
				\subcaption{Social optimum}
				\label{local_poa_2}
			\end{subfigure}
			\caption{A lower bound for the local balanced $2$-SSG. The agent types are marked orange and blue.}
			\label{local_poa}
		\end{figure}
		$G$ consists of two stars which are connected by a common leaf vertex. Let $v_1$ be the center of the first star, $v_3$ be the center of the second star and $v_2$ be the common vertex. We first prove that the configuration shown in Figure~\ref{local_poa_1} is an equilibrium. Note, that none of the leaf vertices can perform a profitable swap since the agents on $v_1$ and $v_3$, respectively, would receive $\u_{\sp^{-1}(v_1)} = 0$ and $\u_{\sp^{-1}(v_2)} = 0$, respectively. So the only possible swap is between the agents placed on $v_1$ and $v_2$. However the orange agent currently located on $v_1$ would not increase her utility by swapping since she would be surround only by two blue agents placed on $v_1$ and $v_2$ and therefor would receive a utility equals $0$. Hence, no local swap is possible and only the agents placed on $v_2$ and $v_3$ receive positive utility. The social welfare is equal to $\frac12 + \frac{1}{o-1}$ which is for $o = \frac{n}{2}$ equal to $\frac12 + \frac{2}{n-2}$. The social optimum is shown in Figure~\ref{local_poa_2}. This is easy to see, since we meet the trivial upper bound $n-2$+$\frac{o-1}{o}+\frac{n-o-1}{n-o} = n - \frac{n}{o(n-o)}$ which is for $o = \frac{n}{2}$ equal to $n-\frac{4}{n}$. Hence, the PoA is lower bounded by $\frac{2(n-2)^2}{n} = 2n + \frac{8}{n} - 8$.
	\end{proof}

	\noindent If the underlying graph $G$ does not contain leaf vertices, i.e., all vertices have at least degree $2$, we can prove a smaller LPoA. In particular, if the ratio between the maximum and minimum degree of vertices in $G$ is constant, we achieve a constant LPoA. 

	\begin{theorem}
		The LPoA of local $2$-SSGs on a graph $G$ with minimum degree $\delta \geq 2$ and maximum degree $\Delta$ is at most $2\left(1+\frac{\Delta+1}{\delta-1}\right)$.	
		\label{PoA_local}
	\end{theorem}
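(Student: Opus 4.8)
The plan is to bound the social welfare of an arbitrary local swap equilibrium $\sp$ from below by a constant (specifically by $\tfrac{1}{2}$, as in Theorem~\ref{thm:PoA_max_min_degree}), and then contrast it with the trivial upper bound $n-\tfrac{n}{o(n-o)} < n$ on the social optimum, so that the ratio is at most $2n$ — but now, exploiting $\delta \geq 2$, we get the much finer bound claimed. The key new input is that with $\delta \geq 2$ every agent has at least two neighbors, so Lemma~\ref{sum_in_eq} for a non-adjacent orange/blue pair $i,j$ with $\delta_{\sp(i)} \leq \delta_{\sp(j)}$ gives $\u_i(\sp) + \u_j(\sp) \geq 1$, and even for an adjacent pair it gives $\u_i(\sp) + \u_j(\sp) \geq 1 - \tfrac{1}{\delta_{\sp(i)}} \geq 1 - \tfrac{1}{2} = \tfrac{1}{2}$. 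So the first step is: pick any orange agent $i$; since the graph is connected and there is at least one blue agent, walking along a path from $\sp(i)$ to some blue-occupied vertex we find an adjacent orange/blue pair, and summing their utilities gives social welfare $\geq \tfrac{1}{2}$. That already yields $\mathrm{LPoA} \leq 2\big(n - \tfrac{n}{o(n-o)}\big)$, which is the weak bound; the real work is to do better.

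To get the $2\big(1+\tfrac{\Delta+1}{\delta-1}\big)$ bound I would, following the matching idea from the proof of Theorem~\ref{thm:poa_2ssg}, consider a maximum-cardinality matching $M$ of \emph{feasible pairs} — here a feasible pair should be a pair of non-adjacent vertices occupied by differently-colored agents — and argue about the leftover vertices. If every orange vertex is matched, then $|M| = o$ and Lemma~\ref{sum_in_eq} gives welfare $\geq o \geq 1$, which combined with the optimum bound yields $\mathrm{LPoA} \leq 2n/o \cdot (\ldots)$; this is good only when $o$ is large. The harder regime is when many orange vertices are unmatched. By maximality of $M$, every unmatched orange vertex is adjacent to every unmatched blue vertex (and vice versa), so if $x$ denotes the number of unmatched orange vertices, each unmatched vertex has degree $\geq x$, forcing $x \leq \Delta$; pairing up $x$ unmatched orange with $x$ unmatched blue vertices and applying Lemma~\ref{sum_in_eq} in the adjacent case gives each such pair utility-sum $\geq 1 - \tfrac{1}{x} \geq 1 - \tfrac{1}{1}$ in the worst case, so these contribute little. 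The point is then to combine: welfare $\geq (o - x) \cdot 1 + x(1 - 1/x) = o - 1$ from the matching-plus-leftover argument (exactly as in Theorem~\ref{thm:poa_2ssg}), \emph{and separately} a lower bound scaling with $\delta$ coming from the fact that every matched monochromatic-ish vertex still has many same-colored neighbors. Concretely, I expect to show that the number of monochromatic edges, equivalently $\sum_i \u_i(\sp)\delta_{\sp(i)}/1$ suitably normalized, is large: an equilibrium with small welfare would have to have almost all edges bichromatic, and a near-bipartite-looking coloring on a graph with $\delta \geq 2$ forces, via the local-swap stability condition, a welfare contribution of order $(\delta-1)/\Delta$ per vertex on one side.

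The cleanest route is probably: from Lemma~\ref{sum_in_eq}, for \emph{every} orange/blue pair $i,j$ we have $\u_i(\sp)+\u_j(\sp) \geq 1 - \tfrac{1}{\delta} \geq \tfrac{\delta-1}{\delta}$; sum this over a perfect matching between orange and a size-$o$ subset of blue agents (using $o \leq b$) to get $\u(\sp) \geq o \cdot \tfrac{\delta-1}{\delta} \geq \tfrac{\delta-1}{\delta}$ — still weak for small $o$ — but then additionally each of the $n - o \geq o$ remaining blue agents individually contributes $\u_j(\sp) \geq$ (something like $\tfrac{\delta - 1 - \text{(number of orange neighbors)}}{\delta}$); summing the bichromatic-edge count gives at most $\Delta \cdot o$ orange endpoints total, so on average a blue agent has few orange neighbors, hence most blue agents have utility close to $1$, and we recover $\u(\sp) = \Omega\!\big(n \cdot \tfrac{\delta-1}{\Delta}\big)$. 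Dividing the optimum $< n$ by this gives $\mathrm{LPoA} = O\!\big(\tfrac{\Delta}{\delta-1}\big)$, and tightening constants in the averaging yields exactly $2\big(1 + \tfrac{\Delta+1}{\delta-1}\big)$. The main obstacle I anticipate is the averaging/counting step: making the ``most blue agents have high utility'' claim quantitatively tight enough to land on the precise constant $2\big(1+\tfrac{\Delta+1}{\delta-1}\big)$ rather than a looser $O(\Delta/\delta)$ — this requires carefully charging each bichromatic edge to the right side and using $\delta \geq 2$ to convert Lemma~\ref{sum_in_eq} into a per-vertex rather than per-pair guarantee, and correctly handling the small-$o$ corner cases where the matching is not perfect.
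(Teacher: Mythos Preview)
Your proposal contains a genuine gap. The core issue is that Lemma~\ref{sum_in_eq} characterizes \emph{global} swap equilibria: for a \emph{local} swap equilibrium $\sp$, the inequality $\u_i(\sp)+\u_j(\sp)\geq 1-\tfrac{1_{ij}(\sp)}{\delta_{\sp(i)}}$ is only guaranteed for \emph{adjacent} pairs $i,j$ (since only local swaps need to be unprofitable). Your ``cleanest route'' --- summing $\u_i+\u_j\geq 1-\tfrac{1}{\delta}$ over a perfect matching of the $o$ orange agents with $o$ blue agents --- therefore breaks down: nothing prevents two non-adjacent, differently colored agents from both having tiny utility in a local equilibrium, and in general no large matching of adjacent bichromatic pairs exists. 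The feasible-pair matching you import from Theorem~\ref{thm:poa_2ssg} is specific to the global case and does not transfer; this is precisely why the LPoA in Theorem~\ref{thm:PoA_max_min_degree} is $\Theta(n)$ rather than $O(1)$ even in the balanced case.

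The paper avoids this by never invoking Lemma~\ref{sum_in_eq} on non-adjacent pairs. Instead it fixes the threshold $\rho=\tfrac{\delta-1}{2\delta}$ and observes that if an orange agent $j$ has $\u_j(\sp)<\rho$, then \emph{each blue neighbor} $i$ of $j$ (and there are at least $(1-\rho)\delta_{\sp(j)}\ge\tfrac{\delta+1}{2}$ of them) must satisfy $\u_i(\sp)>1-\tfrac{1}{\delta}-\rho=\rho$. A double count of bichromatic edges between low-utility orange agents and their (necessarily high-utility) blue neighbors then yields $b-b'\geq\tfrac{\delta}{\Delta}\,o'$ (where $o',b'$ count low-utility agents of each color), from which the average utility over these agents is at least $\tfrac{\rho\,\delta/\Delta}{1+\delta/\Delta}=\tfrac{\delta-1}{2(\delta+\Delta)}$, giving the bound. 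Your bichromatic-edge-counting intuition in the last paragraph is in the right spirit, but without the threshold device you cannot convert the per-edge information from Lemma~\ref{sum_in_eq} into a per-agent lower bound, and the constant $2\bigl(1+\tfrac{\Delta+1}{\delta-1}\bigr)$ remains out of reach.
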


	\begin{proof}
		Fix  a local swap equilibrium $\sp$ on $G$ with $\delta(G) \geq 2$.
		Let $\rho :=\frac{\delta-1}{2\delta}$ and let $o'$ and $b'$ be the numbers of orange and blue agents that have a utility strictly less than $\rho$, respectively. Clearly, $o-o'$ and $b-b'$ are the numbers of orange and blue agents that have a utility of at least $\rho$, respectively. We first prove that $b-b' \geq \frac{\delta o'}{\Delta}$ as well as that $o-o' \geq \frac{\delta b'}{\Delta}$ and show then how these two inequalities imply the theorem statement. 
	
		We only prove the first inequality, i.e., $b-b' \geq \frac{\delta o'}{\Delta}$ as the proof of the other inequality is similar. Let $i$ and $j$, respectively, be a blue agent and an orange agent that occupy two adjacent vertices in $G$, say $\sigma(i)=u$ and $\sigma(j)=v$, and such that $\u_j(\sp) < \rho$. By Lemma~\ref{sum_in_eq}, we have that $\u_i(\sp)+\u_j(\sp) \geq 1-\frac{1}{\delta}$, from which we derive $\u_i(\sp) > 1-\frac{1}{\delta}-\frac{\delta-1}{2\delta} = \frac{\delta-1}{2\delta}=\rho$.
	
		Let $G'$ be the subgraph of $G$ containing all the non-monochromatic edges, i.e., each edge of $G'$ connects a vertex occupied by an orange agent with a vertex occupied by a blue agent. Clearly, $G'$ is bipartite. Consider the vertex-induced subgraph $H$ of $G'$ in which we have all the~$o'$ orange agents having a utility strictly less than $\rho$ on one side and all the $b-b'$ blue agents having a utility of at least $\rho$ on the other side. Since for each vertex $v$ of $H$ occupied by an orange agent, there are at least $(1-\rho)\delta_v \geq \frac{\delta+1}{2}$ vertices adjacent to $u$ that are occupied by blue agents and each such blue agent have a utility of at least $\rho$, the degree of $v$ in $H$ is at least $\frac{\delta+1}{2}$. Therefore,  
		\begin{equation}\label{eq:lower_bound_size_of_graph}
			|E(H)| \geq \frac{\delta+1}{2} o'.
		\end{equation}
		Furthermore, since each edge of $H$ is incident to a blue agent that has a utility of at least $\rho$, the degree in $H$ of every vertex $u$ that is occupied by a blue agent is at most $(1-\rho)\delta_u \leq \frac{\delta+1}{2\delta}\Delta$. Therefore,
		\begin{equation}\label{eq:upper_bound_size_of_graph}
			|E(H)| \leq \frac{\Delta(\delta+1)}{2\delta} (b-b').
		\end{equation}
		Plugging~(\ref{eq:lower_bound_size_of_graph}) into~(\ref{eq:upper_bound_size_of_graph}) and simplifying gives $b-b' \geq \frac{\delta}{\Delta} o'$. 
	
		Finally, we show how $b-b' \geq \frac{\delta o'}{\Delta}$ and $o-o' \geq \frac{\delta b'}{\Delta}$ imply the theorem statement. 
		The average~utility of all the agents in $H$ is at least
		$
		\frac{\rho(b-b')}{o'+(b-b')} \geq \frac{\rho\frac{\delta}{\Delta}}{1+\frac{\delta}{\Delta}}=\frac{\delta-1}{2(\delta+\Delta)}.
		$
		Similarly, the average utility of the~$b'$ blue agents whose utilities are strictly less than $\rho$ and the $o-o'$ orange agents whose utilities are of at least $\rho$ is also at least $\frac{\delta-1}{2(\delta+\Delta)}$. Therefore, the LPoA is at most $\frac{2(\delta+\Delta)}{\delta-1}=2\left(1+\frac{\Delta+1}{\delta-1}\right)$. 
	\end{proof}

	\noindent We observe that the $LPoA$ on a graph with minimum degree $\delta(G) = 1$ can be unbounded. Consider the star graph with $\Delta$ leaves and let $\sp$ be a strategy profile where the unique orange agent occupies the star center, while all the blue agents occupy the leaves. This is clearly a swap equilibrium of $0$ social welfare. Any configuration in which a blue agent occupies the star center has strictly positive social welfare. 

	However, as the following theorem shows, the LPoA can be upper bounded by a function of~$\Delta$ if we force $n \geq \Delta+2$, i.e., we avoid the pathological star graph of $\Delta+1$ vertices.

	\begin{theorem}\label{thm:LPoA_connected_and_bounded_max_degree}
		For every $\epsilon > 0$, the LPoA of local $2$-SSGs on a graph $G$ with maximum degree $\Delta \leq n-2$ is between $\frac{\Delta(\Delta-1)}{2}-\epsilon$ and $4(\Delta^2-\Delta+1)$.
	\end{theorem}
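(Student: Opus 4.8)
The plan is to prove the two directions separately. For the \emph{upper bound} $4(\Delta^2-\Delta+1)$, I would first record that $\u(\sp^*(G,\mathbf t))<n$: every agent has utility at most $1$, and since the instance is non-trivial there is at least one orange--blue edge, so (exactly as in the opening of the proof of Theorem~\ref{thm:PoA_max_min_degree}) $\u(\sp^*)\le n-2+\frac{o-1}{o}+\frac{b-1}{b}<n$. Hence it suffices to show that every local swap equilibrium $\sp$ of a game on $G$ with $\Delta\le n-2$ satisfies $\u(\sp)\ge \frac{n}{4(\Delta^2-\Delta+1)}$, and the whole difficulty is this welfare lower bound.

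The structural fact that makes the hypothesis $\Delta\le n-2$ bite is the following: if some vertex $v$ had all of its neighbors being leaves, then $G=K_{1,\delta_v}$ would be a star on $\delta_v+1=\Delta+1$ vertices, contradicting $\Delta\le n-2$; hence every vertex of $G$ has at least one non-leaf neighbor. Next I would invoke Lemma~\ref{sum_in_eq} restricted to pairs of \emph{adjacent} agents (the only ones relevant for a local equilibrium), just as in the proof of Theorem~\ref{PoA_local}: if $u,v$ are adjacent, occupied by agents of different colors, and both are non-leaves (so $\delta_u,\delta_v\ge 2$), then $\u_{\sp^{-1}(u)}(\sp)+\u_{\sp^{-1}(v)}(\sp)\ge 1-\frac12=\frac12$. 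Take a maximal matching $M$ among these ``heavy bichromatic'' edges; it already accounts for $\frac{|M|}{2}$ of the welfare. It then remains to bound the number of unmatched vertices against $|M|$ and the remaining positive-utility agents: a leftover non-leaf vertex $u$ with $\u_u(\sp)=0$ must have \emph{all} of its neighbors of the opposite color, so its guaranteed non-leaf neighbor is bichromatic-adjacent to it and, by maximality of $M$, matched --- hence such vertices are charged at most $\Delta$ per matched vertex; a leftover non-leaf vertex with a monochromatic non-leaf neighbor has utility at least $\frac1\Delta$ on its own; and every leaf is charged to its (non-leaf) parent, of which there are at most $\Delta-1$ per parent. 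Combining these counts yields $n\le c\,(\Delta^2-\Delta+1)\,\u(\sp)$ for a constant $c$, and the delicate part is to arrange the charging so that $c=4$.

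For the \emph{lower bound} I would build a ``caterpillar of stars'': a backbone path $c_1-c_2-\dots-c_k$ in which each internal $c_i$ additionally carries $\Delta-2$ private leaves and each end carries $\Delta-1$ private leaves, so every backbone vertex has degree exactly $\Delta$ and $n=k(\Delta-1)+2\ge\Delta+2$, i.e.\ $\Delta\le n-2$. Color the whole backbone orange and all leaves blue; this is a valid profile with $o=k\le b$ for $\Delta\ge 3$ (for $\Delta=2$ the bound is $1$, which is trivial). I claim this is a local equilibrium: the only bichromatic edges are backbone--leaf edges, and in a swap between a leaf and its parent the parent's orange agent is moved onto a leaf where it has utility $0$, so no local swap strictly improves both agents. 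The welfare of this profile is $2\cdot\frac1\Delta+(k-2)\cdot\frac2\Delta=\frac{2(k-1)}{\Delta}$, while clustering all orange agents onto a prefix of the caterpillar (a union of full stars, up to a bounded remainder) yields a profile of welfare $n-O(1)$, where the $O(1)$ term does not grow with $k$; hence $\u(\sp^*)\ge n-O(1)$. The ratio is therefore $\frac{\Delta\big(n-O(1)\big)}{2(k-1)}$, and since $n=k(\Delta-1)+2$ this tends to $\frac{\Delta(\Delta-1)}{2}$ as $k\to\infty$, so for every $\epsilon>0$ a sufficiently large $k$ gives PoA $\ge\frac{\Delta(\Delta-1)}{2}-\epsilon$.

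The main obstacle I anticipate is the upper bound, and specifically \emph{not} the $\Theta(\Delta^2)$ order --- which the matching-plus-charging scheme delivers comfortably --- but squeezing the constant down to exactly $4(\Delta^2-\Delta+1)$. That presumably needs a more economical accounting than the naive one sketched above: crediting monochromatic leaves (which have utility $1$), balancing the $\frac12$ harvested from each matched edge against the $\frac1\Delta$ harvested from scattered positive-utility vertices, and getting the interaction between the leaf count ($\le\Delta-1$ per parent) and the leftover-zero count ($\le\Delta$ per matched vertex) exactly right.
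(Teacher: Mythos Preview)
Your lower-bound construction is essentially the paper's: the paper attaches $\Delta-2$ pendant leaves to every vertex of a cycle of length $o$, colours the cycle orange and the leaves blue, and obtains the same $\frac{\Delta(\Delta-1)}{2}-\epsilon$ asymptotics; in the very next corollary it removes one cycle edge to get exactly your caterpillar. So that half is fine and matches the paper.

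For the upper bound the paper takes a different and much shorter route than your maximal-matching-plus-charging scheme. It proves one structural dichotomy and is done: \emph{for every agent $i$ with $\delta_{\sigma(i)}\ge 2$ there exists $j$ with $\sigma(j)\in N_{\sigma(i)}$, $\delta_{\sigma(j)}\ge 2$, and either $\u_i(\sp)\ge\frac{1}{\Delta}$ or $\u_j(\sp)\ge\frac{1}{2}$}. The proof is two lines: if $\u_i=0$ then all neighbours of $\sigma(i)$ carry the opposite colour; take the guaranteed non-leaf neighbour $j$ (this is where $\Delta\le n-2$ is used); if also $\u_j<\frac{1}{2}$ then the local swap $i\leftrightarrow j$ gives $\u_j(\sp_{ij})\ge\frac{\delta_{\sigma(i)}-1}{\delta_{\sigma(i)}}\ge\frac{1}{2}$ and $\u_i(\sp_{ij})>0$, contradicting equilibrium. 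With this dichotomy in hand, every vertex of $G$ can be placed into either a ``type-1'' set of size at most $\Delta+1$ around a vertex of utility $\ge\frac{1}{\Delta}$, or a ``type-2'' set of size at most $1+\Delta(\Delta-1)$ around a vertex of utility $\ge\frac{1}{2}$ (leaves go into the set of their parent, zero-utility non-leaves and their leaves go into the set of the guaranteed neighbour $j$). The minimum average utility over either type of set is $\frac{1}{2(\Delta^2-\Delta+1)}$, and the stated bound follows with no delicate accounting.

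Your matching approach is not wrong at the $\Theta(\Delta^2)$ level, but the obstacle you flag is genuine: stacking ``at most $\Delta-1$ zero-utility leftovers per matched endpoint'' on top of ``at most $\Delta-1$ leaves per non-leaf parent'' naturally produces a bound of the form $c\Delta^2$ rather than $c(\Delta^2-\Delta+1)$, and there is no obvious way to recover the missing $-\Delta$ from the matching bookkeeping alone. The paper simply sidesteps this by replacing the matching with the dichotomy above, which pins down the right set sizes directly. If you want the constant as stated, I would drop the matching and use the dichotomy-plus-partition argument instead.
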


	\begin{proof}
	We prove the upper bound first. Let $\sp$ be a local swap equilibrium on $G$ with $\Delta \leq n-2$. We claim that, for every agent $i$, with $\delta_{\sigma_i} \geq 2$, there is an agent $j$, with $\sigma_j \in N_{\sigma_i}$ and $\delta_{\sigma_j} \geq 2$, such that $\u_i(\sp)\geq \frac{1}{\Delta}$ or $\u_j(\sp) \geq \frac12$. Indeed, assume that $\u_i(\sp) <\frac{1}{\Delta}$. This implies that $\u_i(\sp)=0$ and, therefore, that every agent occupying a vertex in $N_{\sigma_i}$ is of type different from that of $i$. Therefore, if for the sake of contradiction we assume that $\u_j(\sp) < \frac12$, then $\u_i(\sigma_{ij}) > 0$ and $\u_j(\sigma_{ij}\geq \frac{\delta_{\sigma_i}-1}{\sigma}\geq \frac12$, thus contradicting that $\sp$ is a local swap equilibrium.

	This implies that all the vertices of the graph can be partitioned into two types of sets: 
	\begin{description}
		\item[type-1 set:] It has a size smaller than or equal to $\Delta+1$ and contains a vertex $u$ occupied by an agent that has a utility of at least $\frac{1}{\Delta}$ together with a subset of $N_u$;
		\item[type-2 set:] It has a size smaller than or equal to $1+\Delta(\Delta-1)=\Delta^2-\Delta+1$ and contains a vertex $u$ occupied by an agent that has a utility of at least $\frac12$ together with a subset of $N_u \cup \bigcup_{v \in N_u}N_v$.
	\end{description}
	The average utility of all the agents contained in type-1 sets is at least $\frac{1}{\Delta^2+\Delta}$, while the average utility of all the agents contained in type-2 sets is at least $\frac{1}{2(\Delta^2-\Delta+1)}$. Therefore, as $\Delta \geq 2$, the average utility of an agent is at least
	\[ \min\left\{\frac{1}{\Delta^2+\Delta}, \frac{1}{2(\Delta^2-\Delta+1)}\right\} = \frac{1}{2(\Delta^2-\Delta+1)}.\]
	The upper bound of the LPoA follows.

	For the lower bound of the LPoA, it is enough to consider the instance with $o$ orange agents and $b=(\Delta-2)o$ blue agents, thus, $n=(\Delta-1)o$, consisting of a cycle of length $o$, whose vertices are all occupied by the orange agents, to which we add $\Delta-2$ degree-1 vertices appended to each vertex. Clearly, all the degree-1 vertices are occupied by the blue agents. It is easy to check that the given strategy profile is a local swap equilibrium. Now, observe that each blue agent has a utility of $0$, while each orange agent has a utility of $\frac{2}{\Delta}$. The social welfare of this local swap equilibrium is equal to $\frac{2o}{\Delta}=\frac{2n}{\Delta(\Delta-1)}$. If we assume that $o$ is a multiple of $\Delta-1$, then the social optimum shown in the picture has a social welfare equal to $n-4+4\frac{\Delta-1}{\Delta}=\frac{n-4}{\Delta}$. Therefore, if we choose $n \geq \frac{2(\Delta-1)}{\epsilon}$, we have that the LPoA is lower bounded by
	\[ \left(n-\frac{4}{\Delta}\right)\frac{\Delta(\Delta-1)}{2n} = \frac{\Delta(\Delta-1)}{2}-\frac{2(\Delta-1)}{n} \geq \frac{\Delta(\Delta-1)}{2}-\epsilon.\]
	\end{proof}

	\noindent If we desist from star graphs, the class of trees meet the conditions required by Theorem~\ref{thm:LPoA_connected_and_bounded_max_degree} and we get the following corollary.

	\begin{corollary}\label{cor:poa_trees}
		For every $\epsilon > 0$, the LPoA of local $2$-SSGs on a tree graph $G$ with maximum degree $\Delta \leq n-2$ is at least $\frac{\Delta(\Delta-1)}{2}-\epsilon$.
	\end{corollary}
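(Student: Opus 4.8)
The plan is to reuse the lower-bound construction from the proof of Theorem~\ref{thm:LPoA_connected_and_bounded_max_degree}, but to replace the cycle by a path so that the underlying graph becomes a tree; everything else goes through with only cosmetic changes. Assume $\Delta\ge 3$, since for $\Delta\le 2$ the claimed bound is at most $1$ and holds trivially (the LPoA is always at least $1$). Fix a large integer $s$ that is a multiple of $\Delta-1$ and let $G$ be the \emph{caterpillar} consisting of a spine path $v_1,\dots,v_s$ together with, for each $i\in[s]$, a set of $\Delta-2$ pendant leaves attached to $v_i$. Then the internal spine vertices have degree $\Delta$ while the spine endpoints and all leaves have strictly smaller degree, so $\Delta(G)=\Delta$; moreover $n=(\Delta-1)s$, which for $s$ large enough also satisfies $\Delta\le n-2$. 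The graph has $o=s$ spine vertices and $b=(\Delta-2)s$ leaves, and $o\le b$ holds because $\Delta\ge 3$.

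First I would exhibit a bad local swap equilibrium $\sp$ by placing all orange agents on the spine and all blue agents on the leaves. Any profitable swap must involve an orange and a blue agent, hence a spine agent and a leaf agent; for such a swap to be \emph{local} the two occupied vertices must be adjacent, i.e.\ the leaf must be one of the $\Delta-2$ leaves of that spine vertex. After such a swap the orange agent ends up on a leaf whose unique neighbour is now blue and therefore has utility $0$, whereas beforehand its utility was $\frac{2}{\Delta}$ (if the spine vertex is internal) or $\frac{1}{\Delta-1}$ (if it is a spine endpoint), in both cases positive; so that swap is not profitable and $\sp\in LSE(G,\mathbf t)$ (note $\sp$ is \emph{not} a global swap equilibrium, which is consistent with its being a witness for LPoA and not for PoA). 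In $\sp$ every blue agent has utility $0$, the two spine endpoints have utility $\frac{1}{\Delta-1}$, and the remaining $s-2$ spine agents have utility $\frac{2}{\Delta}$, so $\u(\sp)=\frac{2(s-2)}{\Delta}+\frac{2}{\Delta-1}\le\frac{2s}{\Delta}$, the last inequality using $\Delta\ge 2$.

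Next I would lower-bound the optimal social welfare by a concrete ``segregated'' profile $\sp'$: colour $v_1,\dots,v_t$ with $t=\frac{s}{\Delta-1}$ together with all their leaves orange, and everything else blue; this uses exactly $(\Delta-1)t=s$ orange agents, so $\sp'$ is feasible, and for $s$ large $v_t$ and $v_{t+1}$ are both internal spine vertices. In $\sp'$ every agent other than those on $v_t$ and $v_{t+1}$ sees only same-coloured neighbours and has utility $1$, while the agents on $v_t$ and $v_{t+1}$ have exactly one neighbour of the opposite colour and hence utility $\frac{\Delta-1}{\Delta}$. Thus $\u(\sp^*(G,\mathbf t))\ge\u(\sp')=n-2+2\cdot\frac{\Delta-1}{\Delta}=n-\frac{2}{\Delta}$, and combining the two bounds,
\[
LPoA\ \ge\ \frac{\u(\sp')}{\u(\sp)}\ \ge\ \frac{\,n-\frac{2}{\Delta}\,}{\frac{2s}{\Delta}}\ =\ \frac{\Delta(\Delta-1)s-2}{2s}\ =\ \frac{\Delta(\Delta-1)}{2}-\frac{1}{s}\ \ge\ \frac{\Delta(\Delta-1)}{2}-\epsilon
\]
for $s\ge\lceil 1/\epsilon\rceil$ (chosen to be a multiple of $\Delta-1$). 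I do not expect a genuine obstacle here: the only points requiring care are the verification that $\sp$ is a local swap equilibrium and the bookkeeping of how many agents of each colour each block of the caterpillar absorbs, both of which are elementary — morally this corollary is simply the tree instantiation of the general-graph lower bound already established in Theorem~\ref{thm:LPoA_connected_and_bounded_max_degree}.
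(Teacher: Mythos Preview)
Your proposal is correct and takes essentially the same approach as the paper: the paper simply says to remove one edge from the cycle used in the lower bound of Theorem~\ref{thm:LPoA_connected_and_bounded_max_degree}, yielding precisely the caterpillar you describe, and then notes that for $n$ large enough the same $\frac{\Delta(\Delta-1)}{2}-\epsilon$ bound goes through. Your write-up is more explicit than the paper's (you carefully verify that $\sp$ is a local but not global swap equilibrium, you track the two spine endpoints' slightly different utilities, and you note that on a path the segregated optimum has only one bichromatic edge rather than two), but the underlying construction and argument are identical.
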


	\begin{proof}
	Consider the lower bound construction given in Theorem~\ref{thm:LPoA_connected_and_bounded_max_degree} in which we remove one edge from the cycle. There is a threshold value $f(\Delta,\epsilon)$ such that for every $n \geq f(\Delta,\epsilon)$, the LPoA is at least $\frac{\Delta(\Delta-1)}{2}-\epsilon$.
	\end{proof}

	\subsection{Regular Graphs}

	In this section we provide upper and lower bounds to the LPoA for regular graphs, i.e., for graphs where all vertices have the same degree. The key is the following technical lemma.

	\begin{lemma}\label{lm:average_utility_Delta_window}
		Let $\sp$ be a local swap equilibrium, and let $\Delta=2\alpha+\beta$, with $\alpha \in {\mathbb N}$ and $\beta \in \{0,1\}$. Let $X\subseteq V$ be a subset of vertices such that $\delta_v=\Delta$ for every $v \in N_X:=\bigcup_{x \in X}N_x$. Finally, let $Z \subseteq N_X$ be the set of vertices occupied by the agents that have a utility strictly larger than $\rho :=\frac{\alpha}{2\alpha+1}$. Then, the average utility of the agents that occupy the vertices in $X \cup Z$ is at least~$\rho$.
	\end{lemma}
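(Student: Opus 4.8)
The plan is to split $X\cup Z$ into the ``deficient'' set $B:=\{v\in X:\u_{\sp^{-1}(v)}(\sp)<\rho\}$ and everything else. Writing $D_v:=\rho-\u_{\sp^{-1}(v)}(\sp)>0$ for $v\in B$, and observing that $B\cap Z=\emptyset$ (utilities $<\rho$ versus $>\rho$) while every vertex of $(X\cup Z)\setminus B$ contributes a non-negative term $\u_{\sp^{-1}(\cdot)}(\sp)-\rho$ to $\sum_{v\in X\cup Z}(\u_{\sp^{-1}(v)}(\sp)-\rho)$, the whole claim reduces to the single inequality $\sum_{v\in B}D_v\le\sum_{w\in Z}\bigl(\u_{\sp^{-1}(w)}(\sp)-\rho\bigr)$. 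I would prove this by a charging argument that routes the deficit of each vertex of $B$ to its neighbours lying in $Z$.

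First I would extract the purely local consequences of every relevant vertex having degree $\Delta=2\alpha+\beta$ — which holds on $N_X$ by hypothesis and, in the regular graphs for which the lemma is used, on all of $X$ as well. For $v\in B$ the agent on $v$ has an integral number of same-coloured neighbours strictly below $\rho\Delta\le\alpha$, hence at most $\alpha-1$; so $v$ has at least $\alpha+1$ opposite-coloured neighbours and $\u_{\sp^{-1}(v)}(\sp)\le\frac{\alpha-1}{\Delta}$, which bounds $D_v\le D_{\max}(\alpha,\beta)$. Dually, for $w\in Z$ the agent on $w$ has more than $\rho\Delta$ same-coloured neighbours, hence at most $\alpha$ opposite-coloured ones, and $\u_{\sp^{-1}(w)}(\sp)-\rho\ge s_{\min}(\alpha,\beta)>0$; a short separate check for $\beta=0$ and $\beta=1$ then gives $D_{\max}/s_{\min}\le\frac{\alpha+1}{\alpha}$. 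Next, Lemma~\ref{sum_in_eq} locates the opposite-coloured neighbours of a deficient vertex: if $v\in B$ and $w$ is such a neighbour, the two agents form an adjacent bichromatic pair, so $1_{ij}(\sp)=1$ and both degrees equal $\Delta$, whence $\u_{\sp^{-1}(v)}(\sp)+\u_{\sp^{-1}(w)}(\sp)\ge 1-\tfrac1\Delta$; together with $\u_{\sp^{-1}(v)}(\sp)\le\frac{\alpha-1}{\Delta}$ this forces $\u_{\sp^{-1}(w)}(\sp)>\rho$, i.e.\ $w\in Z$. Hence \emph{every} opposite-coloured neighbour of a vertex of $B$ lies in $Z$.

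Now the charging. Let each $v\in B$ split its deficit $D_v$ evenly among its opposite-coloured neighbours, of which there are at least $\alpha+1$ and all of which lie in $Z$; thus each recipient gets at most $\frac{D_{\max}}{\alpha+1}$ from $v$. Since any $w\in Z$ has at most $\alpha$ opposite-coloured neighbours, it receives in total at most $\alpha\cdot\frac{D_{\max}}{\alpha+1}\le\alpha\cdot\frac1{\alpha+1}\cdot\frac{\alpha+1}{\alpha}\,s_{\min}=s_{\min}\le\u_{\sp^{-1}(w)}(\sp)-\rho$. Summing the amounts sent over $v\in B$ and the amounts received over $w\in Z$ yields $\sum_{v\in B}D_v\le\sum_{w\in Z}\bigl(\u_{\sp^{-1}(w)}(\sp)-\rho\bigr)$, which is exactly what is needed.

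The main obstacle I anticipate is the interaction of the two parities $\beta\in\{0,1\}$ with integrality: the bare equilibrium inequality $\u_{\sp^{-1}(v)}(\sp)+\u_{\sp^{-1}(w)}(\sp)\ge1-\tfrac1\Delta$ does not reach $2\rho$ when $\beta=0$, so one genuinely must replace $\u_{\sp^{-1}(v)}(\sp)$ by the integral bound $\frac{\alpha-1}{\Delta}$ (and symmetrically sharpen the surplus over $Z$) before $D_{\max}/s_{\min}$ drops to $\le\frac{\alpha+1}{\alpha}$ — and that is precisely the slack the degree imbalance ``at least $\alpha+1$ good neighbours against at most $\alpha$ bad neighbours'' can absorb in the charging step. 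A secondary subtlety is that the hypothesis only grants degree $\Delta$ on $N_X$, while the argument also uses it on the deficient vertices of $X$ themselves; this holds automatically on regular underlying graphs, which is the setting in which the lemma is applied.
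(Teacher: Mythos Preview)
Your charging strategy is the right idea and is closely related to the paper's argument, but the specific inequality $D_{\max}/s_{\min}\le\frac{\alpha+1}{\alpha}$ is false for $\alpha\ge 2$, so the proof as written does not close. The maximum deficit over $B$ is attained at utility $0$ and equals $D_{\max}=\rho=\frac{\alpha}{2\alpha+1}$, while for $\beta=1$ the minimum surplus over $Z$ is $s_{\min}=\frac{\alpha+1}{2\alpha+1}-\rho=\frac{1}{2\alpha+1}$; hence $D_{\max}/s_{\min}=\alpha$ (and $2\alpha$ for $\beta=0$). Your step ``$\u_{\sp^{-1}(v)}(\sp)\le\frac{\alpha-1}{\Delta}$, which bounds $D_v\le D_{\max}$'' has the inequality the wrong way round: an upper bound on the utility yields a \emph{lower} bound on the deficit. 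Concretely, for $\Delta=7$ your estimate would let a $Z$-vertex $w$ of utility $\tfrac47$ receive $3\cdot\frac{\rho}{\alpha+1}=\tfrac{9}{28}>\tfrac17$ from three utility-$0$ neighbours, overshooting its surplus.

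What is missing is the full quantitative content of Lemma~\ref{sum_in_eq}, which you invoke only to place the neighbours in $Z$: if $w\in Z$ has exactly $m$ opposite-coloured neighbours (utility $\frac{\Delta-m}{\Delta}$), then every such neighbour has utility at least $\frac{m-1}{\Delta}$. Thus a $Z$-vertex with small surplus can only be charged by $B$-vertices with correspondingly small deficit, each of which moreover spreads that deficit over $\Delta-j\ge\Delta-\alpha+1$ receivers. Carrying this correlation through, the total charge to $w$ is at most $\frac{m(\rho\Delta-m+1)}{\Delta(\Delta-m+1)}\le\frac{\Delta-m}{\Delta}-\rho$ for all $1\le m\le\alpha$, with equality at $m=\alpha$, $j=\alpha-1$. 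The paper encodes the same correlation as a family of cumulative edge-counting inequalities $\sum_{\ell\le h}(\ell+1)b_{\Delta-\ell-1}\ge\sum_{\ell\le h}(\Delta-\ell)o_\ell$ for every $0\le h\le\alpha-1$ (split by colour) and then minimises the average utility subject to these constraints; the extremal case again sits at $\ell=\alpha-1$. Your side remark about the degree hypothesis only covering $N_X$ is accurate and applies to the paper's proof as well; in every application one has $X\subseteq N_X$, which restores $\delta_v=\Delta$ on $X$.
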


	\begin{proof}
		Let $X_o\subseteq X$ (respectively, $X_b \subseteq X$) be the set of vertices occupied by the orange (respectively, blue) agents that have a utility strictly less than $\rho$. Similarly, let $Z_o \subseteq N_X$ (respectively, $Z_b \subseteq N_X$) be the set of vertices occupied by the orange (respectively, blue) agents that have a utility strictly larger than $\rho$. We show that the average utility of the agents that occupy the vertices $X_o \cup Z_b$ (respectively, $X_b \cup Z_o$) is at least $\rho$. Notice that this immediately implies the theorem statement.
	
		In the rest of the proof, without loss of generality, we prove that the average utility of the agents that occupy the vertices in $X_o \cup Z_b$ is at least $\rho$. First of all, we observe that the utility of each agent in $N_X$ is in the set $\{\frac{\ell}{\Delta} \mid \ell=0,\dots,\Delta\}$. Let $o_\ell$ be the numbers of orange agents that occupy the vertices of $X$ and whose utilities are equal to $\frac{\ell}{\Delta}$. Similarly, let $b_\ell$ be the numbers of orange agents that occupy the vertices of $N_X$ and whose utilities are equal to $\frac{\ell}{\Delta}$. Since we are interested to the orange agents occupying the vertices of $X_o$, we consider the values $o_\ell$ such that $\frac{\ell}{\Delta} < \rho$, or, equivalently, $\ell \leq \alpha-1$. Similarly, since we are interested to the blue agents occupying the vertices of $Z_b$, we consider the values $b_{\Delta-\ell-1}$ such that $\frac{\Delta-\ell-1}{\Delta} > \rho$, or, equivalently, $\ell \leq \alpha-1$. We prove that, for every $0 \leq h \leq \alpha-1$,
		\begin{equation}\label{eq:relation_between_b_and_o}
			\sum_{\ell=0}^{h} (\ell+1)b_{\Delta-\ell-1} \geq \sum_{\ell=0}^{h}(\Delta-\ell)o_\ell.
		\end{equation}
	We observe that if any orange agent $i$ that occupies a vertex $v \in X_o$ has a utility of $\frac{\ell}{\Delta}$, where $0 \leq \ell \leq \alpha-1$, then, since we are in a local swap equilibrium, any of the $\Delta-\ell$ blue agents that occupy the vertices in $N_v$ has a utility of at least $\frac{\Delta-\ell-1}{\Delta}> \rho$ by Lemma~\ref{sum_in_eq}. 
	This implies that $v$ has at least $\Delta-\ell-1$ vertices in its neighborhood that are occupied by  blue agents, and therefore, at most $\ell+1$ vertices in its neighborhood that are occupied by orange agents. Let $G'$ be the (bipartite) subgraph of $G$ containing all the non-monochromatic edges. Consider the subgraph $H$ of $G'$ that is induced by the vertices in $X_h\subseteq X_o$ that are occupied by agents having a utility of at most $\frac{h}{\Delta}$ and the agents in $Z_h \subseteq Z_b$ having a utility of at least $\frac{\Delta-h-1}{\Delta}$. By construction, the degree of a vertex of $X_h$ occupied by an agent of utility equal to $\frac{\ell}{\Delta}$, with $\ell \leq h$, is equal to $\Delta-\ell$. Therefore, if $\delta_v(H)$ denotes the degree of $v$ in $H$, we have that 
	\begin{equation}\label{eq:counting_edges}
		|E(H)|=\sum_{v \in X_h}\delta_v(H)=\sum_{\ell=0}^{h}(\Delta-\ell)o_\ell.
	\end{equation} 
		Since the degree in $H$ of each vertex  in $Z_h$ that is occupied by a blue agent whose utility is equal to $\frac{\Delta-\ell-1}{\Delta}$, with $\ell \leq h$, is upper bounded by $\ell+1$, we have that 
	\begin{equation}\label{eq:upper_bound_couting_edges}
	|E(H)| \leq \sum_{v \in Z_h}\delta_v(H) = \sum_{\ell=0}^h (\ell+1)b_{\Delta-\ell-1}.
	\end{equation}
	Combining~(\ref{eq:counting_edges}) with~(\ref{eq:upper_bound_couting_edges}) gives~(\ref{eq:relation_between_b_and_o}).
	We are now able to compute the average utility with respect to the agents occupying the vertices in $X_o \cup Z_b$. The average utility of such agents equals
	\[\u_{\text{avg}}:=\frac{\sum_{\ell=0}^{\alpha-1}\left(\frac{\Delta-\ell-1}{\Delta}b_{\Delta-\ell-1}\right)+\sum_{\ell=0}^{\alpha-1}\left(\frac{\ell}{\Delta}o_\ell\right)}{\sum_{\ell=0}^{\alpha-1}b_{\Delta-\ell-1}+\sum_{\ell=0}^{\alpha-1}o_\ell}.	\]
	Now we prove that $\u_{\text{avg}}\geq \rho$. We assume that the values of all the $o_\ell$'s are fixed and that there is at least one $o_\ell$, with $0 \leq \ell \leq \alpha-1$, that is strictly greater than $0$. Since $\frac{\ell}{\Delta} < \rho$, while $\frac{\Delta-\ell-1}{\Delta} > \rho$, we have that $\u_{\text{avg}}$ is minimized when the values we can assign to the $b_{\Delta-\ell-1}$'s -- that must satisfy~(\ref{eq:relation_between_b_and_o}) for every $0 \leq h \leq \alpha-1$ -- are somehow minimized. 
	
	Since, for every $\ell < \ell'$ and every $0 < \epsilon < b_{\Delta-\ell'-1}$, $$\frac{\Delta-\ell-1}{\Delta} > \frac{\Delta-\ell'-1}{\Delta}$$ as well as $$(\ell'+1)(b_{\Delta-\ell-1}+\epsilon)+(\ell+1)(b_{\Delta-\ell'-1}-\epsilon) > (\ell'+1)b_{\Delta-\ell-1}+(\ell+1)b_{\Delta-\ell'-1},$$ we have that $\u_{\text{avg}}$ is minimized exactly when $b_{\Delta-\ell-1}=\frac{\Delta-\ell}{\ell+1}o_\ell$.\footnote{We are relaxing the constraint that $b_{\Delta-\ell-1}$ must be an integer.} Therefore, if we denote by $\Psi=\{\ell \mid 0 \leq \ell \leq \alpha-1 \wedge o_\ell > 0\}$, we have that
	\begin{align*}
		\u_{\text{avg}}	& \geq \frac{\sum_{\ell \in \Psi}\left(\frac{(\Delta-\ell-1)(\Delta-\ell)}{\Delta(\ell+1)}o_\ell\right)+\sum_{\ell \in \Psi}\left(\frac{\ell}{\Delta}o_\ell\right)}{\sum_{\ell \in \Psi}\left(\frac{\Delta-\ell}{\ell+1}o_\ell\right)+\sum_{\ell \in \Psi}o_\ell} \\
		& = \frac{\sum_{\ell \in \Psi} \frac{2\ell^2-2(\Delta-1)\ell+\Delta(\Delta-1)}{\Delta(\ell+1)}}{\sum_{\ell \in \Psi}{\frac{\Delta+1}{\ell+1}}} 
		\geq \min_{\ell \in \Psi}\frac{2\ell^2-2(\Delta-1)\ell+\Delta(\Delta-1)}{\Delta(\Delta+1)}.
	\end{align*}
	We complete the proof by showing that
	\begin{equation}\label{eq:parabola}
		\min_{\ell \in \Psi} \frac{2\ell^2-2(\Delta-1)\ell+\Delta(\Delta-1)}{\Delta(\Delta+1)} \geq \rho.
	\end{equation}
	The numerator of the left-hand side of~(\ref{eq:parabola}) is a parabola with respect to the variable $\ell$ and is therefore minimized when $\ell$ is chosen as closest as possible to the value $\frac{\Delta-1}{2}$. 
	
	As $\left\lfloor\frac{\Delta-1}{2}\right\rfloor \geq \alpha-1$ and $\ell \leq \alpha-1$, if follows that the value of $\ell$ that minimizes~(\ref{eq:parabola}) is $\ell=\alpha-1$. Therefore, $$\frac{2(\alpha-1)^2-2(2\alpha-1)(\alpha-1)+2\alpha(2\alpha-1)}{2\alpha(2\alpha+1)}=\rho.$$ Hence, $\u_{\text{avg}} \geq \rho$.
	\end{proof}

	\begin{corollary}
		The LPoA of local $2$-SSG on a regular graph $G$ with $\Delta(G) = 2\alpha+\beta$, with $\alpha \geq 1$ and $\beta \in \{0,1\}$ is at most $2+\frac{1}{\alpha}$.
		\label{PoA_reg}
	\end{corollary}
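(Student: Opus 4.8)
The plan is to apply Lemma~\ref{lm:average_utility_Delta_window} with the single choice $X=V$ and then divide by the trivial upper bound on the social optimum. Since $G$ is regular with $\delta_v=\Delta=2\alpha+\beta$ for every $v\in V$, the hypothesis of the lemma (namely $\delta_v=\Delta$ for all $v\in N_X$) is satisfied automatically for $X=V$, regardless of $\beta\in\{0,1\}$. Let $\sp$ be any local swap equilibrium of the $2$-SSG on $G$, and let $Z\subseteq N_X=V$ be the set of vertices occupied by agents whose utility under $\sp$ exceeds $\rho:=\frac{\alpha}{2\alpha+1}$. Because $X=V$, we have $X\cup Z=V$, so the lemma tells us that the \emph{average} utility over \emph{all} $n$ agents is at least $\rho$; equivalently, $\u(\sp)\geq \rho n = \frac{\alpha}{2\alpha+1}\,n$.

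Next I would bound the social optimum. Since every agent has utility at most $1$, we trivially have $\u(\sp^*(G,\mathbf t))\leq n$. Combining the two estimates, for every local swap equilibrium $\sp$,
\[
\frac{\u(\sp^*(G,\mathbf t))}{\u(\sp)} \;\leq\; \frac{n}{\rho n} \;=\; \frac{1}{\rho} \;=\; \frac{2\alpha+1}{\alpha} \;=\; 2+\frac{1}{\alpha}.
\]
Taking the maximum over all regular graphs $G$ with $\Delta(G)=2\alpha+\beta$ and over all type vectors $\mathbf t$ yields $LPoA(\mathcal G,2)\leq 2+\frac{1}{\alpha}$, as claimed.

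There is essentially no hard step here: the corollary is a one-line consequence of Lemma~\ref{lm:average_utility_Delta_window}, the whole content being the (nontrivial) averaging argument already carried out in the lemma's proof. The only points worth stating carefully are (i) that regularity makes the lemma's degree hypothesis vacuous for the choice $X=V$, and (ii) that with $X=V$ the set $X\cup Z$ over which the lemma averages is exactly the full vertex set, so the per-agent average utility bound translates directly into a social-welfare lower bound of $\rho n$. If one wished to sharpen the constant slightly one could replace the bound $\u(\sp^*)\leq n$ by $\u(\sp^*)\leq n-\frac1o-\frac1b$ (as in the proof of Theorem~\ref{thm:poa_2ssg}), but this is not needed for the stated bound.
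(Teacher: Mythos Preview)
Your proposal is correct and is exactly the paper's approach: the paper's entire proof is the single sentence ``The corollary follows from Lemma~\ref{lm:average_utility_Delta_window} by $X=V$,'' and you have simply spelled out why that choice yields $X\cup Z=V$, hence $\u(\sp)\geq \rho n$, and then divided by the trivial bound $\u(\sp^*)\leq n$.
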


	\begin{proof}
		The corollary follows from Lemma \ref{lm:average_utility_Delta_window} by $X=V$.
	\end{proof}

	\noindent The matching lower bound is provided in the following.

	\begin{theorem}
		The LPoA of local $2$-SSG on a regular graph $G$ with $\Delta(G) = 2\alpha+\beta$, with $\alpha \geq 1$ and $\beta \in \{0,1\}$ is equal to $2+\frac{1}{\alpha}$.
		\label{PoA_reg_1}
	\end{theorem}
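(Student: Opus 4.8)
The matching upper bound is Corollary~\ref{PoA_reg}, so only a lower bound is needed: for every $\epsilon>0$ I would exhibit a $\Delta$-regular graph and a type vector whose LPoA is at least $2+\frac1\alpha-\epsilon$. Put $\rho=\frac{\alpha}{2\alpha+1}$ (the constant from Lemma~\ref{lm:average_utility_Delta_window}), so $\frac1\rho=2+\frac1\alpha$. The goal is a family of $\Delta$-regular connected graphs $G$ with $n\to\infty$, each carrying (a) a local swap equilibrium $\sp$ of social welfare exactly $n\rho$, and (b) a strategy profile $\sp'$ with the same type vector and social welfare $n-O(1)$; then the LPoA of the instance is at least $\u(\sp')/\u(\sp)\ge(n-O(1))/(n\rho)$, which tends to $1/\rho$. (By Corollary~\ref{PoA_reg} every local swap equilibrium has welfare at least $n\rho$, so such a $\sp$ is automatically a worst equilibrium of its instance.)

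The construction is a \emph{cycle of rungs}, which recovers the prism graph when $\Delta=3$ and the cycle when $\Delta=2$. For $\beta=1$ (so $\Delta=2\alpha+1$): take an even number of disjoint cliques $K_{\alpha+1}$ -- the rungs -- placed in a cycle, and between consecutive rungs add a bipartite graph of ``rail'' edges so that every vertex acquires exactly $\alpha+1$ rail edges; this is a routine degree-realization (each rung offers $(\alpha+1)^2$ rail endpoints to be split between its two neighbors), and the result is $(2\alpha+1)$-regular and connected. For $\sp$, color the rungs monochromatically according to a proper $2$-coloring of the cycle of rungs, so every agent has $\alpha$ same-colored rung-neighbors and $\alpha+1$ oppositely-colored rail-neighbors: utility exactly $\rho$, hence $\u(\sp)=n\rho$. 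For $\beta=0$ (so $\Delta=2\alpha$): alternate ``low'' rungs $K_{\alpha}$ and ``high'' rungs $K_{\alpha+1}$ around the cycle, again joined by rails making $G$ $2\alpha$-regular, and for $\sp$ color every low rung orange and every high rung blue. Then every orange agent has utility $\frac{\alpha-1}{2\alpha}$, every blue agent has utility $\frac12$, the orange fraction works out to exactly $\rho$, and once more $\u(\sp)=n\rho$.

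In both cases $\sp$ is a local swap equilibrium: the only adjacent pairs of differently colored agents are endpoints of rail edges, and such a pair $i,j$ satisfies $\u_i(\sp)+\u_j(\sp)=1-\frac1\Delta$ (namely $\rho+\rho$ if $\beta=1$, and $\frac{\alpha-1}{2\alpha}+\frac12$ if $\beta=0$), which by Lemma~\ref{sum_in_eq} (the graph being regular) rules out any profitable local swap. For $\sp'$, color one contiguous arc of rungs orange and the complementary arc blue; choosing the number of rungs to be a suitable multiple of $2\alpha+1$ makes the number of orange vertices match the type vector exactly, and the only bichromatic edges are the $O(\alpha^2)=O(1)$ rail edges crossing the two seams of the arc. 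Since in a regular graph the social welfare equals $\frac2\Delta$ times the number of monochromatic edges, $\u(\sp')=n-O(1)$. Therefore the LPoA is at least $\frac{n-O(1)}{n\rho}\to\frac1\rho=2+\frac1\alpha$, and together with Corollary~\ref{PoA_reg} the LPoA equals $2+\frac1\alpha$.

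The one genuinely delicate requirement is that a single $\Delta$-regular graph must host both the bad equilibrium, in which almost every agent sits at the critical utility $\rho$, and a near-monochromatic coloring realizing the prescribed type vector. The cycle-of-rungs design decouples these: in $\sp$ the rungs are colored by a proper coloring of the cycle of rungs, which makes all rail edges bichromatic and forces utilities down to $\rho$, whereas in $\sp'$ they are colored by a single orange arc versus a single blue arc, which leaves all but $O(1)$ edges monochromatic. Verifying that the rail bipartite graphs exist with the right degrees, that $G$ is connected, and that the arc sizes hit the type vector exactly is routine, so I expect the degree-realization bookkeeping -- not any conceptual step -- to be the only mildly tedious part.
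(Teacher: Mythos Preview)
Your proposal is correct but the construction differs from the paper's, so a brief comparison is in order.

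The paper builds $G$ from $q$ copies of $K_{\Delta+1}$ with one edge removed, chained in a cycle by a single extra edge per consecutive pair (restoring $\Delta$-regularity). The equilibrium colors the first $\lceil(\Delta+1)/2\rceil$ vertices of each gadget blue and the rest orange; the optimum colors whole gadgets monochromatically. Your construction instead uses \emph{smaller} monochromatic cliques (of size $\alpha+1$, or alternating $\alpha$ and $\alpha+1$) linked by bipartite ``rail'' graphs.

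What each buys: your design makes the welfare computation transparent---in the $\beta=1$ case every agent sits at utility exactly $\rho$, so $\u(\sp)=n\rho$ with no calculation, and the equilibrium check reduces to a single application of Lemma~\ref{sum_in_eq}. The price is the degree-realization bookkeeping you flag: splitting each vertex's $\alpha+1$ (or $\alpha$) rail endpoints between two neighboring rungs so that the bipartite graphs are simple and the whole graph is connected is doable (for instance, with $q$ even, alternate perfect matchings and complements of perfect matchings between consecutive rungs in the $\beta=1$ case), but it is not entirely automatic---a uniform left/right split fails when $\alpha+1$ is odd, and the $\beta=0$ case requires non-uniform degrees on at least one side of each bipartite piece. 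The paper's gadget avoids all of this: regularity and connectivity are immediate from the single-edge chaining, at the cost of a slightly longer welfare computation (summing $\lceil(\Delta+1)/2\rceil(\lceil(\Delta+1)/2\rceil-1)+\lfloor(\Delta+1)/2\rfloor(\lfloor(\Delta+1)/2\rfloor-1)$ per gadget and splitting into parity cases). Both routes arrive at the same limiting ratio $2+\frac1\alpha$.
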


	\begin{proof}
		For a fixed degree $\Delta\geq 3$, we define the $\Delta$-regular graph $G(\Delta):=G$ as follows. There are $q:=t(\Delta+1)$ gadgets $G^1,\ldots,G^q$. For each $i\in [q]$, gadget $G^i$ is obtained from a complete graph of $\Delta+1$ vertices, denoted as $v_0^1,\ldots,v_\Delta^i$, by removing edge $\{v_0^i,v_\Delta^i\}$. Observe that, by construction, for any $i\in [q]$, each vertex $v_j^i$, with $1\leq j\leq\Delta-1$, has degree $\Delta$, while vertices $v_0^i$ and $v_\Delta^i$ have degree $\Delta-1$. We obtain $G$ by connecting the $q$ gadgets through edges $\{v_\Delta^i,v_0^{i+1}\}$ for each $i\in [q-1]$ and edge $\{v_\Delta^q,v_0^1\}$. Call these edges {\em extra-gadget} edges. Thus, $G$ is connected and $\Delta$-regular. Consider now the local $2$-SSG played on $G$ in which there are $\lceil\frac{\Delta+1}{2}\rceil q$ blue agents and $\lfloor\frac{\Delta+1}{2}\rfloor q$ orange ones.

		On the one hand, the social optimum is at least $n-\frac{4}{\Delta}=q(\Delta+1)-4 \Delta$, as in the strategy profile in which all vertices of the first $\lceil\frac{\Delta+1}{2}\rceil t$ gadgets are colored blue and all vertices of the remaining $\lfloor\frac{\Delta+1}{2}\rfloor t$ gadgets are colored orange there are $n-4$ vertices getting utility $1$ and $4$ vertices getting utility $\frac{\Delta-1}{\Delta}$.

		On the other hand, the strategy profile $\sp$ in which the first $\lceil\frac{\Delta+1}{2}\rceil$ vertices of each gadget are colored blue and the remaining ones are colored orange is a swap equilibrium. In fact, as extra-gadget edges connect vertices of different colors, every blue vertex is adjacent to $\lceil\frac{\Delta+1}{2}\rceil-1$ blue ones, while every orange vertex is adjacent to $\lceil\frac{\Delta+1}{2}\rceil$ blue ones. If a blue vertex swaps with an adjacent orange one, it ends up being adjacent to $\lceil\frac{\Delta+1}{2}\rceil-1$ blue vertices. Thus, no profitable swap exists in $\sp$.

		As the social welfare of $\sp$ is
		\begin{eqnarray*}
			& & \frac{q}{\Delta}\left(\left\lceil\frac{\Delta+1}{2}\right\rceil\left(\left\lceil\frac{\Delta+1}{2}\right\rceil-1\right)+\left\lfloor\frac{\Delta+1}{2}\right\rfloor\left(\left\lfloor\frac{\Delta+1}{2}\right\rfloor-1\right)\right)\\
			& = & \left\{
			\begin{array}{ll}\vspace*{0.5em}  \frac{q(\Delta^2-1)}{2\Delta} & \textrm{ if $q$ is odd,} \\ \frac{q\Delta}{2} & \textrm{ if $q$ is even,}
			\end{array}\right.
		\end{eqnarray*}
		we get that the LPoA of the game is lower bounded by $\frac{2\Delta(q(\Delta+1)-4\Delta)}{q(\Delta^2-1)}$ when $\Delta$ is odd and by $\frac{2(q(\Delta+1)-4\Delta)}{q\Delta}$ when $\Delta$ is even. By letting $q$ going to infinity, we get $\frac{2\Delta}{\Delta-1}$ and $\frac{2(\Delta+1)}{\Delta}$,  respectively. By using $\Delta=2\alpha+1$ in the first case, and $\Delta=2\alpha$ in the second one, we finally obtain the lower bound of $2+\frac{1}{\alpha}$.
	\end{proof}

	\subsection{Paths and Cycles}

	In this section we provide upper and lower bounds for the (L)PoA of paths and cycles. We first provide a full characterization of the PoA for cycles.

	\begin{theorem}\label{thm:cycle_global}
		The PoA of $2$-SSGs played on cycles with $n \geq 3$ vertices and $o=2\alpha+\beta$ orange agents, where $\alpha \in {\mathbb N}$, $\beta \in \{0,1\}$, and $b \geq o$, is equal to
		\begin{equation*}
			PoA = 
			\begin{cases}
			1					& \text{if $o=1$;}\\
			\frac{n-2}{b+\beta}	& \text{otherwise.}\\
			\end{cases}	
		\end{equation*}
	\end{theorem}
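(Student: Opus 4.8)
Since every vertex of a cycle has degree $2$, each agent $i$ has $\u_i(\sp)=\frac12|N_{\sp(i)}\cap C_i(\sp)|$, and summing over all agents gives $\u(\sp)=\Phi(\sp)$, the number of monochromatic edges. If the coloring induced by $\sp$ consists of $k$ maximal monochromatic orange runs (equivalently $k$ blue runs, since runs alternate along the cycle), then exactly $2k$ edges are bichromatic, so $\u(\sp)=n-2k$. First I would record that, since $1\le o\le b$ forces both colours to appear, placing all orange agents consecutively yields $k=1$ and hence $\u(\sp^*)=n-2$; and no coloring beats this because the number of bichromatic edges on a cycle is a positive even integer. When $o=1$, the unique orange agent always has two blue neighbours, so every strategy profile (in particular every equilibrium, and equilibria exist since cycles are $2$-regular) has exactly two bichromatic edges and social welfare $n-2$, which already gives $PoA=1$.

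For the case $o\ge 2$, the heart of the upper bound is the claim that \emph{no equilibrium contains an orange run of length $1$}. I would argue by contradiction: suppose $\sp$ is an equilibrium with an orange singleton occupying a vertex $w$, so $\u_{\sp^{-1}(w)}(\sp)=0$ and both neighbours of $w$ are blue; since $o\ge 2$ there are at least two orange runs. Applying Lemma~\ref{sum_in_eq} to the agent on $w$ and any blue agent that occupies a vertex other than the two neighbours of $w$ (such pairs are non-adjacent) forces that blue agent to have utility $1$, i.e.\ to sit strictly inside a blue run. Consequently the two blue runs $B_L,B_R$ adjacent to $w$ — which are distinct, else all non-$w$ vertices would be blue and $o=1$ — are the only blue runs, since any other blue run would have to consist entirely of utility-$1$ agents while every run has an endpoint of utility $<1$; the same endpoint consideration forces $|B_L|=|B_R|=1$, hence $b=2$. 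Then $o\le b=2$ gives $o=2$, $n=4$, and the only coloring is the alternating one, where an adjacent orange/blue pair has utilities summing to $0<1-\frac12$, contradicting Lemma~\ref{sum_in_eq}. With the claim in hand, every orange run has length $\ge 2$, so $o\ge 2k$, i.e.\ $k\le\lfloor o/2\rfloor=\alpha$, and $\u(\sp)=n-2k\ge n-2\alpha=o+b-2\alpha=b+\beta$; therefore $PoA\le\frac{n-2}{b+\beta}$.

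For the matching lower bound I would exhibit one bad equilibrium: arrange $\alpha$ orange runs and $\alpha$ blue runs alternately around the cycle, all of length at least $2$ — for orange take $\alpha-1$ runs of length $2$ and one of length $o-2(\alpha-1)=\beta+2\ge 2$, and for blue take $\alpha-1$ runs of length $2$ and one of length $b-2(\alpha-1)\ge b-2\alpha+2\ge 2$, which is feasible because $b\ge o=2\alpha+\beta\ge 2\alpha$. In this profile every agent is at least an endpoint of a run of length $\ge 2$, hence has utility $\ge\frac12$, so $\u_i(\sp)+\u_j(\sp)\ge 1\ge 1-\frac{1_{ij}(\sp)}{2}$ for every orange $i$ and blue $j$, and Lemma~\ref{sum_in_eq} certifies that $\sp$ is an equilibrium. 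It has $k=\alpha$ orange runs, so $\u(\sp)=n-2\alpha=b+\beta$, giving $PoA\ge\frac{n-2}{b+\beta}$ and hence equality.

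The routine parts (the identity $\u(\sp)=n-2k$, the optimum, and checking the lower-bound construction via Lemma~\ref{sum_in_eq}) are straightforward; the one delicate step is the structural claim that an equilibrium with $o\ge 2$ has no orange singleton, where one must pin down exactly which blue agents are permitted low utility and separately dispose of the degenerate instance $o=2$, $n=4$.
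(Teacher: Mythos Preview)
Your proof is correct and follows the same overall structure as the paper's: optimum $n-2$, social welfare expressed as $n-2k$ via the run count, the key claim that an equilibrium with $o\ge 2$ has no orange singleton (hence $k\le\alpha$), and a matching lower-bound construction with $k=\alpha$ runs each of length at least $2$. The only notable difference is in the no-singleton step, where the paper directly exhibits a profitable swap (the orange singleton $i$ swaps with a non-adjacent blue $j$ that borders some other orange $i'$, so $\u_j$ jumps from $\le\tfrac12$ to $1$) while you instead route through Lemma~\ref{sum_in_eq} and a structural reduction to the alternating $C_4$; your path is longer but equally valid, and it has the minor virtue of explicitly disposing of the $n=4$ degenerate case that the paper's swap argument glosses over.
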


	\begin{proof}
	The social welfare of the social optimum is clearly equal to $n-2$ and is attained when the cycle contains one path whose vertices are all occupied by the $b$ blue agents and another path whose vertices are all occupied by the $o$ orange agents. Now we prove matching upper and lower bounds for all the  cases.

	When $o=1$ we clearly have that any strategy profile is a swap equilibrium because the unique orange agent always has a utility of $0$, the two blue agents that occupy the vertices adjacent to the vertex occupied by the orange agent have a utility of $\frac12$ each, and the remaining $b-2$ blue agents all have a utility of $1$. Therefore, the social welfare is equal to $n-2$, and the claim follows.

	Let $\sp$ be a swap equilibrium. Let $\ell$ be the number of maximal vertex-induced (sub)paths whose vertices are occupied by orange agents only. Clearly, $\ell$ is also the number of maximal vertex-induced (sub)paths whose vertices are occupied by blue agents only. We claim that $\ell \leq \alpha$ by showing that every agent has a strictly positive utility in $\sp$ (i.e., each of the $2\ell$ maximal paths formed by monochromatic edges contains $2$ or more vertices). Indeed, for the sake of contradiction, assume without loss of generality that there is an orange agent $i$ such that $\u_i(\sp)=0$. This implies that there must be a blue agent $j$ that occupies a vertex $v$ such that $v$ is not adjacent to the vertex occupied by $i$ and $v$ is adjacent to a vertex occupied by an orange agent $i'\neq i$. As a consequence, $\u_j(\sp)\leq \frac12$. In this case, swapping $i$ with $j$ would be an improving move since $u_i(\sp_{ij}) > 0 = u_i(\sp)$ and $1=u_j(\sp_{ij})> \frac{1}{2} \geq u_j(\sp)$, thus contradicting the fact that $\sp$ is a swap equilibrium.

	As a consequence the utility of $2\ell$ orange agents is equal to $\frac12$, while the utility of the other $o-2\ell=n-b-2\ell$ orange agents is equal to $1$; similarly, the utility of $2\ell$ blue agents is equal to $\frac12$, while the utility of the other $b-2\ell$ blue agents is equal to $1$. Therefore, the social cost is at least 
	\[ \frac{1}{2}(2\ell+2\ell)+(n-b-2\ell)+(b-2\ell)=n-2\ell \geq n-2\alpha=b+\beta. \]
	The upper bound to the PoA follows.

	For the matching lower bound, it is enough to consider the strategy profile in which $\ell=\alpha$, i.e., there are $\alpha-1$ maximal vertex-induced paths occupied by orange (respectively, blue) agents only of length $2$ each, and one maximal vertex-induced path occupied by orange (respectively, blue) agents only of length $2+\beta$ (respectively, $b-2\alpha+2$). In this case, the social welfare is exactly equal to 
	\[ \frac{1}{2}2\alpha+\beta+\frac{1}{2}\alpha+(b-2\alpha)=b+\beta.\]
	\end{proof}

	\noindent The following theorem provides almost tight upper bounds to the LPoA for cycles.

	\begin{theorem}\label{thm:cycle_local}
		The LPoA of local $2$-SSGs played on cycles with $n=3\alpha+\beta$ vertices and $b$ blue agents, where $\alpha \in {\mathbb N}$, $\beta \in \{0,1,2\}$, and $b \geq o$, is upper bounded by
		\begin{equation*}
			LPoA \leq 
			\begin{cases}
			1							& \text{if $o=1$;}\\
			\frac{n-2}{b-o}				& \text{if $o \geq 2$ and $b \geq 2o$;}\\
			\frac{n-2}{\alpha + \beta}	& \text{otherwise (i.e., $o \geq 2$ and $b < 2o$).}\\
			\end{cases}	
		\end{equation*}
		The upper bounds are tight when (i) $o=1$ and (ii) $o\geq 2$ and $b \geq 2o$. 
	\end{theorem}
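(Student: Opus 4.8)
The plan is to work entirely with the decomposition of the cycle into maximal monochromatic runs. Since every vertex of $C_n$ has degree $2$, the social welfare of any profile $\sp$ equals the number of monochromatic edges, which equals $n-2\ell$, where $\ell$ is the number of maximal orange runs; note that $\ell$ is also the number of maximal blue runs, because runs of the two colours alternate around the cycle and each colour occurs. In particular the social optimum has welfare $n-2$ (one orange subpath, one blue subpath), so it suffices to lower-bound the welfare $n-2\ell$ of a local swap equilibrium, equivalently to upper-bound $\ell$.

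The structural heart of the argument is the claim that in a local swap equilibrium no two cyclically consecutive runs are both singletons. Indeed, if a single orange vertex $p$ is cyclically adjacent to a single blue vertex $p+1$, then both endpoints of the edge $\{p,p+1\}$ have their two neighbours of the opposite colour, so both agents have utility $0$; exchanging them is a local swap after which each has exactly one same-colour neighbour, i.e.\ utility $\tfrac12>0$, contradicting equilibrium. The colour-reversed situation is identical, and the degenerate case $\ell=1$ (where $o\ge 2$ forces the orange run to have size $\ge 2$, hence to be a non-singleton) causes no trouble. Consequently the singleton runs form an independent set of the ``run-cycle'' on $2\ell$ vertices, so at most $\ell$ of the $2\ell$ runs are singletons.

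From here the three cases follow quickly. If $o=1$, every profile is a local equilibrium -- the orange agent has utility $0$ and any swap with a neighbour keeps it at $0$ -- and has welfare $n-2$, so $\mathrm{LPoA}=1$, trivially tight. If $o\ge 2$ and $b\ge 2o$, then $\ell\le o$ because the $\ell$ orange runs are nonempty, so the welfare is $n-2\ell\ge n-2o=b-o$ for \emph{every} profile, giving the stated upper bound; for tightness I would exhibit the equilibrium with $o$ orange singletons alternating with $o$ blue runs of sizes $\ge 2$ summing to $b$ (possible precisely because $b\ge 2o$), which has welfare exactly $b-o$, checking directly that no profitable local swap exists -- an orange agent moved to any adjacent vertex still sees two blue neighbours, and every blue agent already has a blue neighbour, so no swap improves both parties. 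Finally, if $o\ge 2$ and $b<2o$, the structural claim gives at least $\ell$ runs of size $\ge 2$ and at most $\ell$ singletons, hence $n\ge \ell+2\ell=3\ell$ and $\ell\le\lfloor n/3\rfloor=\alpha$, so the welfare is $n-2\ell\ge n-2\alpha=\alpha+\beta$, which is the last upper bound; dividing the optimum $n-2$ by these welfare bounds finishes each case.

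I do not expect a serious conceptual obstacle; the main care points are making the ``no two consecutive singleton runs'' argument airtight including the colour-symmetric and $\ell=1$ degeneracies, verifying that the $b\ge 2o$ extremal construction is genuinely a local swap equilibrium (in particular when some blue runs have size exactly $2$, so their agents sit on a boundary and must not want to swap with an adjacent orange singleton), and checking that the denominators are positive -- $b-o\ge o\ge 2$ when $b\ge 2o$, and $\alpha+\beta=n-2\alpha\ge 1$ since $n\ge 3$ forces $\alpha\ge 1$. It is also worth remarking, though not needed for the statement, that the bound $\tfrac{n-2}{\alpha+\beta}$ in fact holds for all $o\ge 2$ and coincides with $\tfrac{n-2}{b-o}$ exactly at $b=2o$, which is what dictates the case split.
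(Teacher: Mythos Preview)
Your proof is correct and follows essentially the same approach as the paper: both decompose the cycle into maximal monochromatic runs, express the welfare as $n-2\ell$, and bound $\ell\le\min\{o,\lfloor n/3\rfloor\}$ via the observation that a singleton run cannot be adjacent to another singleton run (equivalently, the paper's constraint $o_0+b_0\le\ell$). Your presentation is slightly more direct---you read off $n\ge 3\ell$ from ``at most $\ell$ of the $2\ell$ runs are singletons'' instead of routing through the paper's ILP---but the underlying argument and the tightness construction for $b\ge 2o$ are identical.
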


	\begin{proof}
		The social welfare of the social optimum is equal to $n-2$. Now, we prove matching upper and lower bounds for all cases.
	
		When $o=1$, any configuration is a (local) swap equilibrium; therefore the social welfare is equal to $n-2$ and the claim follows.
	
		Now, we consider the case in which $o \geq 2$. Let $o_h$ and $b_h$ be the numbers of orange and blue agents having a utility equal to $h \in \{0,\frac{1}{2},1\}$, respectively. Every configuration can be decomposed into maximal vertex-induced paths whose vertices are all occupied by agents of the same type. Furthermore, if $\ell$ is the overall number of these maximal vertex-induced paths whose vertices are all occupied by  orange agents, then $\ell$ is also the overall number of maximal vertex-induced paths whose vertices are all occupied by blue agents. This implies that that $o_{\frac{1}{2}}=2(\ell - o_0)$ and $b_{\frac{1}{2}}=2(\ell-b_0)$. Therefore, $o=o_0+o_{\frac{1}{2}}+o_1=2\ell-o_0+o_1$ and $b=b_0+b_{\frac{1}{2}}+b_1=2\ell-b_0+b_1$, i.e., $o_1=o-2\ell+o_0$ and $b_1=b-2\ell+b_0$. As a consequence, using the fact that $b+o=n$, the social welfare is equal to
		$\sum_{h \in \{0,\frac{1}{2},1\}}ho_h+\sum_{h \in \{0,\frac{1}{2},1\}}hb_h  =\ell-o_0+o-2\ell+o_0 + \ell-b_0+b-2\ell+b_0 = n-2\ell.	$
		We observe that each orange agent of utility~$0$ occupies a vertex that is adjacent to two vertices occupied by blue agents having a utility of~$\frac{1}{2}$ each. As a consequence, $b_{\frac{1}{2}}=2(\ell - b_0) \geq 2o_0$, or, equivalently, $\ell \geq b_0+o_0$. Therefore, the social welfare is minimized exactly when~$\ell$ is maximized, as shown by the following ILP (where the second and third constraints are of the form $o_0+o_{\frac{1}{2}}\leq o$ and $b_0+b_{\frac{1}{2}} \leq b$, respectively):
		\begin{alignat*}{2}
			& \text{maximize} 	& 		& \ell \\
			& \text{subject to}	& \quad & b_0+o_0 \leq \ell\\
			&						& \quad & 2\ell - o_0 \leq o\\
			&						& \quad & 2\ell - b_0 \leq b\\
			&						& \quad & \ell, b_0,o_0 \in \mathbb{N}.
		\end{alignat*}
		Combining the first 3 inequalities we obtain $2\ell+2\ell \leq o+o_0+b+b_0 \leq n + \ell$, from which we derive $\ell \leq \lfloor \frac{n}{3} \rfloor = \alpha$. Furthermore, since $o_0 \leq \ell$ we have that $\ell \leq 2\ell - o_0 \leq o$. Therefore, the value of an optimum solution is upper bounded by $\ell=\min\{o, \alpha\}$. If $b \geq 2o$, then setting~$\ell$, $o_0=o$ and all other variables to $0$ is an optimal solution. If $b < 2o$, then setting $\ell=\alpha$, $o_0=2\alpha-o$, and $b_0=2\alpha - b$ is an optimal solution. The upper bound to the LPoA follows.
	
		For the matching lower bound when $o\geq 2$ and $b \geq 2o$, it is enough to consider the strategy profile in which $\ell=o$, i.e., each orange agent occupies a vertex that is adjacent to vertices occupied by blue agents only. As a consequence, the $o$ orange agents have a utility of $0$, the $2o$ blue agents have a utility of $\frac{1}{2}$ each, while the remaining $b-2o=n-3o \geq 0$ blue agents have a utility of $1$ each. The social welfare in this case is exactly equal to $\frac{1}{2}o+n-3o=n-2o=b-o$. 
	\end{proof}

	\noindent We now prove similar results for paths.

	\begin{theorem}\label{thm:path_global}
		The PoA of $2$-SSGs played on paths with $n \geq 3$ vertices and $o=2\alpha+\beta$ orange agents, where $\alpha \in {\mathbb N}$, $\beta \in \{0,1\}$, and $b \geq o$, is equal to
		\begin{equation*}
			PoA = 
			\begin{cases}
			+\infty					& \text{if $n=3$;}\\
			\frac{2n-2}{2n-5}		& \text{if $n > 3$ and $o=1$;}\\
			\frac{n-1}{b+1+\beta}	& \text{if $n > 3$, $o \geq 2$ and $b \leq 2\alpha+1$;}\\
			\frac{n-1}{b+\beta}		& \text{otherwise (i.e., $o \geq 2$ and $b \geq 2\alpha+2$).}
			\end{cases}	
		\end{equation*}
	\end{theorem}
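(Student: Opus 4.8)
\emph{The plan} is to follow the proof of Theorem~\ref{thm:cycle_global} (the cycle case) and adapt it to the presence of the two degree-$1$ endpoints of the path. First I would pin down the social optimum by decomposing an arbitrary feasible profile into its maximal monochromatic runs (maximal sets of consecutive equally-colored vertices): a run of length $\ell$ contributes $\ell-1$ to the social welfare if it is internal of length $\geq 2$, $\ell-\tfrac12$ if it is an end run of length $\geq 2$, and $0$ if it has length $1$. Summing, the welfare of a profile equals $n-t+1-\tfrac12 s$, where $t$ is the number of runs and $s\in\{0,1,2\}$ counts the length-$1$ end runs. Hence the optimum is $n-1$ when $o\geq 2$ (one orange and one blue block, both of size $\geq 2$) and $n-\tfrac32$ when $o=1$ (the single orange agent, which necessarily has utility $0$, placed on an endpoint). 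For $n=3$ (so $o=1$), placing the orange agent on the middle vertex gives social welfare $0$, and since no swap can raise the unique orange agent above utility $0$ this is a swap equilibrium; as the optimum is $\tfrac32>0$, this yields $PoA=+\infty$.

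For $n>3$ and $o\geq 2$, the key claim is that every agent has strictly positive utility in any swap equilibrium. This is proved exactly as in Theorem~\ref{thm:cycle_global}: if an orange agent $i$ had utility $0$, then, using $o\geq 2$ and connectivity, there is a blue agent $j$ not adjacent to $i$ occupying a vertex adjacent to some other orange agent, so $\u_j(\sp)\leq\tfrac12$; swapping $i$ and $j$ strictly increases both utilities, a contradiction, and symmetrically for blue agents (using $b\geq o\geq 2$). Consequently all runs have length $\geq 2$ and $s=0$, so the equilibrium welfare is $n-t+1$ and is minimized when the number of runs $t$ is maximized. Writing $r_o,r_b$ for the numbers of orange and blue runs, we have $|r_o-r_b|\leq 1$, $r_o\leq\lfloor o/2\rfloor=\alpha$, and $r_b\leq\lfloor b/2\rfloor$; maximizing $t=r_o+r_b$ under these constraints and substituting $n=o+b$, $o=2\alpha+\beta$ gives $t_{\max}$, hence the claimed worst-case welfare ($b+\beta$ when $b\geq 2\alpha+2$, and $b+1+\beta$ when $b\leq 2\alpha+1$). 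For the matching lower bound I would exhibit the profile achieving $t_{\max}$ with all runs of size $2$ except one that absorbs the remainder, placed so as not to create a length-$1$ run at an endpoint: there every agent has utility $\tfrac12$ or $1$, so $\u_i(\sp)+\u_j(\sp)\geq 1$ for every orange/blue pair and Lemma~\ref{sum_in_eq} certifies it is an equilibrium.

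For $o=1$ and $n>3$ the argument is elementary: the unique orange agent always has utility $0$, and any swap leaves it with utility $0$, so no swap is ever profitable and every feasible profile is a swap equilibrium. It then only remains to compute, over all placements of the single orange agent, the maximum social welfare (attained, as above, when it sits on an endpoint, giving $n-\tfrac32$) and the minimum (attained when it sits on the vertex adjacent to an endpoint, which forces a length-$1$ blue end run, giving $n-\tfrac52$); the ratio of these two numbers is the stated value.

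The main obstacle is the combinatorial bookkeeping in the $o\geq 2$ case: one must maximize $t=r_o+r_b$ under the three simultaneous constraints across the regimes $b\geq 2\alpha+2$ and $b\in\{2\alpha,2\alpha+1\}$ and both parities $\beta\in\{0,1\}$, and then check that the extremal configuration is genuinely an equilibrium — not merely free of utility-$0$ agents — which is where one has to be careful that the leftover vertices can always be merged into a single run without producing a length-$1$ run at a path endpoint, so that Lemma~\ref{sum_in_eq} applies. The argument that equilibria have no utility-$0$ agent also needs the mild case distinctions caused by the two endpoints, but these mirror the cycle proof.
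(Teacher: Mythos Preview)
Your approach is essentially the same as the paper's: express the welfare as $n+1$ minus the number of monochromatic runs, show (for $o\geq 2$) that an equilibrium contains no zero-utility agent so every run has length $\geq 2$, and then maximize $t=r_o+r_b$ under $r_o\leq\alpha$, $r_b\leq\lfloor b/2\rfloor$, $|r_o-r_b|\leq 1$; the lower-bound constructions also coincide, and your explicit appeal to Lemma~\ref{sum_in_eq} to certify them is a detail the paper omits. One caveat in the $o=1$ case: your optimum $n-\tfrac32$ is correct, but then the ratio you obtain is $(2n-3)/(2n-5)$, not the stated $(2n-2)/(2n-5)$ --- the paper gets the latter by declaring the optimum to be $n-1$ uniformly for $n\geq 4$, which overlooks that the unique orange agent necessarily has utility~$0$.
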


	\begin{proof}
		For $n\geq 4$, the social welfare of the social optimum is clearly equal to $n-1$ and is attained when the path contains a subpath whose vertices are all occupied by the $b$ blue agents and one subpath whose vertices are all occupied by the $o$ orange agents. For $n=3$, the social welfare of the social optimum is clearly equal to $\frac32$ and is attained when the orange agent occupies one endvertex of the path. Now we prove matching upper and lower bounds for all the cases.

		When $o=1$, we clearly have that any strategy profile is a swap equilibrium. The strategy profile with minimum social welfare is when the orange agent occupies a vertex that is adjacent to an endvertex of the path. In this case, the blue agent that occupies such an endvertex has a utility of $0$, the orange agent has a utility of $0$, the other blue agent that is adjacent to the vertex occupied by the orange agent has a utility of $0$, if $n=3$, and of $\frac12$, if $n\geq 4$, while all the other blue agents (if any) have a utility of $1$ each. Therefore, for $n=3$ the social welfare is $0$, while for $n\geq 4$, the social welfare is equal to $n-\frac52$, and the claim follows.

		Therefore, we are only left to prove the bounds to the PoA when $n > 3$ and $o \geq 2$. Let~$\sp$ be a swap equilibrium. We first show that every agent has a strictly positive utility in $\sp$. Indeed, for the sake of contradiction, assume without loss of generality that there is an orange agent $i$ such that $\u_i(\sp)=0$. This implies that there must be a blue agent $j$ that occupies a vertex~$v$ such that~$v$ is not adjacent to the vertex occupied by $i$ and $v$ is adjacent to a vertex occupied by an orange agent $i'\neq i$. As a consequence, $\u_j(\sp)\leq \frac12$. In this case, swapping $i$ with $j$ would be an improving move since $u_i(\sp_{ij}) > 0 = u_i(\sp)$ and $1=u_j(\sp_{ij})> \frac{1}{2} \geq u_j(\sp)$, thus contradicting the fact that~$\sp$ is a local swap equilibrium.

		Let $\ell$ be the number of maximal vertex-induced (sub)paths whose vertices are all occupied by the orange agents. Since every orange agents has strictly positive utility, it follows that $\ell \leq \alpha$. Let $x$ and $y$ be the number of orange and blue agents that occupy the endvertices of the path, respectively. Clearly $x+y=2$. Let $\ell'$ be the number of maximal vertex-induced (sub)paths whose vertices are all occupied by the blue agents. We have that $\ell' \leq \ell+1$. Furthermore, the utility of $2\ell-x$ orange agents is $\frac12$ while the utility of the other $o-2\ell+x$ orange agents is $1$; similarly, the utility of $2\ell'-y$ blue agents is $\frac12$, while the utility of the other $b-2\ell'+y$ blue agents is $1$. Therefore, the social welfare is at least 
		\[\frac{1}{2}(2\ell-x+2\ell'-y)+(o-2\ell+x)+(b-2\ell'+y)=n+\frac{1}{2}(x+y)-\ell-\ell' \geq n+1-\ell-\ell'.\]
		If $b \leq 2 \alpha+1$, then $\ell'\leq \alpha$ and therefore $n+1-\ell-\ell' \geq n+1-2\alpha=b+1+\beta$. 

		If $b \geq 2\alpha+2$, then $\ell' \leq \ell+1$ and therefore $n+1-\ell-\ell' \geq n-2\alpha=b+\beta$.  

		For the matching lower bound, consider the strategy profile that induces $\ell=\alpha$ maximal vertex-induced paths occupied by orange agents only and $\ell'$ maximal vertex-induced paths that are occupied by blue agents only, where $\ell'=\alpha$ if $b \leq 2\alpha+1$ and to $\ell'+1$ otherwise. In this case, the social welfare is exactly equal to $b+1+\beta$ if $b \leq 2\alpha+1$ and $b+\beta$, otherwise.
	\end{proof}

	\begin{theorem}\label{thm:path_local}
		The LPoA of local $2$-SSGs played on paths with $n=3\alpha+\beta$ vertices and $b$ blue agents, where $\alpha \in {\mathbb N}$, $\beta \in \{0,1,2\}$, and $b \geq o$, is upper bounded by
		\begin{equation*}
			LPoA \leq 
			\begin{cases}
			+\infty				& \text{if $n=3$;}\\
			\frac{2n-2}{2n-5}	& \text{if $n > 3$ and $o=1$;}\\
			\frac{n-1}{b-o-1}	& \text{if $n > 3$, $o \geq 2$, $b \geq 2o$;}\\
			\frac{n-1}{\alpha}	& \text{otherwise (i.e., $n > 3$, $o \geq 2$ and $b < 2o$).}
			\end{cases}	
		\end{equation*}
		The upper bounds are tight when (i) $n=3$, (ii) $n > 3$ and $o=1$, and (iii) $n > 3$, $o \geq 2$, $b \geq 2o$. 
	\end{theorem}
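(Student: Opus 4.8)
I would follow the template of Theorems~\ref{thm:path_global} and~\ref{thm:cycle_local}: fix the social optimum, dispose of the degenerate cases $n=3$ and $o=1$ by hand, and for $o\ge2$ reduce the welfare of an arbitrary local swap equilibrium to a combinatorial bound on the block structure of the path. For $n\ge4$ the social optimum has welfare $n-1$ (a monochromatic blue subpath followed by a monochromatic orange one), and for $n=3$ it is $\tfrac32$. If $n=3$ then necessarily $o=1$, and placing the orange agent on the central vertex gives a local swap equilibrium of welfare $0$ --- the only local swap, with a leaf, leaves the orange agent at utility $0$ --- so the LPoA is $+\infty$, trivially tight. If $n>3$ and $o=1$, every profile is a local swap equilibrium (any local swap touching the orange agent keeps it at utility $0$), and checking the position of the orange agent shows welfare is minimized when it occupies the second vertex: the leaf and the orange agent get $0$, the third vertex gets $\tfrac12$, the other $n-3$ blue agents get $1$, hence $\u(\sp)=n-\tfrac52$ and the LPoA equals $\tfrac{n-1}{\,n-5/2\,}=\tfrac{2n-2}{2n-5}$, which is thus tight.

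For $o\ge2$, fix a local swap equilibrium $\sp$ and split the path into its maximal monochromatic blocks; let $\ell,\ell'$ be the numbers of orange and blue blocks, so $\ell\le o$ and $\ell'\le\ell+1$, and the number of bichromatic edges is $\ell+\ell'-1$. Counting monochromatic edge incidences, $\u(\sp)=\sum_v m(v)/\delta_v=\Phi(\sp)+\tfrac12(\gamma_1+\gamma_n)=n-\ell-\ell'+\tfrac12(\gamma_1+\gamma_n)$, where $m(v)$ is the number of monochromatic edges at $v$ and $\gamma_1,\gamma_n\in\{0,1\}$ indicate whether each end-vertex lies in a block of length $\ge2$. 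Since path-degrees are $1$ or $2$, Lemma~\ref{sum_in_eq} shows the only profitable-swap condition that can fail is on a bichromatic edge between two internal vertices, and it forbids both of its endpoints from having utility $0$; as an internal agent has utility $0$ exactly when it is the lone vertex of an interior length-$1$ block, this says every interior orange (resp.\ blue) singleton block is flanked, on every side whose neighbouring vertex is not a path endpoint, by an other-colour block of length $\ge2$. From $\ell\le o$, $\ell'\le\ell+1$ we immediately get $\u(\sp)\ge n-(2o+1)=b-o-1$, which is the $b\ge2o$ bound. For $b<2o$, the ``no two consecutive interior singleton blocks'' consequence of the equilibrium condition, together with ``each length-$\ge2$ block consumes $\ge2$ vertices'' (and the at most two extra singleton blocks permitted at the path ends, which however force the corresponding $\gamma$ to $0$ and so cannot also earn the $\tfrac12$-bonus) gives $\ell+\ell'\le\lfloor2n/3\rfloor+1$; when $n\not\equiv0\pmod3$ this already yields $\u(\sp)\ge n-(2\alpha+\beta)=\alpha$, and when $n\equiv0\pmod3$ one observes that $\ell+\ell'=2\alpha+1$ would force the essentially unique extremal block profile, in which $b=2o$, contradicting $b<2o$; hence $\ell+\ell'\le2\alpha$ and again $\u(\sp)\ge\alpha$. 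This establishes all four upper bounds.

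For the remaining tightness claim (iii), take the profile in which all $o$ orange agents are isolated interior singletons, both end-vertices are blue singletons, and the $o-1$ blue blocks between consecutive orange agents have lengths $\ge2$ summing to $b-2$ (feasible precisely because $b\ge2o$); equivalently, cut the tight cycle equilibrium from the proof of Theorem~\ref{thm:cycle_local} at the interior edge of a length-$2$ blue block. Every internal-internal bichromatic edge then has a utility-$\tfrac12$ endpoint, so $\sp$ is a local swap equilibrium, and its welfare is $\tfrac12\cdot2(o-1)+(b-2o)=b-o-1$, matching the bound; cases (i) and (ii) are handled above.

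The main obstacle is the $b<2o$ branch: unlike the clean ILP used for cycles, the endpoint effects (two extra singleton blocks, the $\gamma$-correction, and the tension between $\ell\le o$ and $\ell'\le\ell+1$) make the optimization fiddly, and one must check carefully --- splitting on $n\bmod3$ and on which colours occur at the two ends --- that the slack at the ends never pushes the welfare below $\alpha$; I would organise this as a small integer program in $(\ell,\ell',\gamma_1,\gamma_n)$ together with the singleton-spacing constraints and read off the three sub-bounds.
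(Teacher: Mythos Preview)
Your proposal is correct and follows essentially the same route as the paper: the social optimum is $n-1$, the degenerate cases $n=3$ and $o=1$ are handled identically, and for $o\ge2$ both arguments decompose the path into monochromatic blocks, derive the welfare formula $\u(\sp)=n-\ell-\ell'+\tfrac12(\gamma_1+\gamma_n)$, extract from the local-equilibrium condition the constraint ``no two consecutive interior singleton blocks,'' and then optimise; the tightness construction for case~(iii) is the same configuration.

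The only real difference is packaging. The paper casts the optimisation as an explicit ILP in $(\ell,\ell',x_0,x_1,y_0,y_1,b_0,o_0)$ and manipulates the linear constraints to obtain $\ell+\ell'-\tfrac12(x_1+y_1)\le 2\alpha+\beta$, whereas you argue combinatorially via the singleton-spacing bound $\ell+\ell'\le\lfloor 2n/3\rfloor+1$ and then, for $\beta=0$, invoke $b<2o$ to exclude the unique extremal block profile with $\ell+\ell'=2\alpha+1$. Your direct counting is a bit more transparent; in particular, your $\beta=0$ step makes explicit why the hypothesis $b<2o$ is needed there, a point the paper's ILP derivation absorbs into its inequality manipulation (and where the paper's counting constraint $b_{1/2}\ge 2(o_0-x_0)$ is in fact slightly optimistic for utility-$0$ agents sitting next to a path endpoint, so your handling of that corner is arguably cleaner).
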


\begin{proof}
As shown in Theorem \ref{thm:path_global}, the social welfare of the social optimum is equal to $n-1$. Furthermore, both the upper and lower bounds to the PoA proved in Theorem \ref{thm:path_global} for $n=3$ as well as for $n>3$ and $o=1$ also hold for the LPoA. Therefore, in the rest of the proof we assume that $n\geq 4$ and $o \geq 2$. 

Let $o_r$ and $b_r$ be the numbers of orange and blue agents having a utility equal to $r \in \{0,\frac12,1\}$, respectively. Let $\ell$ (respectively, $\ell'$) be the overall number of maximal vertex-induced paths whose vertices are all occupied by orange (respectively, blue) agents. We observe that $\ell-1 \leq \ell' \leq \ell+1$. Let $x_r$ (respectively, $y_r$) be the number of orange (respectively, blue) agents that occupy the endvertices of the path and whose utility is equal to $r\in\{0,1\}$.  We have that $x_0+x_1+y_0+y_1 = 2$. Furthermore, we have that $o_{\frac12}=2(\ell - o_0)-x_1$ and $b_{\frac12}=2(\ell'-b_0)-y_1$. Therefore, $$o=o_0+o_{\frac12}+o_1=2\ell-o_0-x_1+o_1$$ and $$b=b_0+b_{\frac12}+b_1=2\ell'-b_0-y_1+b_1,$$ i.e., $o_1=o-2\ell+o_0+x_1$ and $b_1=b-2\ell'+b_0+y_1$. As a consequence, the social welfare is equal to
\begin{align*}
\sum_{h \in \{0,\frac12,1\}}hr_h+\sum_{h \in \{0,\frac12,1\}}hb_h & =\ell-o_0-\frac{1}{2}x_1+o-2\ell+o_0+x_1 + \ell'-b_0-\frac{1}{2}y_1+b-2\ell'+b_0+y_1 \\
& = n-\ell-\ell'+\frac{1}{2}x_1+\frac{1}{2}y_1.
\end{align*}
Now, observe that each orange (respectively, blue) agent that has a utility of $0$ and occupies neither an endvertex of the path nor its adjacent vertex is adjacent to two blue (respectively, orange) agents of utility equal to $\frac12$ each.
Therefore $b_{\frac12} = 2(\ell'-b_0)-y_1 \geq 2(o_0-x_0)$ as well as $o_{\frac12}=2(\ell - o_0)-x_1 \geq 2(b_0-y_0)$, or, equivalently, $\ell' \geq b_0+o_0-x_0+\frac{1}{2}y_1$ as well as $\ell \geq b_0+o_0-y_0+\frac{1}{2}x_1$. Therefore, to minimize the social welfare we need to solve the following ILP.

\begin{alignat*}{2}
  & \text{maximize} 	& 		& \ell+\ell'-\frac{1}{2}x_1-\frac{1}{2}y_1 \\
  & \text{subject to}	& \quad & b_0+o_0-y_0+\frac{1}{2}x_1 \leq \ell\\
  &						& \quad & b_0+o_0-x_0+\frac{1}{2}y_1 \leq \ell'\\
  &						& \quad & 2\ell -o_0-x_1 \leq o\\
  &						& \quad & 2\ell'-b_0-y_1 \leq b\\
  &						& \quad & x_0+x_1+y_0+y_1 = 2\\
  &						& \quad & x_0 \leq o_0\\
  &						& \quad & y_0 \leq b_0\\
  &						& \quad & \ell' \leq \ell+1\\
  &						& \quad & \ell \leq \ell'+1\\
  &						& \quad & \ell,\ell',x_0,x_1,y_0,y_1,b_0,o_0 \in \mathbb{N}.
\end{alignat*}
Combining the first 4 inequalities of the ILP we obtain 
\[
2\ell+2\ell \leq o+o_0+x_1+b+b_0+y_1 \leq n + \frac{1}{2}\ell + \frac{1}{2}y_0 - \frac{3}{4}y_1 + \frac{1}{2}\ell' + \frac{1}{2}x_0 - \frac{3}{4}x_1,
\] from which we derive 
\[
\ell+\ell'-\frac{1}{2}(x_1+y_1) \leq \frac{2}{3}n+\frac{1}{3}(x_0+y_0)=2\alpha+\frac{2}{3}\beta+\frac{2}{3}-\frac{1}{3}(x_1+y_1).
\]
By considering the constraints $0 \leq x_1+y_1 \leq 2$ and the fact that $x_1,y_1,\ell$ and $\ell'$ are all non negative integers, it turns out that the above inequality is maximized exactly when $x_1+y_1=0$ or, equivalently, $x_1$, $y_1=0$, and therefore, $\ell+\ell' \leq \left \lfloor 2\alpha+\frac{2}{3}\beta+\frac{2}{3}\right \rfloor = 2\alpha+\beta$.
Furthermore, combining the seventh inequality of the ILP with the first one, we obtain $o_0 \leq \ell$ and therefore, using the third inequality of the ILP, we obtain that $\ell \leq o$. Since the eighth inequality implies that $\ell' \leq \ell+1 \leq o+1$, we have that the value $\ell+\ell' \leq 2o+1$. As a consequence the value of an optimum solution is upper bounded by
\[
\min\left\{2o+1, 2\alpha+\beta\right\}.
\]
We now divide the proof into two cases:
\begin{description}
\item[Case 1:] $b\geq 2o$. Setting $\ell$, $o_0=o$, $\ell'=o+1$, $y_0,b_0=2$, and all the remaining variables to $0$ gives an optimum solution for the ILP and the corresponding value of the objective function matches the upper bound of $2o+1$. Therefore, the social welfare is at least $n-2o-1=b-o-1$ and the upper bound to the LPoA follows. Furthermore, this upper bound is tight. Indeed, consider the strategy profile in which each orange agent occupies a vertex that is adjacent to two vertices occupied by blue agents only and two orange agents occupy the second and last but one vertex of the path (i.e., the two vertices adjacent to the path endvertices). Observe that there are exactly $2(o-1)$ blue agents having a utility equal to $\frac12$ and $2$ agents having a utility of $0$ (thus $b-2(o-1)-2$ agents having a utility of $1$). The social welfare of this configuration is equal to $\frac{1}{2}2(o-1)+(b-2(o-1)-2)=o-1+b-2o=b-o-1$.
\item[Case 2:] $b < 2o$. The optimum value of the ILP is upper bounded by $2\alpha+\beta$. Hence, the social welfare is at least $n-2\alpha-\beta=\alpha$, and the upper bound to the LPoA follows. 
\end{description}
\end{proof}

\subsection{Grids}

We now turn our focus to grid graphs with $4$- and $8$-neighbors. Remember that grids are formed by a two-dimensional lattice. Hence, we can partition the vertices of an $l \times h$ grid $G$ into three sets: {\em corner vertices}, {\em border vertices} and {\em middle vertices}, denoted, respectively, as $C(G)$, $B(G)$, and $M(G)$. We have $C(G)=\{v_{i,j}:i\in\{1,n\}\textrm{ and }j\in\{1,m\}\}$, $B(G)=\{v_{i,j}:i\in\{1,n\}\textrm{ or }j\in\{1,m\}\}\setminus C(G)$ and $M(G)=V(G)\setminus (C(G)\cup B(G))$.

First, we focus on $2$-SSGs in $4$-grids and start by characterizing the PoA for the case in which one type has a unique representative.

\begin{theorem}\label{if1}
	The PoA of $2$-SSGs played on a $4$-grid in which one type has cardinality $1$ is equal to $\frac{25}{22}$.
\end{theorem}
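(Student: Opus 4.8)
The plan is to reduce the whole computation to a one‑parameter optimization over the position of the singleton agent. Assume without loss of generality that orange is the type of cardinality $1$, and fix any $4$‑grid $G$ with $\ell,h\ge 2$ (a $1\times h$ grid is a path and is covered by Theorem~\ref{thm:path_global}); recall that in such a grid every vertex has degree $2$ (corners), $3$ (border), or $4$ (middle). First I would observe that \emph{every} feasible strategy profile is a swap equilibrium: the unique orange agent always has utility $0$ and keeps utility $0$ after any swap, and a profitable swap can only occur between agents of different colors, i.e.\ between the orange agent and a blue one — which is never profitable for the orange agent. Hence $SE(G,\mathbf t)$ is the set of all profiles, and both the social optimum and the worst equilibrium are taken over all placements of the orange agent.

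Next I would derive an explicit welfare formula. Let $w$ be the vertex of the orange agent in a profile $\sp$. A blue agent outside $N_w$ has utility $1$, and a blue agent on $v\in N_w$ has utility $\frac{\delta_v-1}{\delta_v}$, so
\[
\u(\sp)=(n-1-\delta_w)+\sum_{v\in N_w}\frac{\delta_v-1}{\delta_v}=n-1-\sum_{v\in N_w}\frac1{\delta_v}.
\]
Writing $a(G)=\min_{w}\sum_{v\in N_w}\frac1{\delta_v}$ and $A(G)=\max_{w}\sum_{v\in N_w}\frac1{\delta_v}$, the social optimum equals $n-1-a(G)$, the worst equilibrium equals $n-1-A(G)$, and therefore $PoA(G)=\frac{n-1-a(G)}{n-1-A(G)}$.

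The third step is a short case analysis on $\delta_w$ showing that $a(G)\ge\frac23$ and $A(G)\le\frac43$ for every such grid: a corner has two neighbours of degree $\le 3$ (both lie on the boundary), giving $\sum_{v\in N_w}\frac1{\delta_v}\in[\frac23,1]$; a border vertex has its unique ``interior'' neighbour of degree $\ge 3$ and the two others of degree $\ge 2$, giving $\sum\in[\frac34,\frac43]$; a middle vertex has all four neighbours of degree in $\{3,4\}$, giving $\sum\in[1,\frac43]$. Plugging these bounds in, for $n\ge 8$ we get $PoA(G)\le\frac{3n-5}{3n-7}$, which is decreasing in $n$ and equals $\frac{19}{17}<\frac{25}{22}$ at $n=8$. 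Since $\ell,h\ge 2$, the only grids with $n\le 7$ are the $2\times 2$ grid, where all degrees are $2$ so $a=A=1$ and $PoA=1$, and the $2\times 3$ grid, for which a direct check gives $\sum_{v\in N_w}\frac1{\delta_v}=\frac13+\frac12=\frac56$ for a corner $w$ and $\frac12+\frac12+\frac13=\frac43$ for a degree‑$3$ vertex, hence $a=\frac56$, $A=\frac43$, and $PoA=\frac{6-1-5/6}{6-1-4/3}=\frac{25/6}{22/6}=\frac{25}{22}$. Taking the maximum over all grids then yields $PoA=\frac{25}{22}$, attained on the $2\times 3$ grid.

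The routine arithmetic is trivial once the welfare formula is in place; the only genuine work is the boundary/degree case analysis for $a$ and $A$, together with the realization that it must be supplemented by the explicit treatment of the tiny grids — the generic estimate $\frac{3n-5}{3n-7}$ exceeds $\frac{25}{22}$ for $n\in\{4,6\}$, so those two instances really need the exact values of $a$ and $A$, and it is the $2\times 3$ grid (where no corner has two degree‑$3$ neighbours, so $a=\frac56$ rather than $\frac23$, while $A$ still attains $\frac43$) that turns out to be extremal.
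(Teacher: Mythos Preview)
Your argument is correct and follows the same underlying approach as the paper: observe that every profile is an equilibrium (the singleton agent always has utility $0$), so the PoA over a fixed grid is the ratio of the best to the worst placement of the singleton, and then identify the $2\times 3$ grid as the extremal instance. The paper's proof is terser --- it simply asserts that the ``overall loss of the penalized vertices'' is minimized when the singleton sits on a corner and maximized when it sits on a border vertex of the $2\times 3$ grid, and reads off $\tfrac{25}{22}$ --- whereas you make this explicit via the welfare formula $\u(\sp)=n-1-\sum_{v\in N_w}\tfrac{1}{\delta_v}$ and the uniform bounds $a(G)\ge\tfrac23$, $A(G)\le\tfrac43$, which cleanly separate the large grids ($n\ge 8$, where $\tfrac{3n-5}{3n-7}\le\tfrac{19}{17}<\tfrac{25}{22}$) from the two small ones that must be checked by hand.
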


\begin{proof}
	Assume, without loss of generality, that orange is the type with a unique representative. For this game, any strategy profile  $\sp$ is an equilibrium, since in any profile, the orange vertex~$o$ gets utility zero, the vertices not adjacent to $o$ get utility $1$, while all vertices adjacent to $o$ get less than $1$. Call these last vertices the {\em penalized vertices}. Thus, the PoA is maximized by comparing the social welfare of the strategy profile minimizing the overall loss of the penalized vertices with the one of the strategy profile maximizing it. It is easy to see that the overall loss of the penalized vertices is minimized when $o$ is a corner vertices, while it is maximized when $o$ is a border one in a $4$-grid with $l=2$ and $h=3$. Comparing the two social welfares gives the claimed bound.
\end{proof}

\noindent Clearly, if one type has only one representative, this agent will receive utility zero. However, this is not possible in equilibrium assignments when there are at least two agents of each type.

\begin{lemma}\label{no-zero}
	In any equilibrium for a $2$-SSG played on a $4$-grid in which both types have cardinality larger than $1$ all agents get positive utility.
\end{lemma}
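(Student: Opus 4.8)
The plan is to argue by contradiction. Suppose $\sp$ is an equilibrium of a $2$-SSG on a $4$-grid $G$ with $o\geq 2$ and $b\geq 2$, and that some agent has utility $0$; without loss of generality let this be an orange agent $i$ occupying a vertex $u$, so that every vertex of $N_u$ is occupied by a blue agent. I will construct a profitable swap, contradicting that $\sp$ is an equilibrium.

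First I locate a suitable blue vertex. Let $O$ be the set of vertices occupied by orange agents and put $O':=O\setminus\{u\}$, which is nonempty since $o\geq 2$. I claim some blue vertex $v'$ is adjacent to a vertex of $O'$: otherwise every vertex of $O'$ would have all of its neighbours in $O$, and in fact in $O'$ (no vertex of $O'$ is adjacent to $u$, since $N_u$ is entirely blue), so $O'$ would be a union of connected components of $G$ --- impossible, as $G$ is connected and $\emptyset\neq O'\neq V$. Fix such a $v'$, let $j'$ be the blue agent on it, and let $u'\in O'$ be an orange neighbour of $v'$; note $u'\neq u$. I will show that swapping $i$ and $j'$ is profitable, distinguishing two cases.

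If $v'\notin N_u$, then in $\sp_{ij'}$ the agent $i$ sits on $v'$, whose neighbourhood is unchanged and still contains the orange vertex $u'$, so $\u_i(\sp_{ij'})>0=\u_i(\sp)$; and $j'$ sits on $u$, whose neighbourhood is still entirely blue, so $\u_{j'}(\sp_{ij'})=1>\u_{j'}(\sp)$, the last inequality because $v'$ had the orange neighbour $u'$. If instead $v'\in N_u$, then $v'$ is adjacent to the two distinct orange vertices $u$ and $u'$, so it has at least two orange neighbours; write $o_{v'}\geq 2$ for their number. In $\sp_{ij'}$ the agent $i$ on $v'$ still has at least one orange neighbour ($u'$), so $\u_i(\sp_{ij'})>0$, while $j'$ on $u$ has exactly $\delta_u-1$ blue neighbours and hence $\u_{j'}(\sp_{ij'})=\frac{\delta_u-1}{\delta_u}$; since $\u_{j'}(\sp)=\frac{\delta_{v'}-o_{v'}}{\delta_{v'}}$, the move helps $j'$ exactly when $o_{v'}\delta_u>\delta_{v'}$. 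As $o_{v'}\geq 2$ and $\delta_u\geq 2$ we get $o_{v'}\delta_u\geq 4\geq \delta_{v'}$, and equality would force $\delta_u=2$ and $\delta_{v'}=4$; but a degree-$2$ (corner) vertex of a $4$-grid is adjacent only to boundary vertices, which have degree at most $3$, a contradiction. Hence $o_{v'}\delta_u>\delta_{v'}$ and the swap is again profitable.

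In both cases $\sp$ admits a profitable swap, the desired contradiction, so no agent has utility $0$. I expect the only delicate point to be the degree bookkeeping in the second case --- specifically the observation that a corner vertex cannot be adjacent to a degree-$4$ vertex, which is exactly what makes the inequality strict; the rest is routine. (The argument uses that $G$ has minimum degree $2$, i.e.\ that the grid is genuinely two-dimensional; a $1\times h$ grid is a path and is treated separately.)
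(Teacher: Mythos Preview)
Your proof is correct and follows essentially the same strategy as the paper's: assume an orange agent $i$ on vertex $u$ has utility $0$, locate a blue vertex adjacent to some other orange vertex, and exhibit a profitable swap. The paper also splits on whether that blue vertex lies in $N_u$ or not, and in the adjacent case it distinguishes whether $u$ is a corner or not; your unified inequality $o_{v'}\delta_u>\delta_{v'}$ repackages exactly that distinction (equality would force $\delta_u=2$, i.e.\ $u$ a corner, and $\delta_{v'}=4$, which is impossible for a neighbour of a corner). You are also slightly more explicit than the paper in justifying, via connectivity, that a blue vertex adjacent to $O\setminus\{u\}$ exists.
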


\begin{proof}
	Fix an equilibrium $\sp$ for a game satisfying the premises of the lemma. Let $i$ be a vertex such that $\u_i(\sp)=0$ and assume, without loss of generality, that $i$ is orange. This implies that $i$ is surrounded by blue vertices only. 
	
	Pick another orange vertex $j\neq i$ which is adjacent to at least a blue one $\ell$. If $\ell\notin\setminus N_i(G)$, it follows that $i$ and $\ell$ can perform a profitable swap contradicting the assumption that $\sp$ is an equilibrium. Thus, $\ell$ has to belong to $N_i(G)$. Let us now consider two cases.
	
	If $i$ occupies a corner vertex, $\ell$ needs to be placed on a border one. So, as $\ell$ is adjacent to $i$ and $j$, it holds that $\u_{\ell}(\sp)\leq \frac13$. Thus, as we have $\u_{\ell}(\sp_{i\ell})=\frac12$ and $\u_i(\sp_{i\ell})>0$, $i$ and $\ell$ can perform a profitable swap contradicting the assumption that $\sp$ is an equilibrium.
	
	If $i$ is not located on a corner vertex, as $\ell$ is adjacent to $i$ and $j$, it holds that $\u_{\ell}(\sp)\leq \frac12$. Moreover, $|N_i(G)|\geq 3$ which yields $\u_{\ell}(\sp_{i\ell})=\frac{|N_i(G)|-1}{|N_i(G)|}\geq \frac23$. Thus, also in this case, $i$ and $\ell$ can perform a profitable swap contradicting the assumption that $\sp$ is an equilibrium.
\end{proof}

\noindent When no agent gets utility zero, the minimum possible utility is $\frac14$. Thus, Lemmas \ref{if1} and~\ref{no-zero} together imply an upper bound of $4$ on the PoA. However, a much better result can be shown. 

\begin{theorem}
	The PoA of $2$-SSGs played on $4$-grids is at most $2$.
	\label{thm:4grid_poa}
\end{theorem}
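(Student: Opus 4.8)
The plan is to combine a tight upper bound on the optimum welfare with a lower bound on the welfare of an arbitrary equilibrium that exploits the bounded degree of $4$-grids. First dispose of the degenerate case: if one type has cardinality $1$, Theorem~\ref{if1} already gives $\mathrm{PoA}=\frac{25}{22}<2$, so assume $o\ge 2$ and $b\ge 2$; without loss of generality $o\le b$. Under this assumption Lemma~\ref{no-zero} says that in every equilibrium $\sp$ each agent has utility at least $\frac14$, and in particular every orange (resp.\ blue) vertex has at least one orange (resp.\ blue) neighbour, hence at most $\delta_v-1\le 3$ incident bichromatic edges. For the optimum $\sp^\ast$, connectivity of $G$ forces at least one bichromatic edge; its orange endpoint has utility at most $\frac{o-1}{o}$ and its blue endpoint at most $\frac{b-1}{b}$, while all other vertices have utility at most $1$, so $\u(\sp^\ast)\le n-\frac1o-\frac1b$. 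It therefore suffices to show $\u(\sp)\ge\frac12\bigl(n-\frac1o-\frac1b\bigr)$ for every equilibrium $\sp$.

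To lower bound $\u(\sp)$ I would first treat the ``unbalanced'' regime $b\ge\frac72 o$ directly: writing $\u(\sp)=n-L$ with $L=\sum_{\{u,v\}\ \text{bichromatic}}\bigl(\frac1{\delta_u}+\frac1{\delta_v}\bigr)$, distributing each edge's weight onto its endpoints and using $d^{bi}_v\le\frac34\delta_v$ on the orange side together with the bound of at most $3o$ bichromatic edges on the blue side yields $L\le\frac34 o+\frac32 o=\frac94 o$, so $\u(\sp)\ge n-\frac94 o\ge\frac n2\ge\frac12\u(\sp^\ast)$. For the remaining range $b<\frac72 o$ I would reuse the matching idea of Theorem~\ref{thm:poa_2ssg}: take a maximum-cardinality matching $M$ of non-adjacent orange–blue pairs, so by Lemma~\ref{sum_in_eq} (with $1_{ij}(\sp)=0$) each matched pair contributes at least $1$; the unmatched agents form a complete bipartite adjacency pattern, which by $\Delta(G)\le 4$, planarity of the grid, and Lemma~\ref{no-zero} is extremely restricted (only a bounded number of unmatched agents, with their degrees and utilities essentially pinned down), and their contribution is bounded below via Lemma~\ref{sum_in_eq} for the few adjacent leftover pairs and via the $\frac14$ guarantee for the at most $b-o$ remaining blue agents. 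A case analysis on the forced local structure of the unmatched vertices then gives the required bound.

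The hard part is precisely the near-balanced sub-regime: there the optimum welfare is of order $n$, so the factor $2$ leaves almost no slack, and the crude ``$\ge\frac14$ per leftover agent'' estimate is no longer enough. One must argue with the actual geometry of the $4$-grid, charging the deficit of each vertex of utility below $\frac12$ to the surplus of its neighbours, whose utilities are forced to be large by Lemma~\ref{sum_in_eq}, and using that vertices of utility below $\frac12$ of \emph{both} colours are pairwise adjacent — hence, by maximum degree $4$ and planarity, very few of them can coexist — so that the local loss is always compensated. This grid-specific accounting, together with the bound $\u(\sp^\ast)\le n-\frac1o-\frac1b$, is what pushes the bound from the general-graph value (Theorem~\ref{thm:poa_2ssg}) down to exactly $2$.
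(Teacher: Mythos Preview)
Your proposal is an outline rather than a proof, and the gap sits exactly where you flag the ``hard part''. The unbalanced regime $b\ge\frac72 o$ is fine: the loss bound $L\le\frac94 o$ via $d^{\mathrm{bi}}_v\le\delta_v-1$ and $\delta_v\ge 2$ is correct and gives $\u(\sp)\ge n-\frac94 o\ge\frac n2$. But for $b<\frac72 o$ you never actually carry out the charging argument; you only assert that ``a case analysis \dots\ gives the required bound'' and that the grid-specific accounting ``is what pushes the bound down to exactly $2$''. That is a description of what a proof should do, not a proof.

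There is also a subtle overstatement in your heuristic. From Lemma~\ref{sum_in_eq} it is indeed true that any orange agent with utility $<\frac12$ and any blue agent with utility $<\frac12$ must be adjacent, so the two low-utility sets form a complete bipartite subgraph of the grid; this forces one of the two sides to be small. But it does \emph{not} force both sides to be small: if, say, no blue agent has utility $<\frac12$, there is no bipartite constraint at all, and many orange agents can have utility $\frac13$ (border vertices) or $\frac14$ (middle vertices). So the sentence ``very few of them can coexist'' is misleading --- the argument must instead show that whenever one colour has many sub-$\frac12$ agents, the other colour has at least as many agents with utility strictly above $\frac12$ to absorb the deficit. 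That compensation is the substance of the proof and is not automatic.

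The paper proceeds differently and more directly: it proves that in any equilibrium the \emph{average} utility is at least $\frac12$, which immediately yields $\mathrm{PoA}\le 2$. The case split is not on $b/o$ but on the structure of the equilibrium: either every middle agent has utility $\ge\frac12$ (and then one counts border agents of utility $\frac13$ against blue agents of utility $\ge\frac23$, using that a non-adjacent blue agent must have utility $\ge\frac23$ by Lemma~\ref{sum_in_eq}), or some middle agent has utility $\frac14$ (and then one shows, via a geometric argument specific to $4$-grids, that every blue agent has utility $\ge\frac12$, at least $|I|$ of them have utility $\ge\frac34$, and in fact all border and corner vertices are occupied by the majority colour). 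Both cases require genuine work with the grid topology --- in particular, the second case involves showing that certain configurations are impossible in equilibrium --- and neither follows from the high-level observations in your outline.
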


\begin{proof}
Without loss of generality, we consider an $l \times h$ grid, with $l \leq h$. When $l=1$, the PoA of $2$ follows from Theorem~\ref{thm:path_global}. Therefore, we assume that $l$, $h \geq 2$. Furthermore, by Lemma~\ref{if1}, we only need to consider the case in which there are at least two agents per type. By Lemma~\ref{no-zero}, we know that, in this case, the utility of each agent is strictly positive. We prove the claim by showing that the average utility of an agent is at least $\frac12$.
We divide the proof into two cases, depending on the utilities of the {\em middle} agents (i.e., agents occupying the middle vertices).

\noindent {\bf Case 1.} In the first case, we assume that the utility of every middle agent is at least $\frac12$. As corner agents (i.e., agents occupying corner vertices) have a utility of at least $\frac12$ each, we only need to prove the claim when there is at least one border agent (i.e., an agent occupying a border vertex) whose utility is equal to $\frac13$.  This implies that $l+h\geq 5$. Without loss of generality, we assume that there are more orange than blue agents having a utility equal to $\frac13$. Let $I$ be the border vertices occupied by the orange (border) agents having a utility of $\frac13$. As the overall number of border vertices is $2(l-2)+2(h-2)=2l+2h-8$, we have that the number of border agents having a utility greater than or equal to $\frac23$ is at least $2l+2h-8-2|I|$. Therefore, if $|I|=1$ and $l+h \geq 6$, then $2l+2h-8-2|I| \geq 12-8-2=2$; hence, the average utility of an agent is greater than or equal to $\frac12$. If $|I|=1$ and $l+h=5$, then the only configuration in which a swap equilibrium exists, unless of symmetries, is shown in Figure~\ref{4-grids_UB_2}(a). 
\begin{figure}[t]
	\center
		\includegraphics{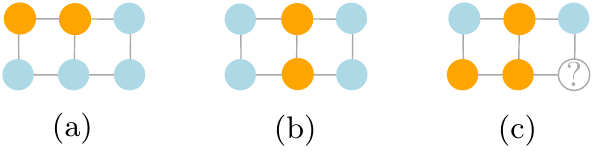}
		\caption{The unique swap equilibrium for $2\times 3$ 4-grids is shown in (a). Indeed, in (b) the blue agent in $v_{1,1}$ can swap with the orange agent in $v_{2,2}$, while in (c) the blue agent in $v_{1,1}$ can swap with the orange agent in $v_{1,2}$ (the question mark in $v_{2,3}$ means that the vertex can be occupied by an agent of any type).}
		\label{4-grids_UB_2}
\end{figure}

We observe that, in such configuration, the average utility of an agent is strictly greater than~$\frac12$.
It remains to prove the case in which $|I|\geq 2$. Since $\sp$ is a swap equilibrium, the utility of a blue agent that occupies a vertex that is not adjacent to all the vertices in $I$ is at least $\frac23$. As each blue agent occupies a vertex that is adjacent to at most 2 vertices in $I$ and because each vertex in $I$ is adjacent to exactly 2 vertices occupied by blue agents, the number of blue agents is at least $2|I|/2 = |I|$. Therefore, if we assume that every blue agent has a utility of at least $\frac23$, then the average utility of an agent would be at least $\frac12$. We observe that this assumption holds when either (a) $|I|\geq 3$ (because there is no blue agent occupying a vertex that is adjacent to all the vertices of $I$) or (b) $|I|=2$ and the two vertices of $I$ are either at $t$-hop distance from each other, with $t\geq 2$, or they are are at $2$-hop distance from each other and the utility of the border agent that occupies the vertex in between is at least $\frac23$. For the remaining case in which $|I|=2$, the two vertices of $I$ are at $2$-hop distance from each other, and the agent occupying the border vertex in between is equal to $\frac13$ -- and thus is of blue type -- we simply observe that the overall number of blue agents is at least $4$. Indeed, without loss of generality, let $v_{1,x-1}$ and~$v_{1,x+1}$ be the two vertices of $I$. As $v_{1,x}$ is occupied by a blue agent that has strictly positive utility, $v_{2,x}$ is also occupied by a blue agent. Furthermore, either $v_{1,x-2}$ or $v_{2,x-1}$ is occupied by a blue agent. Similarly, either $v_{1,x+2}$ or $v_{2,x+1}$ is occupied by a blue agent. Therefore, there are at least $4$ blue agents. Since 3 out of these $4$ blue agents have a utility of at least $\frac23$, again, the average utility of an agent is at least $\frac12$.

{\bf Case 2.} In the second case, we assume that there is at least an agent occupying a middle vertex and whose utility is equal to $\frac14$. Without loss of generality, we assume that there are more orange than blue agents having a utility equal to $\frac14$. Let $I$ be the vertices of the orange agents having a utility of $\frac14$.
We prove that 

(i) every blue agent has a utility of at least $\frac12$; 

(ii) the number of blue agent having utility greater than or equal to $\frac34$ is at least $|I|$; 

(iii) all border and corner agents are of blue type. 

This would clearly imply that the average utility of an agent is $\frac12$ since the utility of border and corner agents would be at least $\frac23$.

Let $v_{x,y}$ be a vertex of $I$ and, without loss of generality, we assume that $v_{x,y-1},v_{x-1,y}$, and~$v_{x,y+1}$ are occupied by blue agents whose utilities are greater than or equal to $\frac12$. Similarly, we can prove that the utility of every other blue agent that occupies a vertex that is not adjacent to all vertices in $I$ is at least $\frac34$. This implies that at least one vertex between $v_{x-1,y-1}$ and~$v_{x-1,y+1}$ is occupied by a blue agent whose utility is greater than or equal to $\frac34$; similarly, at least one vertex between $v_{x+1,y-1}$ and $v_{x+1,y+1}$ is occupied by a blue agent whose utility is greater than or equal to $\frac34$. Therefore, we have proved (ii) for the case in which $|I| \leq 2$. To prove (ii) when $|I| > 2$, it is enough to observe that all blue agents have a utility greater than or equal to $\frac34$ because none of them occupies a vertex that is adjacent to all the vertices in $I$. But this implies that each blue agent of utility of at least $\frac34$ occupies a vertex that is adjacent to at most one vertex in $I$. Hence, the overall number of blue agents is at least $|I|$. 

We now conclude the proof by proving (iii). First of all, we prove that at least one border or corner vertex is occupied by a blue agent. For the sake of contradiction, we assume that all border and corner vertices are occupied by the orange agents. Let $v_{x,y}$ be the leftmost-topmost vertex occupied by a blue agent, i.e., both $v_{x,y-1}$ and $v_{x-1,y}$ are occupied by orange agents and there is no other vertex $v_{x',y'}$ occupied by a blue agent such that $x' < x$ or $x=x'$ and~$y'<y$. We observe that such a vertex always exists because $x$, $y > 1$ and that $v_{x-1,y-1}$ must be occupied by an orange agent. Furthermore, by the choice of $v_{x,y}$, the utility of the two orange agents that occupy the   vertices $v_{x-1,y}$ and $v_{x,y-1}$ must be at least $\frac12$. Since the utility of the blue agent occupying the vertex $v_{x,y}$ has to be at least $\frac12$, $v_{x+1,y}$ and $v_{x,y+1}$ are occupied by blue agents. As a consequence, $N_{v_{x,y}} \cap I = \emptyset$. Therefore, swapping the agent that occupies $v_{x,y}$ with any agent occupying a vertex in $I$ would be an improving move. Now that we know that at least one border or corner agent is of blue type, we prove that all of them must be of blue type. For the sake of contradiction assume that at least one border or corner vertex is occupied by an orange agent. Without loss of generality, let $v_{1,y}$ be a vertex occupied by an orange agent such that $v_{1,y+1}$ is occupied by a blue agent. Since the utility of such a blue agent is at least $\frac12$, the unique middle vertex adjacent to $v_{1,y+1}$, i.e., $v_{2,y+1}$, must be occupied by a blue agent. This implies that $v_{1,y+1}$ cannot be adjacent to any vertex in $I$. As the utility of the agent occupying vertex $v_{1,y+1}$ is at most $\frac23$, swapping the agent occupying the vertex $v_{1,j+1}$ and any agent occupying a vertex in $I$ would be an improving move. This completes the proof.
\end{proof}

\noindent The following lemma gives a sufficient condition for a strategy profile to be an equilibrium.

\begin{lemma}\label{characterization}
Fix a $2$-SSG played on a $4$-grid. Any strategy profile in which corner and middle vertices get utility at least $\frac12$ and border ones get utility at least $\frac23$ is an equilibrium.
\end{lemma}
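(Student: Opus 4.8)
The plan is to use the characterization of equilibria provided by Lemma~\ref{sum_in_eq}: a strategy profile $\sp$ for a $2$-SSG is an equilibrium if and only if for every pair of agents $i,j$ of different colors with $\delta_{\sp(i)}\leq\delta_{\sp(j)}$ we have $\u_i(\sp)+\u_j(\sp)\geq 1-\frac{1_{ij}(\sp)}{\delta_{\sp(i)}}$. So I would fix an arbitrary strategy profile $\sp$ satisfying the hypothesis — corner and middle vertices have utility at least $\frac12$, border vertices at least $\frac23$ — and verify this inequality for every such pair. Since a $4$-grid has only three vertex degrees, namely $2$ (corners), $3$ (borders) and $4$ (middle vertices), the verification splits into a small number of degree-pair cases.

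First I would observe that in \emph{all} cases the right-hand side $1-\frac{1_{ij}(\sp)}{\delta_{\sp(i)}}$ is at most $1$, and is strictly smaller than $1$ whenever the two agents occupy adjacent vertices. The easy cases are those where neither agent occupies a corner: if $\delta_{\sp(i)}\geq 3$ then $\u_i(\sp)\geq\frac12$ and $\u_j(\sp)\geq\frac12$ (the hypothesis gives $\geq\frac23$ on a border and $\geq\frac12$ in the middle), so $\u_i(\sp)+\u_j(\sp)\geq 1\geq 1-\frac{1_{ij}(\sp)}{\delta_{\sp(i)}}$, and the condition of Lemma~\ref{sum_in_eq} holds. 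The only remaining case is when the lower-degree vertex $\sp(i)$ is a corner, i.e.\ $\delta_{\sp(i)}=2$; then $\u_i(\sp)\geq\frac12$ by hypothesis. Here I must show $\u_i(\sp)+\u_j(\sp)\geq 1-\frac{1_{ij}(\sp)}{2}$. If $i$ and $j$ are non-adjacent this reads $\u_i(\sp)+\u_j(\sp)\geq 1$, which holds since $\u_j(\sp)\geq\frac12$ as well (a corner, border or middle vertex all have utility $\geq\frac12$). If $i$ and $j$ are adjacent the required bound is only $\u_i(\sp)+\u_j(\sp)\geq\frac12$, which is immediate from $\u_i(\sp)\geq\frac12$. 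In every case Lemma~\ref{sum_in_eq} is satisfied, hence $\sp$ is an equilibrium.

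The argument is essentially a finite case check and I do not anticipate a genuine obstacle; the only point requiring a little care is making sure that when one of the two agents sits on a corner vertex, the \emph{other} one still has utility at least $\frac12$ regardless of whether it is a corner, border or middle vertex — which is exactly guaranteed by the hypothesis, since border vertices are assumed to have utility at least $\frac23\geq\frac12$. One should also note that the hypothesis is stated for \emph{all} vertices of the corresponding type, so it applies to whichever agent turns out to be the lower-degree one in the pair; this symmetry is what makes the short case analysis above exhaustive.
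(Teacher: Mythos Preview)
Your proof is correct. The paper's own argument is even shorter and does not go through Lemma~\ref{sum_in_eq}: it simply observes that since every agent has utility at least $\tfrac12$ (border vertices have utility $\ge\tfrac23\ge\tfrac12$), after any swap an agent $i$ moving to $\sp(j)$ would get $\u_i(\sp_{ij})=1-\u_j(\sp)-\tfrac{1_{ij}(\sp)}{\delta_{\sp(j)}}\le\tfrac12\le\u_i(\sp)$, so no swap is strictly improving. Your route via Lemma~\ref{sum_in_eq} unpacks exactly the same inequality, so the two arguments are essentially the same; your case split on the degree of the lower-degree vertex is harmless but not needed, since once you note that every vertex has utility $\ge\tfrac12$ the sum is $\ge 1$ uniformly.
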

\begin{proof}
Fix a strategy profile $\sp$ meeting the premises of the claim and two vertices $i$ and $j$ of different color in $\sp$. As $\u_i(\sp)\geq \frac12$ and $\u_j(\sp)\geq \frac12$, it can only be $\u_i(\overline{\sp}_{ij})\leq \frac12$ thus implying that no profitable swaps are possible in $\sp$. 
\end{proof}

\noindent We now show a matching lower bound.

\begin{theorem}\label{lbgrid}
	The PoA of $2$-SSGs played on $4$-grids is at least $2$, even when both types have the same cardinality.
\end{theorem}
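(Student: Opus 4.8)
The plan is to exhibit, for every $\epsilon>0$, a \emph{balanced} $2$-SSG on a $4$-grid whose (unique, hence worst) equilibrium has social welfare only about half of the social optimum; since the PoA is defined as a supremum over the class, this forces $\mathrm{PoA}\geq 2$, and together with Theorem~\ref{thm:4grid_poa} it pins the PoA of $4$-grids at exactly $2$. I would work on an $l\times h$ grid with $l=h=4a$ and take the profile $\sp$ in which the grid is tiled by $2\times 2$ monochromatic blocks arranged in checkerboard fashion, i.e.\ $v_{i,j}$ receives colour determined by the parity of $\lfloor i/2\rfloor+\lfloor j/2\rfloor$. Because $l$ and $h$ are multiples of $4$, the block checkerboard uses each colour on exactly half of the $\tfrac{l}{2}\cdot\tfrac{h}{2}$ blocks, hence on exactly $n/2$ vertices, so the game is balanced and has at least two agents of each colour.

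Next I would verify that $\sp$ is an equilibrium by invoking Lemma~\ref{characterization}: it suffices that every corner and middle vertex gets utility at least $\tfrac12$ and every border vertex at least $\tfrac23$, which follows by inspecting the pattern. For a middle vertex $v_{i,j}$, exactly one of its two vertical neighbours lies in the same $2\times 2$ block (same colour) and the other lies in the vertically adjacent block (opposite colour), and likewise for its two horizontal neighbours; hence the occupying agent has utility exactly $\tfrac12$. For a non-corner border vertex, its unique inward neighbour and exactly one of its two collinear neighbours lie in its own block (same colour), while the remaining collinear neighbour lies in an adjacent block (opposite colour), giving utility exactly $\tfrac23$. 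Each of the four corner vertices has both neighbours inside its own block, so utility $1$. Thus the hypotheses of Lemma~\ref{characterization} are met and $\sp$ is an equilibrium.

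Then I would compute both welfares. The grid has $4$ corner vertices, $2(l-2)+2(h-2)=\Theta(\sqrt n)$ border vertices, and $n-\Theta(\sqrt n)$ middle vertices, so
\[
\u(\sp)\ =\ \tfrac12\bigl(n-\Theta(\sqrt n)\bigr)+\tfrac23\,\Theta(\sqrt n)+1\cdot 4\ =\ \tfrac n2+\Theta(\sqrt n).
\]
For the optimum I would use the ``half and half'' profile colouring the top $l/2$ rows with one colour and the bottom $l/2$ rows with the other: only the $\Theta(h)$ vertices incident to the dividing cut and the $\Theta(l+h)$ vertices on the grid boundary fail to get utility $1$, so the optimal social welfare is $n-\Theta(\sqrt n)$. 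Hence $\mathrm{PoA}\geq \dfrac{n-\Theta(\sqrt n)}{\tfrac n2+\Theta(\sqrt n)}\to 2$ as $n\to\infty$, so for $n$ large the ratio exceeds $2-\epsilon$, giving the lower bound.

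The arithmetic is routine; the only real difficulty — and the step I would get right first — is locating a colouring that is simultaneously an equilibrium, balanced, and of average utility tending to $\tfrac12$. Lemma~\ref{sum_in_eq} forces, at any equilibrium, every pair of non-adjacent agents of opposite colours to have utilities summing to at least $1$, so all but a constant number of agents must have utility $\geq\tfrac12$; to keep the average near $\tfrac12$ one therefore needs (almost) every agent pinned at exactly $\tfrac12$, i.e.\ every middle vertex must have precisely two same-colour neighbours. The $2\times2$-block checkerboard is the natural pattern achieving this, and conveniently the same pattern also pushes every border vertex to exactly $\tfrac23$, so that Lemma~\ref{characterization} applies verbatim, and it is perfectly balanced once the side lengths are multiples of $4$.
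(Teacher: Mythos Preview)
Your argument is correct and follows the same high-level strategy as the paper: construct a balanced colouring in which every middle vertex has exactly two same-coloured neighbours (utility $\tfrac12$) and every border vertex has utility $\tfrac23$, certify it as an equilibrium via Lemma~\ref{characterization}, and compare to a near-$n$ optimum. The difference is only in the explicit pattern. The paper uses a rather intricate ``frame'' construction, colouring the grid layer by layer from the outside in with alternating base colours; you use the $2\times 2$ block checkerboard, which is simpler to describe and to verify (the check that each middle vertex sees exactly one same-coloured neighbour in each axis is immediate). Both constructions yield the same asymptotic welfare $\tfrac n2+\Theta(\sqrt n)$, so neither buys a sharper bound; yours is just more elementary.

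One small remark: your parenthetical ``(unique, hence worst)'' is not correct --- the equilibrium you build is certainly not the unique one (for instance, the half-and-half optimum profile is itself an equilibrium). This does not matter for the argument: exhibiting \emph{any} equilibrium $\sp$ already gives $\mathrm{PoA}\geq \u(\sp^*)/\u(\sp)$ because the minimum over $SE(G,\mathbf t)$ can only be smaller. You may simply drop the word ``unique''.
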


\begin{proof}
Fix a $2$-SSG played on an $n\times n$ grid $G$, with $n$ being an even number. We define a strategy profile $\sp$ by giving a coloring rule for any frame of $G$. Clearly, being $n$ an even number, there are $\frac{n}{2}$ frames in $G$ that we number from $1$ to $\frac{n}{2}$, with frame $1$ corresponding to the outer one, i.e., the biggest. Frame $i$, whose size is $n_i:=n-2(i-1)$, is colored as follows: all vertices in the left column and all vertices in the right column except for the first and the last are of the {\em basic color} of $i$, all other vertices (that are the ones on the upper and lower rows except for the vertices falling along the left column) take the other color. Observe that $n_i+n_i-2=2(n_i-1)$ vertices take the basic color of $i$ and $2(n_i-1)$ vertices take the other one, so that every frame evenly splits its vertices between the two colors. Thus, $\sp$ is a well-defined strategy profile for a $2$-SSG with both types having the same cardinality. The basic color of frame $i$ is orange if $i$ is odd and blue otherwise, see Figure \ref{fig1} for a pictorial example. To show that $\sp$ is an equilibrium, it suffices proving that it satisfies the premises of Lemma~\ref{characterization}. 

\begin{figure}[t]
\center
\includegraphics[scale=0.6]{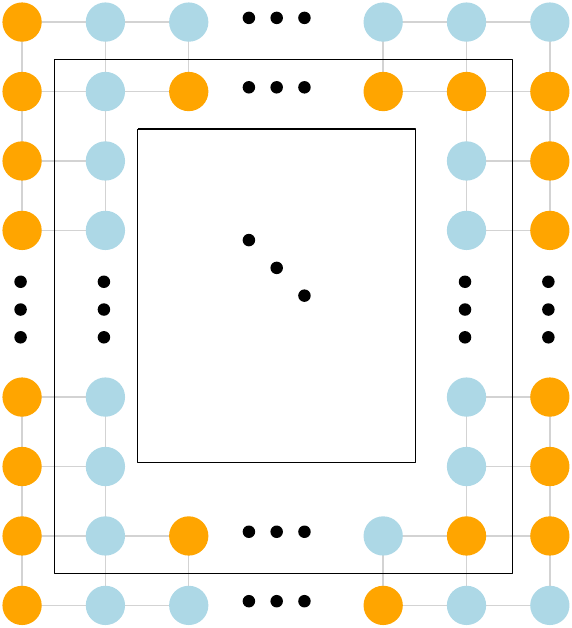}
\caption{Visualization of the first three frames of $G$ with the coloring induced by the strategy profile defined in the proof of Theorem \ref{lbgrid}.}\label{fig1}
\end{figure}

To address corner and border vertices, consider frame $1$, see again Figure \ref{fig1}. It comes by construction that every corner vertices gets utility $\frac12$ and that every border vertices gets utility at least $\frac23$, except for vertices $(1,2)$, $(2,n)$, $(n-1,n)$ and $(n,2)$ for which further investigation is needed. In particular, they get utility $\frac23$ if and only if the following coloring holds: $(2,2)$ is blue, $(2,n-1)$ is orange, $(n-1,n-1)$ is orange and $(n-1,2)$ is blue. This holds by construction and can be verified by a direct inspection of Figure \ref{fig1}.

To address middle vertices, it suffices proving that, any vertex belonging to frame $i>1$ has two orange and two blue neighbors Let $c$ denote the basic color of frame $i$ and $\overline{c}$ be the other color. Consider a generic vertex $v$ belonging to frame $i$. By inspecting all possible positions of~$v$ within the frame as shown in Figure \ref{fig2}, it can be easily verified that the desired property~holds.

By Lemma \ref{characterization}, $\sp$ is an equilibrium.
\begin{figure}[h]
\center
\includegraphics[scale=0.6]{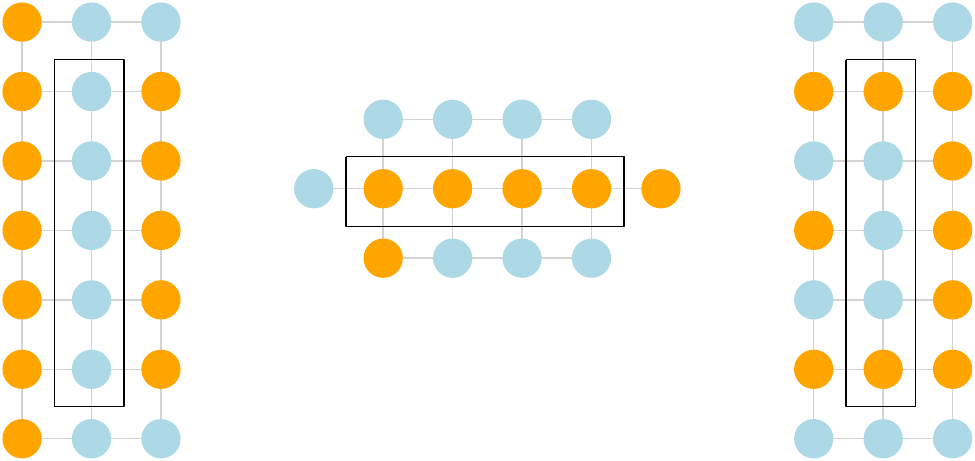}
\caption{Visualization of the neighborhood of vertices belonging to a frame $i>1$. The target vertices are the ones included in the box. On the left, vertices belonging to the left column; on the right, vertices belonging to the right column; on the center, vertices belonging to a row but not to a column.}\label{fig2}
\end{figure}
\end{proof}

\noindent We now show matching upper and lower bounds on the LPoA for local $2$-SSGs played on grids. By inspecting all the possibilities, the LPoA of local $2$-SSGs played on $2\times 2$ grids is $1$. Indeed, assuming $b\geq o$, for $o=1$, all the configurations are isomorphic to each other, while, for $o=2$, the unique (local) swap equilibrium -- up to isomorphisms -- is $\bigl[ \begin{smallmatrix}  o & b \\ o & b \end{smallmatrix} \bigr]$. 

\begin{theorem}\label{thm:4grid_local}
	The LPoA of local $2$-SSGs played on $2 \times h$ $4$-grids, with $h\geq 3$ is $3$. Furthermore, for every $\epsilon >0$, there is a value $h_0$ such that, for every $h \geq h_0$, the PoA of $2\times h$ $4$-grid is at least~$3-\epsilon$.
\end{theorem}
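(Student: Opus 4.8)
The plan is to establish the upper and lower bounds separately, exploiting the degree structure of a $2 \times h$ grid $G$: every vertex has degree $3$ except the four corners, which have degree $2$, and for every column $j$ the two cells $v_{1,j},v_{2,j}$ are adjacent via the vertical edge.

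\emph{Upper bound.} Fix a local swap equilibrium $\sp$. The heart of the argument is a column-by-column charging: for each column $j$, the two agents occupying it have total utility at least $\frac23$ if $2 \le j \le h-1$ and at least $\frac12$ if $j \in \{1,h\}$. Indeed, if the column is monochromatic, each of its two agents sees the other as a same-type neighbour, hence has utility at least $\frac{1}{\delta} \in \{\frac13,\frac12\}$; if it is bichromatic, its two agents are adjacent and of different type, and since swapping adjacent agents is a \emph{local} swap, the inequality proved inside Lemma~\ref{sum_in_eq} applies and yields a total of at least $1-\frac{1}{\min\{\delta_{\sp(i)},\delta_{\sp(j)}\}}$, that is $\frac23$ or $\frac12$. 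Summing over the $h$ columns gives $\u(\sp) \ge \frac23(h-2)+1 = \frac{2h-1}{3}$. For the optimum, note that in any profile the non-monochromatic edges form an edge cut, and a $2 \times h$ grid with $h \ge 2$ is $2$-edge-connected, so there are at least two such edges; since each non-monochromatic edge $\{u,w\}$ contributes $\frac{1}{\delta_u}+\frac{1}{\delta_w} \ge \frac23$ to the quantity $n-\u(\cdot)$, we obtain $\u(\sp^*) \le n-\frac43 = 2h-\frac43$. Dividing, the ratio on every instance is at most $\frac{2h-4/3}{(2h-1)/3} = \frac{6h-4}{2h-1} < 3$, so the LPoA of the class is at most $3$.

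\emph{Lower bound.} For every even $h \ge 4$, I would exhibit a bad local equilibrium on $G$ with $o=b=h$, namely the \emph{vertical-stripe} profile $\sp$ that colours column $j$ entirely with colour $1$ if $j$ is odd and with colour $2$ if $j$ is even. Then every non-corner agent has exactly one same-type neighbour (the vertical one) and thus utility $\frac13$, while each corner agent has utility $\frac12$, so $\u(\sp) = \frac23(h-2)+2 = \frac{2h+2}{3}$; and $\sp$ is a local equilibrium because the only adjacent pairs of differently coloured agents lie on horizontal edges, where either both endpoints have degree $3$ with $\u_i(\sp)+\u_j(\sp) = \frac13+\frac13 = 1-\frac13$ (so Lemma~\ref{sum_in_eq} forbids a profitable swap) or one endpoint is a corner and a direct check shows the agent that would move onto the corner drops to utility $0$. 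Comparing with the block profile (an initial run of whole columns of one colour, then the rest of the other colour), whose social welfare equals the bound $2h-\frac43$ derived above, gives a ratio of at least $\frac{2h-4/3}{(2h+2)/3} = \frac{6h-4}{2h+2} = 3-\frac{10}{2h+2}$, which exceeds $3-\epsilon$ once $h \ge h_0 := \lceil 5/\epsilon \rceil$. Together with the upper bound this shows the LPoA of the class equals $3$.

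\emph{Main obstacle.} The only genuinely delicate point is invoking Lemma~\ref{sum_in_eq} in a \emph{local} equilibrium: that lemma is stated for global equilibria, but its proof only uses that two specific agents cannot profitably swap, which in a local equilibrium remains valid exactly when the two agents occupy adjacent vertices --- and the column-charging argument pairs only the two adjacent cells of a single column, so we stay in the valid regime; I would state this reduction explicitly. A secondary obstacle is that the generic optimum bound $n-\frac{1}{o}-\frac{1}{b}$ is too weak here (it only gives $\u(\sp^*) < 2h$), so one really needs the $2$-edge-connectivity refinement $\u(\sp^*) \le 2h-\frac43$ to push the ratio strictly below $3$; establishing $2$-edge-connectivity of $2 \times h$ grids and the per-edge weight $\ge \frac23$ is routine but must be carried out.
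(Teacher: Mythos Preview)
Your proof is correct and, for the upper bound, takes a genuinely different route from the paper. The paper argues directly that in any local equilibrium the average utility is at least $\tfrac13$: it shows that every agent of utility $0$ can be injectively matched to an adjacent opposite-colour agent of utility at least $\tfrac23$ (using only that this adjacent pair cannot profitably swap), and since every remaining agent has utility at least $\tfrac13$ the average follows; combined with the trivial bound $\u(\sp^*)\le n$ this yields $\mathrm{LPoA}\le 3$. Your column-charging via Lemma~\ref{sum_in_eq} is more uniform---it bypasses the case analysis on zero-utility agents entirely---and, together with your $2$-edge-connectivity refinement $\u(\sp^*)\le 2h-\tfrac43$, even delivers the sharper per-$h$ estimate $\tfrac{6h-4}{2h-1}<3$. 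The cost is that you must justify applying Lemma~\ref{sum_in_eq} in a local equilibrium, which you correctly isolate: the lemma's proof only analyses the single swap between the two named agents, so it goes through verbatim for any adjacent pair in a local equilibrium, and your charging only ever pairs the two adjacent cells of a column. For the lower bound both arguments use the same vertical-stripe construction (the paper takes $h$ a multiple of $6$, you take $h$ even) and arrive at the identical ratio $3-\tfrac{5}{h+1}$.
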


\begin{proof}
For the lower bound consider the strategy profile in which $h$ is a multiple of $6$, $o=b$, odd columns are filled with orange agents, and even column are filled with blue agents (see Figure~\ref{2_times_h_4-grids_LB_3}(a) for an example on a $2\times 6$ $4$-grid). The strategy profile is a local swap equilibrium and the corresponding social welfare is equal to $\frac{1}{3}(n-4)+2=\frac{n+2}{3}$. A social optimum having social welfare of $n-\frac{4}{3}=\frac{3n-4}{3}$ is the strategy profile in which all the orange agents occupy the first $\frac{h}{2}$ columns, and the blue agents occupy the last $\frac{h}{2}$ columns (see  Figure~\ref{2_times_h_4-grids_LB_3}(b) for an example on a $2\times 6$ $4$-grid). Therefore, for every $h \geq \frac{5-\epsilon}{\epsilon}$, we have that the following formula is a lower bound to the LPoA
\[
\frac{3n-4}{n+2}=3-\frac{10}{n+2} =3-\frac{5}{h+1}\geq 3 - \epsilon.
\]
\begin{figure}[t]
	\center
		\includegraphics{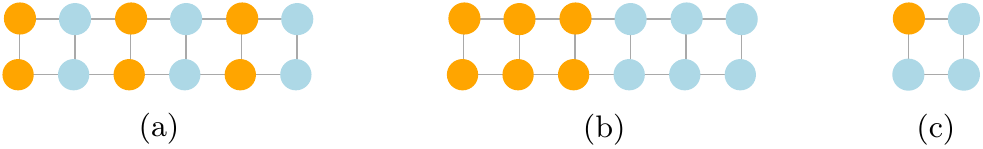}
		\caption{The local swap equilibrium with largest social welfare is shown in (a) and the social optimum is shown in (b). (c) shows the unique local swap equilibrium which contains an agent with utility $0$.}
		\label{2_times_h_4-grids_LB_3}
\end{figure}

\noindent To prove the upper bound of $3$, we show that the average utility of an agent is at least $\frac13$. We consider only the agents that have a utility of $0$ since all the others have a utility of at least $\frac13$ each. When $h$ is equal to $2$, the unique strategy profile (unless of symmetries) that is in local swap equilibrium and contains at least one agent that has $0$ utility is depicted in Figure~\ref{2_times_h_4-grids_LB_3}(c). However, it is easy to check that the average utility of an agent is $\frac12$. Therefore, we only need to prove the claim for $h\geq 3$. We prove that if $x$ is the number of agents whose utilities are equal to $0$, then there are at least $x$ agents that have a utility of at least $\frac23$ each. Indeed, let $i$ be any agent that has a utility equal to $0$. Since $\sp$ is a local swap equilibrium and $h \geq 3$, there is an agent $j$ such that 

(i) $\sigma_j \in N_{\sigma_i}$, 

(ii) the type of $i$ is different from the type of $j$, and 

(iii) $\u_j(\sp)\geq \frac23$. 

\noindent More precisely, (iii) implies that $N_v\setminus\{u\}$ contains only vertices occupied by agents of the same type of $j$. Therefore, we can uniquely assign an agent $j$ that has a utility of at least $\frac23$ to every agent $i$ that has a utility of $0$. The claim follows.
\end{proof}

\begin{theorem}\label{thm:poa:grid}
	The LPoA of local $2$-SSG played on $3 \times h$ $4$-grids, with $h\geq 3$ is $\frac{36}{13}$. Furthermore, for every $\epsilon >0$, there is a value $h_0$ such that, for every $h \geq h_0$, the PoA of $2\times h$ $4$-grid is at least~$\frac{36}{13}-\epsilon$.
\end{theorem}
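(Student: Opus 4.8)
The plan is to prove the two bounds separately, in both cases leaning on Lemma~\ref{sum_in_eq} and on the fact that a $3\times h$ grid has only three vertex degrees: $2$ (the four corners), $3$ (the remaining border vertices together with the two ends of the middle row) and $4$ (the interior middle-row vertices). For the optimum I would first record that, whenever $2\le o\le n-2$, the coloring induced by $\sp^*$ must contain a bichromatic cut and the cheapest one is the vertical cut through two degree-$3$ vertices and one degree-$4$ vertex, so $\u(\sp^*)\le n-\tfrac{11}{6}$; the degenerate cases $o\in\{1,n-1\}$ are disposed of directly (there the $\mathrm{LPoA}$ tends to $1$). Hence it suffices to show that every local swap equilibrium $\sp$ on a $3\times h$ grid, $h\ge 3$, has $\u(\sp)\ge \tfrac{13}{36}\,n$, since then $\mathrm{LPoA}\le \tfrac{n-11/6}{(13/36)n}<\tfrac{36}{13}$, with the value $\tfrac{36}{13}$ approached as $h\to\infty$.

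For the lower bound I would fix a small constant period $d$ and color the columns of the grid by a fixed pattern repeated $h/d$ times, designed so that (i) every interior middle-row agent has exactly one same-colored neighbor, hence utility $\tfrac14$; (ii) along the top and bottom rows the agents' utilities average to $\tfrac{5}{12}$ per column; and (iii) no profitable local swap exists. Property~(iii) is checked via Lemma~\ref{sum_in_eq}: for every pair of adjacent opposite-colored agents $i,j$ with $\delta_{\sp(i)}\le\delta_{\sp(j)}$ one verifies $\u_i(\sp)+\u_j(\sp)\ge 1-\tfrac{1_{ij}(\sp)}{\delta_{\sp(i)}}$ by going through the finitely many local patterns that the periodic coloring exhibits. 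The per-column social welfare is then $2\cdot\tfrac{5}{12}+\tfrac14=\tfrac{13}{12}$ up to an $O(1)$ correction from the two boundary columns, whereas the social optimum for the chosen type vector is $n-O(1)=3h-O(1)$; letting $h\to\infty$ gives the ratio $\tfrac{3h}{(13/12)h}=\tfrac{36}{13}$, so $\mathrm{LPoA}\ge\tfrac{36}{13}-\epsilon$ once $h\ge h_0(\epsilon)$.

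For the upper bound, fix a local swap equilibrium $\sp$. Agents of utility at least $\tfrac13$ need no attention, so the work is with agents of utility $0$ (any degree) or $\tfrac14$ (degree $4$). By Lemma~\ref{sum_in_eq}, every opposite-colored neighbor $j$ of such an agent $i$ has $\u_j(\sp)\ge 1-\tfrac1{\delta_{\sp(i)}}-\u_i(\sp)$, which is $\ge\tfrac12$ if $\delta_{\sp(i)}\in\{2,4\}$ and $\ge\tfrac23$ if $\delta_{\sp(i)}=3$. I would then run a discharging/grouping argument respecting the three-row structure: partition the vertices into blocks of a bounded number of consecutive columns and prove that inside each block the total utility is at least $\tfrac{13}{12}$ times the number of columns, exploiting that a block holding many low-utility agents is forced, by the degree pattern and the neighbor bounds above, to also hold enough high-utility witnesses. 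Summing over blocks yields $\u(\sp)\ge\tfrac{13}{36}\,n$.

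The main obstacle is this last case analysis: one must rule out local configurations where an abundance of utility-$0$ and utility-$\tfrac14$ agents coexists with too few high-utility neighbors, and the block decomposition must be chosen tightly so that the constant is exactly $\tfrac{13}{36}$ rather than the weaker $\tfrac{11}{36}$ coming from a plain per-vertex minimum-utility bound. This is where the rigidity of a three-row grid is essential --- in particular that a degree-$4$ equilibrium agent of utility $\le\tfrac14$ is surrounded by opposite-colored agents of utility $\ge\tfrac12$, and likewise for degree-$3$ agents. Small values of $h$ (and divisibility of $h$ by $d$ in the construction) are handled by direct inspection and do not affect the supremum.
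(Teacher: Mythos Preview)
Your plan is sound and mirrors the paper's approach closely. Both the lower-bound construction (a periodic column pattern with per-column welfare $\tfrac{13}{12}$, i.e.\ per-agent $\tfrac{13}{36}$) and the upper-bound strategy (column-wise accounting showing each interior column contributes at least $\tfrac{13}{12}$) are exactly what the paper does. The paper's upper bound is simply a concrete instantiation of your ``discharging over blocks of columns'': it first argues that at most one agent per column can have utility $0$, then distinguishes three cases for an interior column $r$ (the middle agent has utility $0$; a border agent has utility $0$; all three have positive utility). The first two cases give column utility $\ge\tfrac{13}{12}$ directly; in the third, the only problematic configuration is two border utilities of $\tfrac13$ and a middle utility of $\tfrac14$, and there the paper shows one adjacent column is monochromatic (utility $\ge\tfrac32$), so an average over at most three columns suffices. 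Your sketch anticipates precisely this kind of local averaging.

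One small slip to watch when you carry this out: in your upper-bound paragraph you write $\u_j(\sp)\ge 1-\tfrac{1}{\delta_{\sp(i)}}-\u_i(\sp)$, but Lemma~\ref{sum_in_eq} has $\min\{\delta_{\sp(i)},\delta_{\sp(j)}\}$ in the denominator, not $\delta_{\sp(i)}$. For a degree-$4$ middle agent $i$ with $\u_i(\sp)=\tfrac14$ and a degree-$3$ border neighbor $j$ this gives $\u_j(\sp)\ge\tfrac23-\tfrac14=\tfrac{5}{12}$, not $\tfrac12$. The argument still goes through (and the paper in fact reasons about the swap directly rather than via the lemma), but the constants in your case analysis must be computed with the correct denominator.
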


\begin{proof}
For the lower bound of $\frac{36}{13}-\epsilon$ consider the strategy profile in Figure~\ref{3_times_h_4-grids_LB}. The average utility of the agents that occupy any column from $3$ to $h-2$ is equal to $\frac{13}{36}$.

Now, we prove the upper bound of $\frac{36}{13}$. In the rest of the proof, by utility of the $r$-th column we mean the overall utility of the agents that occupy the vertices of the $r$-th column. We show that the utility of the first (respectively, last) column is of at least $\frac56$  and we show that the average utility of the other columns is at least $\frac{12}{36}$.

\begin{figure}[h]
	\center
	\includegraphics{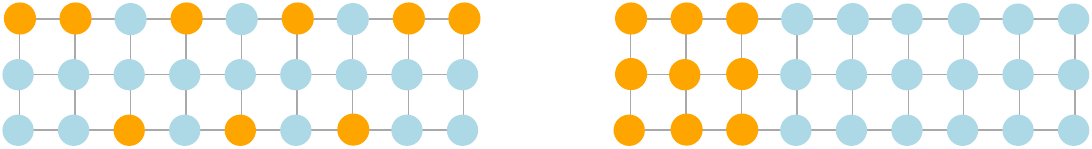}
	\caption{The strategy profile inducing an average agent's utility that can be made arbitrarily close to $\frac{13}{36}$ is shown on the left side via a small example ($3 \times 9$ $4$-grid). On the right side it is shown a strategy profile inducing an average agent's utility arbitrarily close to $1$.}
	\label{3_times_h_4-grids_LB}
\end{figure}

First of all, we observe that, among the agents that occupy the vertices of the $r$-th column, at most one can have a utility of $0$. Indeed, if without loss of generality, $v_{1,r}$ and $v_{2,r}$ are occupied by two agents of utility $0$, then by swapping the two agents, they would both have a strictly positive utility. Moreover, if the two agents having a utility of $0$ occupy the vertices $v_{1,r}$ and~$v_{3,r}$, then by swapping either of the two agents with the agent occupying the vertex $v_{2,r}$ would be an improving move.
This observation implies that the utility of the $r$-th column, with $r \in \{1,h\}$, is lower bounded by $\frac56$.

Now we show that utility of the $r$-th column, with $2 \leq r \leq h-1$, is of at least $\frac{13}{12}$. We divide the proof into cases.

In the first case, we assume that the middle agent has a utility of $0$. In this case both border agents of the column would have a utility of at least $\frac23$ and therefore, the utility of the $r$-th column would be of at least $\frac43 \geq \frac{13}{12}$. 

In the second case, we assume that a border agent has a utility of $0$. This implies that the middle agent has a utility of $\frac34$ and the other border agent a utility of at least $\frac13$. Therefore, the utility of the $r$-th column is at least $\frac{13}{12}$.

In the last case, we assume that all agents that occupy the vertices of the $r$-th column have a strictly positive utility. We observe that the only interesting case to look at, is when the border agents both have a utility of $\frac13$ and the middle agent has a utility of $\frac14$, as in all the other cases, the utility of the $r$-th column would be of at least $\frac{13}{12}$. In this case, the utility of at least one between column $r-1$ and column $r+1$ must be of at least $\frac32$. Indeed, if without loss of generality, $v_{1,r}$ and $v_{2,r}$ are occupied by orange agents, while $v_{3,r}$ is occupied by a blue agent, then, due to the agents' utilities, $v_{1,r-1}$, $v_{2,r-2}$, $v_{1,r+1}$, $v_{2,r+2}$ are occupied by blue agents, while one between $v_{3,r-1}$ and $v_{3,r+1}$ must be occupied by a blue agent as well. In either case, one column between column $r-1$ and column $r+1$ is entirely occupied by blue agents.
As a consequence, for every two columns each of utility equal to $\frac23+\frac14=\frac{11}{12}$, there must be a column of utility of at least~$\frac32$. By averaging among the three considered columns, we obtain $\frac{1}{3}(\frac{11}{6}+\frac{3}{2})=\frac{10}{9}>\frac{13}{12}$. This completes the proof. 
\end{proof}

\begin{theorem}\label{thm:poa_8grid_}
	For every $\epsilon > 0$, the LPoA of local $2$-SSG played on $l \times h$ $4$-grids, with $\ell,h \geq 8+\frac{20}{\epsilon}$ is in the interval $\left(\frac{5}{2}-\epsilon,\frac{5}{2}+\epsilon\right]$.
\end{theorem}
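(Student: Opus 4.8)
The plan is to bound the LPoA from above and from below separately, in both cases exploiting that an $\ell\times h$ $4$-grid is almost $4$-regular: all but $4\ell+4h-16=O(\ell+h)$ of its $n=\ell h$ vertices are middle vertices of degree $4$, and under the hypothesis $\ell,h\ge 8+\frac{20}{\epsilon}$ this exceptional set is a vanishing fraction of $V$.

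For the upper bound I would invoke Lemma~\ref{lm:average_utility_Delta_window} with $\Delta=4$ (so $\alpha=2$, $\beta=0$ and $\rho=\frac{\alpha}{2\alpha+1}=\frac25$), taking $X=\{v_{i,j}:3\le i\le\ell-2,\ 3\le j\le h-2\}$, the set of vertices at graph distance at least $2$ from the frame. Every vertex of $N_X$ is then a middle vertex of degree $4$, so the hypothesis of the lemma is met and, for any local swap equilibrium $\sp$, the agents occupying $X\cup Z$ (with $Z\subseteq N_X$ as in the lemma) have average utility at least $\frac25$; since every other agent has non-negative utility, $\u(\sp)\ge\frac25|X|=\frac25(\ell-4)(h-4)$. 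Combined with the trivial bound $\u(\sp^*)\le n=\ell h$ this gives $\mathrm{LPoA}\le\frac52\cdot\frac{1}{(1-4/\ell)(1-4/h)}$. The hypothesis $\ell,h\ge 8+\frac{20}{\epsilon}=\frac{4(2\epsilon+5)}{\epsilon}$ yields $1-\frac4\ell,\,1-\frac4h\ge\frac{\epsilon+5}{2\epsilon+5}$, so it remains to check $\frac52\big(\frac{2\epsilon+5}{\epsilon+5}\big)^2\le\frac52+\epsilon$, which after clearing denominators reduces to $5(2\epsilon+5)\le(\epsilon+5)^2$, that is, $0\le\epsilon^2$.

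For the lower bound I would construct a single local swap equilibrium $\sp$ with $\u(\sp)\le\frac25 n+O(\ell+h)$. Its core is the periodic diagonal coloring that makes $v_{i,j}$ orange exactly when $i+2j\bmod 5\in\{0,1\}$ and blue otherwise; a direct case check over the residues of $i+2j$ shows that every middle orange agent has exactly one orange neighbor (utility $\frac14$) and every middle blue agent has exactly two orange neighbors (utility $\frac12$), hence every pair of adjacent middle agents of different colors has utility sum exactly $\frac34=1-\frac14$, so by Lemma~\ref{sum_in_eq} no profitable local swap occurs among middle agents. This coloring is not stable along the frame, so I would overwrite it on a constant-width strip around the frame by an explicit locally stable pattern, touching only $O(\ell+h)$ agents, to obtain a local swap equilibrium $\sp$ realising some type vector $\mathbf t$ with $o=\frac25 n+O(\ell+h)$. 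In $\sp$ the $(\ell-O(1))(h-O(1))$ middle agents contribute average utility $\frac25$ and the $O(\ell+h)$ boundary agents contribute at most $O(\ell+h)$ in total, so $\u(\sp)\le\frac25 n+O(\ell+h)$; for the same $\mathbf t$ a near-optimal profile splits $V$ into two monochromatic sub-rectangles, giving $\u(\sp^*)\ge n-O(\ell+h)$. Hence $\mathrm{LPoA}\ge\frac{n-O(\ell+h)}{\frac25 n+O(\ell+h)}$, which exceeds $\frac52-\epsilon$ once $\ell,h\ge 8+\frac{20}{\epsilon}$.

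I expect the main obstacle to be the lower-bound construction, and specifically the boundary handling: the diagonal coloring is only stable in the interior (a border orange vertex can end up with an all-blue neighborhood, creating a profitable local swap), so one must design the boundary strip and its interface with the interior so that no profitable local swap is created there while still realising the prescribed colour counts. Everything else — the upper bound via Lemma~\ref{lm:average_utility_Delta_window}, the utility bookkeeping over middle versus boundary vertices, and the elementary arithmetic that pins the threshold to $8+\frac{20}{\epsilon}$ — is routine.
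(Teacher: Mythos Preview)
Your upper bound is exactly the paper's: same choice of $X$ (middle vertices at distance $\ge 2$ from the frame), same invocation of Lemma~\ref{lm:average_utility_Delta_window} with $\Delta=4$, same inequality $\u(\sp)\ge\frac25(\ell-4)(h-4)$, and the same arithmetic to turn the size condition into $\mathrm{LPoA}\le\frac52+\epsilon$.

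For the lower bound your route differs in presentation but not in substance. The paper builds the instance from an explicit $5$-periodic tiling (tiles $T$, with special corner tiles $T_c,T_c'$ supplied by a figure) on grids of dimensions $\ell=5\ell'+1$, $h=5h'$; the interior tiles give orange density $\tfrac25$ and average utility $\tfrac25$, and the corner tiles absorb the boundary. Your diagonal colouring $v_{i,j}$ orange iff $i+2j\bmod 5\in\{0,1\}$ produces exactly the same interior profile (orange utility $\tfrac14$, blue utility $\tfrac12$, average $\tfrac25$), and your use of Lemma~\ref{sum_in_eq} to certify stability of adjacent middle pairs via $\tfrac14+\tfrac12=1-\tfrac14$ is a clean substitute for the paper's picture-based verification. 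The remaining work --- patching a constant-width strip so that no profitable local swap is created at the boundary --- is precisely what the paper's special tiles do, and you correctly flag this as the only non-routine step; the paper discharges it by exhibiting the tiles rather than by argument, so you should expect to do the same.
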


\begin{proof}
Let $X$ be the set of middle vertices that are adjacent neither to border nor to corner vertices. Clearly, $N_X=\bigcup_{v \in X} N_v$ is the set of all the middle vertices. Therefore, the degree of each vertex $v \in N_X$ is equal to $4$. Let $Z \subseteq N_X$ be the set of vertices occupied by agents that have a utility strictly greater than $\frac25$. From Lemma \ref{lm:average_utility_Delta_window}, we have that the average utility of the agents in $X \cup Z$ is at least $\frac25$. 
 
As a consequence, the social welfare is lower bounded by
$\frac{2}{5}|X\cup Z| \geq \frac{2}{5}(l-4)(h-4)>\frac{2}{5}l h-\frac{8}{5}(l+h)$. Therefore, the LPoA can be upper bounded by
\[
\frac{l h}{\frac{2}{5}l h-\frac{8}{5}(l +h)} =\frac{1}{\frac{2}{5}-\frac{8}{5}\frac{l+h}{l h}}\leq \frac{1}{\frac{2}{5}-\frac{8}{5}\frac{2(8+20/\epsilon)}{(8+20/\epsilon)^2}}=\frac{5}{2}+\epsilon.
\]

\noindent For the lower bound, consider the $l \times h$ grid, with $l = 5 l'+1$ and $h=5h'$, that is filled as shown in Figure~\ref{l_times_h_4-grids_LB_5_divided_2}. The social welfare for arbitrarily large values of $l'$ and $h'$ (i.e., $l$ and $h$) can be made arbitrarily close to the average utility of the agents that occupy the vertices of the tiles labeled with $T$. Observe that $\frac25$ is the average utility of the agents that occupy all the vertices of any tile labeled with $T$. As the ratio between blue and orange agents can be made arbitrarily close to $\frac32$, the maximum average utility of an agent is arbitrarily close to $1$ by placing the orange agents over the vertices of the first $\frac{2}{5} h$ columns and the blue agents in the remaining $\frac{3}{5} h$ columns. Therefore, the LPoA is lower bounded by $\frac{5}{2}-\epsilon$.
\begin{figure}[t]
	\center
		\includegraphics{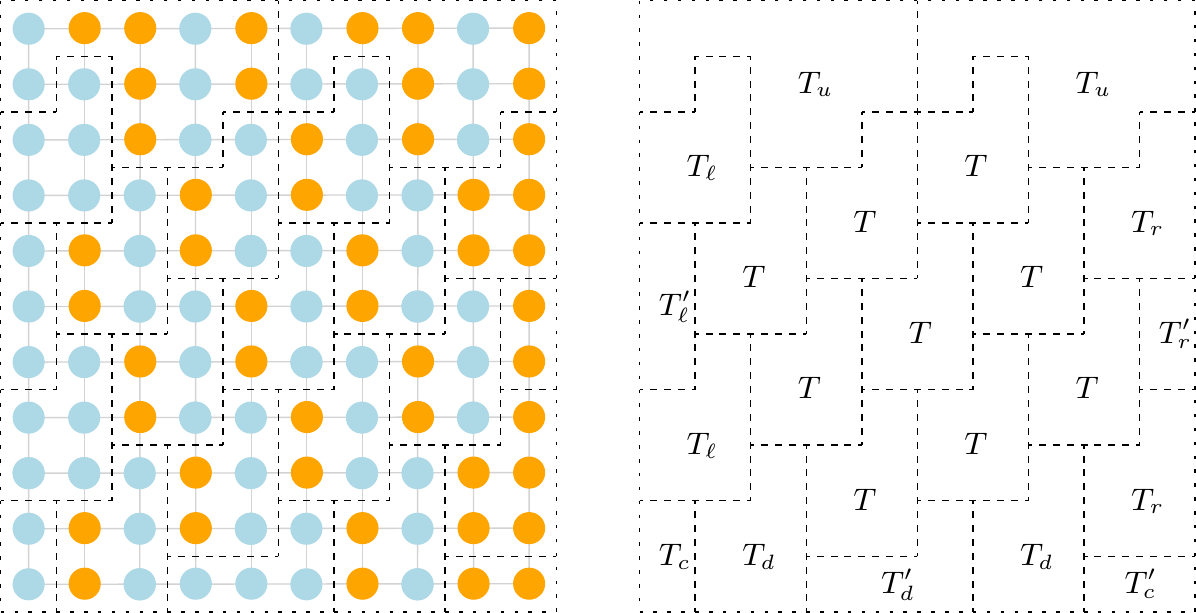}
		\caption{The strategy profile inducing an average agent's utility arbitrarily close to $\frac25$ is shown on the left side via a small example over an $11\times 10$ $4$-grid. On the right side, the tiling showing the pattern we have used for building the instance. The tiles $T_c$ and $T_c'$ are only used in order to fill the bottom-left and bottom-right corners of the $4$-grid. Observe that using exactly the same tiles, one can build arbitrarily large instances. Moreover, for arbitrarily large instances, the average utility of an agent is basically determined by the average utility of the agents that occupy the vertices of any tile $T$, i.e., $\frac25$.}
		\label{l_times_h_4-grids_LB_5_divided_2}
\end{figure}
\end{proof}

\noindent We now turn our focus to the $8$-grid and first consider the case where one type has only one agent.

\begin{theorem}
	The PoA of $2$-SSGs played on an $8$-grid in which one type has cardinality $1$ is equal to $\frac{897}{704}$. 
	\label{thm:8grid_poa} 
\end{theorem}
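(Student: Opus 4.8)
The plan is to adapt the argument behind Theorem~\ref{if1}. First I would note that, since one type -- say orange -- has a single representative, that agent always sees an all-blue neighbourhood and hence has utility $0$ in \emph{every} strategy profile; therefore it can never strictly gain from a swap, and since a profitable swap requires the two agents to have different colours, every feasible profile is a swap equilibrium. So the $\mathrm{PoA}$ is obtained purely by comparing, within each grid, the best and the worst profile.

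The second step is to write the social welfare in closed form. If $\sp$ places the orange agent on a vertex $o$, then that agent contributes $0$, every blue agent not adjacent to $o$ contributes $1$, and a blue agent on a vertex $w\in N_o$ has exactly one orange neighbour and so contributes $\frac{\delta_w-1}{\delta_w}$. Hence $\u(\sp)=n-1-L(o)$, where $L(o):=\sum_{w\in N_o}\frac{1}{\delta_w}$ is the total ``penalty'' of the neighbours of $o$. For a fixed $l\times h$ $8$-grid $G$ (with $l,h\ge 2$, the only nontrivial case), the optimum puts orange where $L$ is minimum and the worst equilibrium puts it where $L$ is maximum, so the ratio equals $\frac{n-1-L_{\min}(G)}{n-1-L_{\max}(G)}$ and $\mathrm{PoA}=\sup_G$ of this quantity.

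The third step is to pinpoint the extremal instance: the $3\times3$ $8$-grid, whose degrees are $3$ (corners), $5$ (edge midpoints) and $8$ (the centre). Placing orange at a corner gives $L=\frac15+\frac15+\frac18=\frac{21}{40}$, which is the minimum; placing it at the centre gives $L=4\cdot\frac13+4\cdot\frac15=\frac{32}{15}$, which is the maximum -- crucially, the centre's neighbourhood is exactly the four corners and four edge midpoints, i.e.\ all the lowest-degree vertices of the grid -- while an edge midpoint gives $\frac{143}{120}$, in between. This yields $\frac{8-21/40}{8-32/15}=\frac{897}{704}$.

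Finally, for the matching upper bound I would argue that no other grid does better. Using only the crude bound $L(v)\le\frac83$ (every vertex of such a grid has degree between $3$ and $8$), the ratio is at most $\frac{n-1}{\,n-11/3\,}$, which decreases in $n$ and drops strictly below $\frac{897}{704}$ for all $n\ge14$; this disposes of all grids except the finitely many with $4\le n\le13$ and $l,h\ge2$, namely the $2\times2$, $2\times3$, $2\times4$, $2\times5$, $2\times6$, $3\times3$ and $3\times4$ grids, which are checked one by one. The main obstacle is precisely this finite case analysis: one must be careful that interior vertices have degree $8$ only when both dimensions are at least $3$, and that for each small grid the vertex maximizing $L$ (typically one adjacent to several corners) is correctly identified, so that it really is only the $3\times3$ grid that attains $\frac{897}{704}$.
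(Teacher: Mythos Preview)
Your proposal is correct and follows essentially the same line as the paper's proof: both observe that every profile is an equilibrium, express the social welfare as $n-1$ minus a ``penalty'' term, and identify the $3\times 3$ grid with the orange agent at the centre versus at a corner as the extremal instance. Your version is in fact more explicit than the paper's, which simply asserts that the loss is minimized at a corner and maximized at the centre of the $3\times 3$ grid without justifying why no other grid gives a larger ratio; your crude bound $L\le 8/3$ together with the finite check of grids with $n\le 13$ fills exactly this gap.
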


\begin{proof}
	Assume without loss of generality that orange is the type with a unique representative. For this game, any strategy profile $\sp$ is an equilibrium, since in any profile $\sp$ the orange vertex~$o$ gets utility zero, the vertices not adjacent get utility $1$, while all vertices adjacent to $o$ get strictly less than $1$. Call these last vertices the \textit{penalized vertices}. Thus, the PoA is maximized by comparing the social welfare of the strategy profile minimizing the overall loss of the penalized vertices with the once of the strategy profile maximizing it. The overall loss of the penalized vertices is minimized when $o$ is a corner vertex, while it is maximized when $o$ is a middle one on an $8$-grid with $l = h = 3$. Comparing the two social welfares gives the claimed bound.	
\end{proof}

\noindent Similar to the $4$-grid, if there are at least two agents of each type no agent gets zero utility in an equilibrium.

\begin{lemma}
	In any equilibrium for a $2$-SSG played on an $8$-grid in which both types have cardinality larger than $1$ all agents get positive utility.
	\label{lemma:grid8_pos}
\end{lemma}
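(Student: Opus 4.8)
The plan is to adapt the argument of Lemma~\ref{no-zero} to the larger vertex degrees of $8$-grids, where corner, border and middle vertices have degrees $3$, $5$ and $8$, respectively, and to isolate one genuinely new corner configuration. Suppose towards a contradiction that $\sp$ is an equilibrium with $\u_i(\sp)=0$ for some agent $i$, and assume without loss of generality that $i$ is orange; then every vertex of $N_{\sp(i)}$ is occupied by a blue agent. Since both types have at least two agents and $G$ is connected, I would first observe that there is an orange agent $j\neq i$ occupying a vertex adjacent to a vertex $\ell$ occupied by a blue agent: otherwise every neighbour of an orange vertex distinct from $\sp(i)$ would itself be orange and distinct from $\sp(i)$ (it cannot be $\sp(i)$, whose neighbours are all blue), so the orange vertices other than $\sp(i)$ would form a union of connected components of $G$, contradicting connectedness together with the presence of at least two orange agents.

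Next I would split on whether such a blue vertex $\ell$ can be chosen outside $N_{\sp(i)}$. If it can, swapping $i$ with the agent on $\ell$ is profitable: $i$ lands next to $\sp(j)$ and hence gains utility, while the agent from $\ell$ lands on $\sp(i)$, whose neighbourhood is still entirely blue, so its utility becomes $1$, strictly more than before (it was adjacent to the orange vertex $\sp(j)$ and so had utility $<1$); this contradicts equilibrium. Hence every such $\ell$ lies in $N_{\sp(i)}$ and is therefore adjacent both to $\sp(i)$ and to $\sp(j)$, two orange vertices, so $\u_{\sp^{-1}(\ell)}(\sp)\leq\frac{\delta_\ell-2}{\delta_\ell}$. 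Swapping $i$ with the agent on $\ell$ now places that agent on $\sp(i)$ with exactly one orange neighbour, giving utility $\frac{\delta_{\sp(i)}-1}{\delta_{\sp(i)}}$, while $i$ again gains utility; this swap is profitable whenever $2\delta_{\sp(i)}>\delta_\ell$, which holds for every $\ell$ as soon as $\delta_{\sp(i)}\geq 5$ since $\delta_\ell\leq 8$. Because the two non-diagonal neighbours of a corner vertex have degree at most $5$, the only case not yet closed is: $\sp(i)$ is a corner (say $\sp(i)=v_{1,1}$), its diagonal neighbour $v_{2,2}$ has degree $8$ (which forces $l,h\geq 3$) and is the unique blue vertex adjacent to an orange agent other than $i$, while $v_{1,2}$ and $v_{2,1}$ have no orange neighbour besides $v_{1,1}$.

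Ruling out this last configuration is the main obstacle, and I would do it by propagating the colour constraints around $v_{2,2}$. Each of $v_{1,3},v_{2,3},v_{3,1},v_{3,2}$ is adjacent to $v_{1,2}$ or to $v_{2,1}$, so none of them can be orange (otherwise it would be an orange vertex distinct from $v_{1,1}$ adjacent to a blue vertex different from $v_{2,2}$, contradicting the previous step). Consequently the only neighbour of $v_{2,2}$ that can be orange besides $v_{1,1}$ is $v_{3,3}$, so $v_{3,3}$ must be the orange vertex $j$; but $v_{3,3}$ is adjacent to $v_{2,3}$, which is blue and different from $v_{2,2}$ — a contradiction. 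This exhausts all cases, so no agent can have utility $0$. The degenerate grids are subsumed along the way: when $\min\{l,h\}\leq 2$ every vertex has degree at most $5$, so the case $\delta_{\sp(i)}\leq 5$ already applies, and a $2\times 2$ $8$-grid is $K_4$, where with two agents of each type every agent trivially has utility $\frac13$.
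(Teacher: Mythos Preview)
Your argument is correct and follows essentially the same strategy as the paper: assume some orange agent $i$ has utility $0$, pick an orange $j\neq i$ with a blue neighbour $\ell$, and exhibit a profitable swap. The only noticeable difference is in the residual corner sub-case where $\sp(i)=v_{1,1}$ and the sole admissible $\ell$ is the degree-$8$ vertex $v_{2,2}$. The paper handles this by a further split on $\u_{\sp^{-1}(v_{2,2})}(\sp)$: if it is at most $\tfrac58$ the swap with $v_{2,2}$ is already profitable, and if it equals $\tfrac68$ one locates a blue vertex $r'\in N_{v_{2,2}}\setminus N_{\sp(i)}$ adjacent to $j$ and swaps $i$ with the agent on $r'$. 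You instead propagate colour constraints around $v_{2,2}$ to force $j=v_{3,3}$ and then observe that $v_{2,3}$ is a blue vertex adjacent to $j$ but outside $N_{\sp(i)}$, contradicting the assumed uniqueness of $v_{2,2}$. This is really the same endgame—your $v_{2,3}$ plays the role of the paper's $r'$—but your organisation avoids the sub-split on $\u_{\sp^{-1}(v_{2,2})}$ and makes explicit why no other placement of $j$ is possible, which is arguably tidier.
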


\begin{proof}
	Fix an equilibrium $\sp$ for a game satisfying for a $2$-SSG played on an $8$-grid. Let $i$ be a vertex such that  $\u_i(\sp)=0$ and assume without loss of generality that $i$ is orange. This implies that $i$ is surrounded by blue vertices only.
	
	Pick another orange vertex $j \neq i$ which is adjacent to at least one blue agent $r$. If $r \notin N_i(G)$, it follows that $i$ and $r$ can perform a profitable swap contradicting the assumption that $\sp$ is an equilibrium. Thus, $r$ has to belong to $N_i(G)$. We consider two different cases.
	
	If $i$ is placed on a corner vertex, $r$ needs to be either located at a border or a middle one. First assume $r$ occupies a border vertex, and since $r$ is adjacent to $i$ and $j$, it holds that $\u_r(\sp) \leq \frac35$. Thus, as we have $\u_{r}(\sp_{ir})= \frac23$ and $\u_i(\sp_{ir})>0$, $i$ and $r$ can perform a profitable swap contradicting the assumption that $\sp$ is an equilibrium.
	If $r$ occupies a middle vertex, it holds that $\u_r(\sp) \leq \frac68$. So, if $\u_r(\sp) \leq \frac58$, $i$ and $r$ can perform a profitable swap since $\u_{r}(\sp_{ir})= \frac23$ and $\u_i(\sp_{ir})>0$. If $\u_r(\sp) = \frac68$ than $l$ has a blue neighbor $r^\prime$ who is not adjacent to $i$ but to $j$, hence $\u_{r^\prime}(\sp) \leq \frac78$ and therefor $i$ and $r^\prime$ can perform a profitable swap since $\u_{r^\prime}(\sp_{ir^\prime})= 1$ and $\u_i(\sp_{ir^\prime})>0$.
	
	If $i$ is not placed on a corner vertex $\u_{r}(\sp_{ir}) = \frac45$. Since $\u_r(\sp) \leq \frac68$ and $\u_i(\sp_{ir})>0$, swapping $i$ {\tiny {\tiny }}and $l$ is profitable, contradicting the assumption that $\sp$ is an equilibrium.
	
\end{proof}

\noindent When no agent gets utility zero, the minimum possible utility is $\frac{1}{8}$. Thus, Lemma~\ref{lemma:grid8_pos} implies an upper bound on the PoA.

\begin{theorem}
	The PoA of $2$-SSGs played on an $8$-grid is at most $8$.
	\label{thm:poa_8grid}
\end{theorem}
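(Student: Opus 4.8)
The plan is to split into two cases according to whether the minority type consists of a single agent. If $o=1$, then Theorem~\ref{thm:8grid_poa} already pins down the PoA as $\frac{897}{704}$, which is well below $8$, so nothing further is needed in that case. Hence the only interesting case is $o\ge 2$, which (since we assume $o\le b$) also forces $b\ge 2$; thus both types have cardinality larger than $1$ and Lemma~\ref{lemma:grid8_pos} applies.

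The key observation is that on an $8$-grid every vertex has degree at most $8$, so any agent with strictly positive utility in fact has utility at least $\frac18$. By Lemma~\ref{lemma:grid8_pos}, in every equilibrium $\sp$ of a game with $o,b\ge 2$ no agent has utility $0$; therefore $\u_i(\sp)\ge\frac18$ for every agent $i$, and summing gives $\u(\sp)\ge\frac n8$.

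For the numerator I would simply use the trivial bound that the social welfare of any profile is at most $n$, since each agent's utility is at most $1$. (One could sharpen this to $n-\frac1o-\frac1b$ exactly as in the proof of Theorem~\ref{thm:poa_2ssg}, but that is not needed for a bound of $8$.) Combining the two estimates, the ratio of the optimal social welfare to that of the worst equilibrium is at most $\frac{n}{n/8}=8$, and together with the $o=1$ case this establishes that the PoA is at most $8$.

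There is essentially no real obstacle here: the whole argument rests on Lemma~\ref{lemma:grid8_pos}, which is already proved, plus the elementary fact that a positive utility on a graph of maximum degree $8$ is at least $\frac18$. The only subtlety is to remember the case distinction on $o$, so that the degenerate single-agent situation — where a utility of $0$ is unavoidable and the lemma fails — is dispatched separately via Theorem~\ref{thm:8grid_poa}.
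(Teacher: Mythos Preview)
Your proposal is correct and follows essentially the same approach as the paper: invoke Lemma~\ref{lemma:grid8_pos} to guarantee positive utility for every agent, observe that positive utility on a graph of maximum degree $8$ means utility at least $\tfrac18$, and bound the optimum trivially by $n$. Your treatment is in fact slightly more careful than the paper's, since you explicitly split off the $o=1$ case (where Lemma~\ref{lemma:grid8_pos} does not apply) and dispatch it via Theorem~\ref{thm:8grid_poa}, whereas the paper's one-line proof leaves this case distinction implicit.
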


\begin{proof}
	The statement follows directly by Lemma~\ref{lemma:grid8_pos}. Every agent gets at least a utility equals $\frac{1}{8}$ and at most a utility of $1$.
\end{proof}

\noindent We conclude by proving a much better bound for the (L)PoA, if the $8$-grid is large enough.

\begin{theorem}
For every $\epsilon > 0$, the LPoA of local $2$-SSGs played on an $l \times h$ $8$-grid, with $l,h \geq 8+\frac{18}{\epsilon}$ is at most $\frac{9}{4}+\epsilon$. 
	\label{poa_8grid}
\end{theorem}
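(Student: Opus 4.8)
The plan is to mirror the argument used for large $4$-grids in Theorem~\ref{thm:poa_8grid_} and reduce everything to Lemma~\ref{lm:average_utility_Delta_window}. The key observation is that in an $8$-grid every middle vertex has degree exactly $8=2\alpha+\beta$ with $\alpha=4$ and $\beta=0$, so the relevant threshold is $\rho=\frac{\alpha}{2\alpha+1}=\frac49$.

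First I would fix a local swap equilibrium $\sp$ and let $X$ be the set of middle vertices all of whose neighbors are themselves middle vertices; concretely $X=\{v_{i,j} : 3\le i\le l-2,\ 3\le j\le h-2\}$, so that $|X|=(l-4)(h-4)$ (this set is nonempty since $l,h>8$). By construction $N_X=\bigcup_{v\in X}N_v$ is precisely the set of all middle vertices, and every vertex of $N_X$ has degree $8$, so the hypotheses of Lemma~\ref{lm:average_utility_Delta_window} are met with $\Delta=8$. Letting $Z\subseteq N_X$ be the set of vertices occupied by agents of utility strictly greater than $\frac49$, the lemma yields that the agents occupying $X\cup Z$ have average utility at least $\frac49$, hence the social welfare of $\sp$ is at least $\frac49|X\cup Z|\ge\frac49|X|=\frac49(l-4)(h-4)$.

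Next I would bound the social optimum trivially by $n=lh$ and combine the two estimates, using $(l-4)(h-4)=lh-4(l+h)+16>lh-4(l+h)$, to obtain
\[
\text{LPoA}\le\frac{lh}{\frac49\bigl(lh-4(l+h)\bigr)}=\frac{1}{\frac49-\frac{16}{9}\cdot\frac{l+h}{lh}}.
\]
Since $l,h\ge 8+\frac{18}{\epsilon}$ we have $\frac{l+h}{lh}=\frac1l+\frac1h\le\frac{2}{8+18/\epsilon}$, and substituting this in and simplifying gives exactly
\[
\text{LPoA}\le\frac{1}{\frac49-\frac{16}{9}\cdot\frac{2}{8+18/\epsilon}}=\frac{9(8+18/\epsilon)}{72/\epsilon}=\frac{9\epsilon(8+18/\epsilon)}{72}=\frac{8\epsilon+18}{8}=\frac94+\epsilon,
\]
which is the claimed bound.

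There is essentially no conceptual obstacle here, since Lemma~\ref{lm:average_utility_Delta_window} does all the work; the proof is a direct analogue of the $4$-grid case. The one point requiring care is the choice of $X$: it must be the ``inner'' block of middle vertices (those at least two steps from the grid boundary), so that every vertex of $N_X$ genuinely has degree $8$, which is exactly what the lemma requires — taking $X$ to be all middle vertices would fail because their neighbors can be border vertices of degree $5$. The only remaining work is the elementary algebra above, tuned to match the prescribed threshold $8+\frac{18}{\epsilon}$ with the target constant $\frac94+\epsilon$.
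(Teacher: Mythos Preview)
Your proposal is correct and follows essentially the same approach as the paper's own proof: both choose $X$ to be the inner block of middle vertices (distance at least two from the boundary), apply Lemma~\ref{lm:average_utility_Delta_window} with $\Delta=8$ to get average utility at least $\tfrac49$ on $X\cup Z$, bound the optimum trivially by $lh$, and perform the same arithmetic to reach $\tfrac94+\epsilon$. Your additional remark explaining why $X$ cannot be taken to be all middle vertices is a nice clarification that the paper leaves implicit.
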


\begin{proof}
	Let $X$ be the set of middle vertices that are adjacent neither to border nor to corner vertices. Clearly, $N_X=\bigcup_{v \in X} N_v$ is the set of all the middle vertices. Therefore, the degree of each vertex $v \in N_X$ is equal to $8$. Let $Z \subseteq N_X$ be the set of vertices occupied by agents that have a utility strictly greater than $\frac{4}{9}$. From Lemma \ref{lm:average_utility_Delta_window}, we have that the average utility of the agents in $X \cup Z$ is at least $\frac{4}{9}$. 
	As a consequence, the social welfare is lower bounded by
	$\frac{4}{9}|X\cup Z| \geq \frac{4}{9}(l-4)(h-4)>\frac{4}{9}l h-\frac{16}{9}(l + h)$. Therefore, the LPoA is at most
	$$
	\frac{l h}{\frac{4}{9}l h-\frac{16}{9}(l +h)} =\frac{1}{\frac{4}{9}-\frac{16}{9}\frac{l+h}{l h}}\leq \frac{1}{\frac{4}{9}-\frac{16}{9}\frac{2(8+18/\epsilon)}{(8+18/\epsilon)^2}}=\frac{9}{4}+\epsilon.
	$$
\end{proof}

\section{Conclusion and Open Problems}
We have shed light on the influence of the underlying graph topology on the existence of equilibria, the game dynamics and the Price of Anarchy in Swap Schelling Games on graphs. Moreover, we have studied the impact of restricting agents to local swaps. We present tight or almost tight bounds for a variety of graph classes. 

Clearly, improving on the non-tight bounds is an interesting challenge for future work. 
Regarding the local Swap Schelling Game, we leave some interesting problems open. Among them is the question whether local swap equilibria are guaranteed to exist for all graph classes and if the local $k$-SSG always has the finite improvement property. So far, we are not aware of any counter-examples for both questions and extensive agent-based simulations indicate that both equilibrium existence and guaranteed convergence of improving response dynamics may hold. 
Another interesting line of study is to analyze the Jump Schelling Game with respect to varying underlying graphs and locality.

\bibliographystyle{abbrv}
\bibliography{schelling_games}

\end{document}